\newtheorem{theorem}{Theorem}
\newtheorem{definition}{Definition}
\newtheorem{proposition}{Proposition}
\newtheorem{lemma}{Lemma}
\newtheorem{corollary}[lemma]{Corollary}
\newtheorem{remark}{Remark}
\DeclareSymbolFont{cyrletters}{OT2}{wncyr}{m}{n}
\DeclareMathSymbol{\Sha}{\mathalpha}{cyrletters}{"58}
\DeclareMathAlphabet{\mathscr}{OT1}{pzc}{m}{it}
\def\BibTeX{{\rm B\kern-.05em{\sc i\kern-.025em b}\kern-.08em
 T\kern-.1667em\lower.7ex\hbox{E}\kern-.125emX}}
\begin{document}

\title{Solving Infinite-Dimensional Harmonic Lyapunov and Riccati equations}
\author{Pierre Riedinger and Jamal Daafouz 
\thanks{Pierre Riedinger and Jamal Daafouz are with Universit\'e de Lorraine, CNRS-CRAN UMR 7039, 2, avenue de la for\^et de Haye, 54516 Vandoeuvre-l\`es-Nancy Cedex, France.}
\thanks{This work is supported by HANDY project ANR-18-CE40-0010-02.} 
}
\maketitle
\begin{abstract} 
In this paper, we address the problem of solving infinite-dimensional harmonic algebraic Lyapunov and Riccati equations up to an arbitrary small error. This question is of major practical importance for analysis and stabilization of periodic systems including tracking of periodic trajectories. We first give a closed form of {a} Floquet factorization in the general setting of $L^2$ matrix functions and study the spectral properties of infinite-dimensional harmonic matrices and their truncated version. This spectral study allows us to propose a generic and numerically efficient algorithm to solve infinite-dimensional harmonic algebraic Lyapunov equations up to an arbitrary small error. We combine this algorithm with the Kleinman algorithm to solve infinite-dimensional harmonic Riccati equations and we apply the proposed results to the design of a harmonic LQ control with periodic trajectory tracking. 
\end{abstract}
\begin{IEEEkeywords}
Harmonic modeling and control, harmonic Lyapunov equations, harmonic Riccati equations, Sliding Fourier decomposition, Floquet factorization, Dynamic phasors, Periodic systems.
\end{IEEEkeywords}
\section{Introduction}
 Harmonic modelling and control is a topic of theoretical and practical interest for many application domains such as energy management (including AC-DC and AC-AC power converters) or embedded systems to mention few. In a recent paper \cite{Blin}, a complete and rigorous mathematical framework for harmonic modelling and control has been proposed. Basically, the harmonic modelling of a periodic system leads to an equivalent time invariant model but of infinite dimension. The states of this model (also called phasors) are the coefficients obtained using a sliding Fourier decomposition. One of the main results of \cite{Blin} establishes a strict equivalence between these two models and provides tools that allow to reconstruct time trajectories from harmonic ones. In this framework, the analysis and harmonic control design are considerably simplified as any available method for time-invariant systems can be a priori applied. For example, algebraic Lyapunov and Riccati equations \cite{Zhou2008,Blin} can be used to design a periodic state feedback for linear time periodic {(LTP)} systems. 
 
The main difficulty in applying available time invariant techniques is related to the infinite dimension nature of the obtained harmonic time invariant model. The associated infinite-dimensional harmonic state matrix is formed by the sum of a (block) Toeplitz matrix and a diagonal matrix. There is a huge literature concerning the study of infinite-dimensional Toeplitz matrices \cite{Beam,Bini,Iavernaro,Reichel92,Robol,gut,schmidt,Gohberg}. However, only few results concern harmonic state space matrices \cite{Zhou2004,Zhou,Kabamba,Farkas,Wereley_1990,Bolzern}. Most of the results dedicated to harmonic dynamical systems are based on {a} Floquet Factorization \cite{Floquet,Sinha,Wereley_1990,Montagnier, Zhou2004}. Floquet Factorization is an existence result and, as such, it is not constructive  \cite{Farkas,Floquet}. This is why  Floquet Factorization based methods are mainly dedicated to analysis and it is very difficult to extend them to control design. { As a consequence, solving infinite-dimensional harmonic algebraic Lyapunov and Riccati equations is a very challenging problem \cite{Zhou,Zhou2011}. 
	
%

The main objective of our paper is to propose efficient algorithms with low computational burden and of interest for both analysis and control design.  We first provide a simple and closed form formula to determine {a} Floquet factorization in the general case of periodic and $L^2$ matrix functions.  {As the proposed Floquet Factorization leads to a Jordan normal form representation of the harmonic state operator, it also solves the associated eigenvalue problem.} 
This new result allows us to perform a detailed spectral analysis of the harmonic state space matrix and its truncated version.  In particular, it is shown that the harmonic state space matrix  is an unbounded operator on $\ell^2$ with a discrete spectrum and that a truncation of a Hurwitz harmonic matrix may never be Hurwitz, regardless of the truncation order. To our knowledge, this is the first time that such a phenomenon is highlighted. It has a major impact in deriving tools for analysis and harmonic control design. As a consequence, if we consider a Hurwitz infinite-dimensional harmonic matrix, there may be no positive definite solution to the associated truncated harmonic Lyapunov equation whatever the considered truncation order.
To overcome this difficulty, we use Lyapunov symbolic equations associated to harmonic Lyapunov equations to provide an efficient algorithm that allows to recover the solution of the infinite-dimensional harmonic Lyapunov equation up to an arbitrarily small error. We extend this result to solve infinite-dimensional harmonic Riccati equations and provide a Kleinman's like algorithm  \cite{Kleinman} in the harmonic framework.  {This generic result is made possible by the fact that we do not use a Floquet factorization to solve a harmonic Lyapunov equation at each step.} To demonstrate that our results can be used for control design, we treat the problem of periodic trajectories tracking using a harmonic linear quadratic control as an illustrative example.

The paper is organized as follows. We first give some mathematical preliminaries in the next section before stating in section III the problem we are interested in. 
In section IV, we provide a complete and simple characterization of a Floquet factorization in the general case of $L^2$ matrix functions and analyze the spectral properties of the harmonic state space operator and its truncated version. The main contribution of our paper is detailed in section V where an efficient algorithm to solve up to an arbitrarily small error infinite dimensional harmonic Lyapunov equations is derived. This algorithm is extended to infinite dimensional harmonic Riccati equations in section VI. We illustrate the results of this paper in section VII where a design of an harmonic LQ control for a 2-dimensional LTP system is proposed. Section VIII is dedicated to the conclusions.

{\bf Notations: }The transpose of a matrix $A$ is denoted $A'$ and $A^*$ denotes the complex conjugate transpose $A^*=\bar A'$. The $n$-dimensional identity matrix is denoted $Id_n$. The infinite identity matrix is denoted $\mathcal{I}$. The $n\times n$ matrix of ones is denoted $\mathbbm{1}_{n,n}$. The flip matrix $J_m$ is the $(2m+1) \times (2m+1)$ matrix having 1 on the anti-diagonal and zeros elsewhere. The product $\cdot$ refers to the Hadamard product (known also as element-by-element multiplication). $A \otimes B$ is the Kronecker product of two matrices $A$ and $B$. $L^{p}$ (resp. $\ell^p$) denotes the Lebesgues spaces of $p-$integrable functions (resp. $p-$summable sequences) for $1\leq p\leq\infty$. $L_{loc}^{p}$ is the set of locally $p-$integrable functions i.e. on any compact set. The notation $f(t)=g(t)\ a.e.$ means almost everywhere in $t$ or for almost every $t$. We denote by $col(X)$ the vectorization of a matrix $X$, formed by stacking the columns of $X$ into a single column vector. 	We use $\sigma^+$ to denote the largest singular value. 
To simplify the notations, $L^p([a,b])$ or $L^p$ will be often used instead of $L^p([a,b],\mathbb{C}^n)$. 
For example, $x\in L^2([a,b])$ means $x \in L^2([a,b],\mathbb{C}^n)$.
\color{black}

\section{Mathematical Preliminaries}

We first start be recalling the definition of the sliding Fourier decomposition over a 
window of length $T$ and the so-called "Coincidence Condition" introduced in \cite{Blin}. 

\begin{definition}The sliding Fourier decomposition over a window of length $T$ from $ L^{2}_{loc}(\mathbb{R},\mathbb{C}^n)$ to $L^{\infty}_{loc}(\mathbb{R},\ell^2(\mathbb{C}^n))$ is defined by:
	$$X:=\mathcal{F}(x)$$
	where the time-varying infinite sequence $X$ is defined by:
	\begin{equation*}
		t\mapsto X(t):=(\mathcal{F}(x_1)(t), \cdots,\mathcal{F}(x_n)(t))\end{equation*}
	and where for $i:=1,\cdots,n$, the vector $\mathcal{F}(x_i):=(\cdots, X_{i,-1}, X_{i,0}, X_{i,1},\cdots)$, has infinite components $X_{i,k} $, $k\in \mathbb{Z}$ satisfying:
	$X_{i,k}(t):=\frac{1}{T}\int_{t-T}^t x_i(\tau)e^{-j\omega k \tau}d\tau.
	$
	The vector $X_k:=(X_{1,k}, \cdots, X_{n,k})$ is called the $k-$th phasor of $X$. \end{definition}
	\begin{definition}\label{H} We say that $X$ belongs to $H$ if $X$ is an absolutely continuous function (i.e $X\in C^a(\mathbb{R},\ell^2(\mathbb{C}^n))$ and fulfills for any $k$ the following condition: \begin{equation*}\dot X_k(t)=\dot X_0(t)e^{-j\omega k t} \ a.e.\end{equation*}
\end{definition}
Similarly to the Riesz-Fisher theorem which establishes a one-to-one correspondence between the spaces $L^2$ and $\ell^2$, the following "Coincidence Condition" establishes a one-to-one correspondence between the space $L_{loc}^2$ and the space $H$.
\begin{theorem}[Coincidence Condition \cite{Blin}]\label{coincidence}For a given $X\in L_{loc}^{\infty}(\mathbb{R},\ell^2(\mathbb{C}^n))$, there exists a representative $x\in L^2_{loc}(\mathbb{R},\mathbb{C}^n)$ of $X$, i.e. $X=\mathcal{F}(x)$, if and only if $X$ belongs to $H$.
\end{theorem}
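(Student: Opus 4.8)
The plan is to prove the two implications of the equivalence separately. The forward implication (if $X=\mathcal F(x)$ for some $x\in L^2_{loc}$ then $X\in H$) is a direct computation, whereas the converse --- reconstructing a representative $x$ from $X$ --- carries the actual content, and this is where the ``Coincidence Condition'' of Definition \ref{H} does the essential work.

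For necessity, fix $x\in L^2_{loc}(\mathbb R,\mathbb C^n)$ and set $X=\mathcal F(x)$. Since $x_i(\cdot)e^{-j\omega k\cdot}\in L^1_{loc}$, the fundamental theorem of calculus shows that each scalar map $t\mapsto X_{i,k}(t)=\frac1T\int_{t-T}^t x_i(\tau)e^{-j\omega k\tau}\,d\tau$ is locally absolutely continuous with $\dot X_{i,k}(t)=\frac1T\big(x_i(t)e^{-j\omega kt}-x_i(t-T)e^{-j\omega k(t-T)}\big)$ a.e. Using $\omega T=2\pi$ we have $e^{-j\omega k(t-T)}=e^{-j\omega kt}$, so $\dot X_{i,k}(t)=\frac1T\big(x_i(t)-x_i(t-T)\big)e^{-j\omega kt}=\dot X_{i,0}(t)\,e^{-j\omega kt}$ a.e., which is exactly the coincidence identity. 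That $X(t)\in\ell^2(\mathbb C^n)$ for every $t$, and that $t\mapsto X(t)$ has the regularity demanded by Definition \ref{H}, follow from Parseval's identity $\|X_i(t)\|_{\ell^2}^2=\frac1T\int_{t-T}^t|x_i(\tau)|^2\,d\tau$ together with the $L^2$-continuity of the truncation/translation $t\mapsto x\,\mathbbm{1}_{[t-T,t]}$. Hence $X\in H$.

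For sufficiency, let $X\in H$. For each $t$ the sequence $(X_{i,k}(t))_{k\in\mathbb Z}$ lies in $\ell^2$, so by Riesz--Fischer there is a unique $y_i^{(t)}\in L^2([t-T,t])$, identified with its $T$-periodic extension, whose Fourier coefficients on $[t-T,t]$ are the $X_{i,k}(t)$. The crux is the \emph{consistency} claim: for $s<t$ with $t-s<T$ one has $y_i^{(s)}=y_i^{(t)}$ a.e.\ on the overlap $[t-T,s]$. To prove it I would test against an arbitrary $\phi\in C_c^\infty$ with $\mathrm{supp}\,\phi\subset(t-T,s)$ and study $F(r):=\langle y_i^{(r)},\phi\rangle=\sum_k X_{i,k}(r)\,\overline{\hat\phi(k)}$ for $r\in[s,t]$ (with $\langle\cdot,\cdot\rangle$ the $L^2$ pairing and $\hat\phi(k):=\int\phi(\tau)e^{-j\omega k\tau}\,d\tau$, which decays rapidly since $\phi$ is smooth and compactly supported). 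Because $|\dot X_{i,k}(r)|=|\dot X_{i,0}(r)|$ for every $k$ --- the coincidence identity again --- Fubini's theorem lets one differentiate term by term, giving $F'(r)=\dot X_{i,0}(r)\sum_k\overline{\hat\phi(k)}\,e^{-j\omega kr}$ a.e.; by the Poisson summation formula $\sum_k\overline{\hat\phi(k)}\,e^{-j\omega kr}=T\sum_m\overline{\phi(r+mT)}$, which vanishes for every $r\in[s,t]$ because $\mathrm{supp}\,\phi\subset(t-T,s)$ is disjoint from $[s,t]+T\mathbb Z$. Hence $F$ is constant on $[s,t]$, so $\langle y_i^{(s)},\phi\rangle=\langle y_i^{(t)},\phi\rangle$ for all such $\phi$, which gives the claim. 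The consistency claim then lets one define a single $x\in L^2_{loc}(\mathbb R,\mathbb C^n)$ by declaring $x_i$ equal to $y_i^{(t)}$ on each window $[t-T,t]$ --- unambiguous on overlaps by the claim --- so that $x_i|_{[t-T,t]}=y_i^{(t)}$ for every $t$; consequently $\frac1T\int_{t-T}^t x_i(\tau)e^{-j\omega l\tau}\,d\tau$ is, by construction, the $l$-th Fourier coefficient of $y_i^{(t)}$, namely $X_{i,l}(t)$, so $\mathcal F(x)=X$. Uniqueness of the representative in $L^2_{loc}$ is immediate from Parseval, yielding the announced bijection.

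I expect the sufficiency direction to be the main obstacle. Two points require care: the pointwise series $\sum_k X_{i,k}(t)e^{j\omega kt}$ need not converge, so the inverse has to be built in $L^2_{loc}$ from the local Riesz--Fischer pieces rather than written down directly; and the consistency of those pieces must be squeezed out of the coincidence condition alone, which forces one to argue weakly against test functions, to interchange summation and differentiation (legitimized precisely by $|\dot X_{i,k}|=|\dot X_{i,0}|$), and to invoke the Poisson summation / Dirac-comb identity with due attention to convergence and to the placement of $\mathrm{supp}\,\phi$. The necessity direction, by contrast, is a routine application of the fundamental theorem of calculus and Parseval.
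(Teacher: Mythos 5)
The paper does not prove Theorem~\ref{coincidence}; it imports it verbatim from \cite{Blin} (the theorem header literally cites that reference), so there is no internal proof to compare your argument against. On its own terms your reconstruction is sound: the necessity direction is exactly the fundamental-theorem-of-calculus computation one expects, using $e^{-j\omega k(t-T)}=e^{-j\omega kt}$ to collapse $\dot X_{i,k}$ onto $\dot X_{i,0}e^{-j\omega kt}$; and the sufficiency direction is handled by a genuinely nontrivial idea --- build local Riesz--Fischer inverses $y_i^{(t)}$ on each window, then prove their pairwise consistency by showing that $F(r)=\langle y_i^{(r)},\phi\rangle=\sum_k X_{i,k}(r)\overline{\hat\phi(k)}$ has zero derivative on $[s,t]$ via the coincidence identity, Fubini (legitimized by rapid decay of $\hat\phi$ and $|\dot X_{i,k}|=|\dot X_{i,0}|$), and Poisson summation with $\mathrm{supp}\,\phi\subset(t-T,s)$ disjoint from $[s,t]+T\mathbb Z$. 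One cosmetic caveat: Definition~\ref{H} writes $X\in C^a(\mathbb R,\ell^2)$, but the coincidence condition forces $|\dot X_k(t)|=|\dot X_0(t)|$ for all $k$, so $\dot X(t)\notin\ell^2$ whenever $\dot X_0(t)\neq 0$ (the paper itself remarks later that $\dot X(t)\in\ell^\infty$ only); thus ``absolutely continuous into $\ell^2$'' can only mean component-wise absolute continuity plus $\ell^2$-valued continuity, which is precisely what your Parseval/translation-continuity argument delivers. You were right to flag the failure of the pointwise series $\sum_k X_{i,k}(t)e^{j\omega kt}$ (the paper's reconstruction formula \eqref{recos} carries the extra $\tfrac T2\dot X_0(t)$ for exactly this reason) and to instead glue $L^2_{loc}$ pieces --- this is the correct and necessary route.
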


In the sequel, we provide some mathematical preliminaries related to block Toeplitz matrices and operator norms. These preliminaries are adaptations to our setting of some mathematical results borrowed from \cite{Bini,Massei,Bottcher,Bini2,Iavernaro,Reichel92,Robol,gut}. 

\subsection{ Finite and infinite Toeplitz and block Toeplitz matrices}
Consider a $T-$periodic $L^2([0 \ T],\mathbb{C})$ signal $a$, its associated Toeplitz matrix $\mathcal{T}(a)$  
$$\mathcal{T}(a):= (t_{ij}), {i,j}\in \mathbb{Z} \text{ such that }t_{ij} := a_{i-j}$$
and its symbol (Laurent series) $a(z)=\sum_{k=-\infty}^{+\infty}a_kz^k$ where $a_k$, $k\in \mathbb{Z}$, are the phasors of $a(\cdot)$. Define the semi-infinite Toeplitz matrix $$\mathcal{T}_s(a):= (t_{ij}), {i,j}\in \mathbb{Z}^+ \text{ such that }t_{ij} := a_{i-j}$$
and let $a^+(z):=\sum_{k>0}a_kz^k$ and $a^-(z):=\sum_{k>0}a_{-k}z^{-k}$. We associate with $a^+(z)$ and $a^-(z)$ the following semi-infinite Hankel matrices
\begin{align*}
\mathcal{H}(a^+) &:= (h^+_{ij}), {i,j}\in \mathbb{Z}^{+*},\ h^+_{ij} := a_{i+j-1},\\
\mathcal{H}(a^-) &:= (h^-_{ij}), {i,j}\in \mathbb{Z}^{+*},\ h^-_{ij} := a_{-i-j+1}\end{align*}
 
Given a symbol $a(z)$ and $m \in \mathbb{Z}^+$, we denote by $\mathcal{T}_m(a)$, the $(2m+1) \times (2m+1)$ leading principal submatrices of $\mathcal{T}(a)$. 
 We denote also by $\mathcal{H}_{(p,q)}(a)$, for $p,q>0$, the $(2p+1)\times(2q+1)$ Hankel matrix obtained selecting the first $(2p+1)$ rows and $(2q+1)$ columns of $\mathcal{H}(a)$. For clarity purpose, we provide in Fig.~\ref{fig20} a block decomposition of an infinite Toeplitz matrix $\mathcal{T}(a)$ to illustrate how the matrices defined above appear. This block decomposition will be useful in the sequel.
 
\begin{figure}[h]\begin{center}
\includegraphics[width=\linewidth,height=6.5cm]{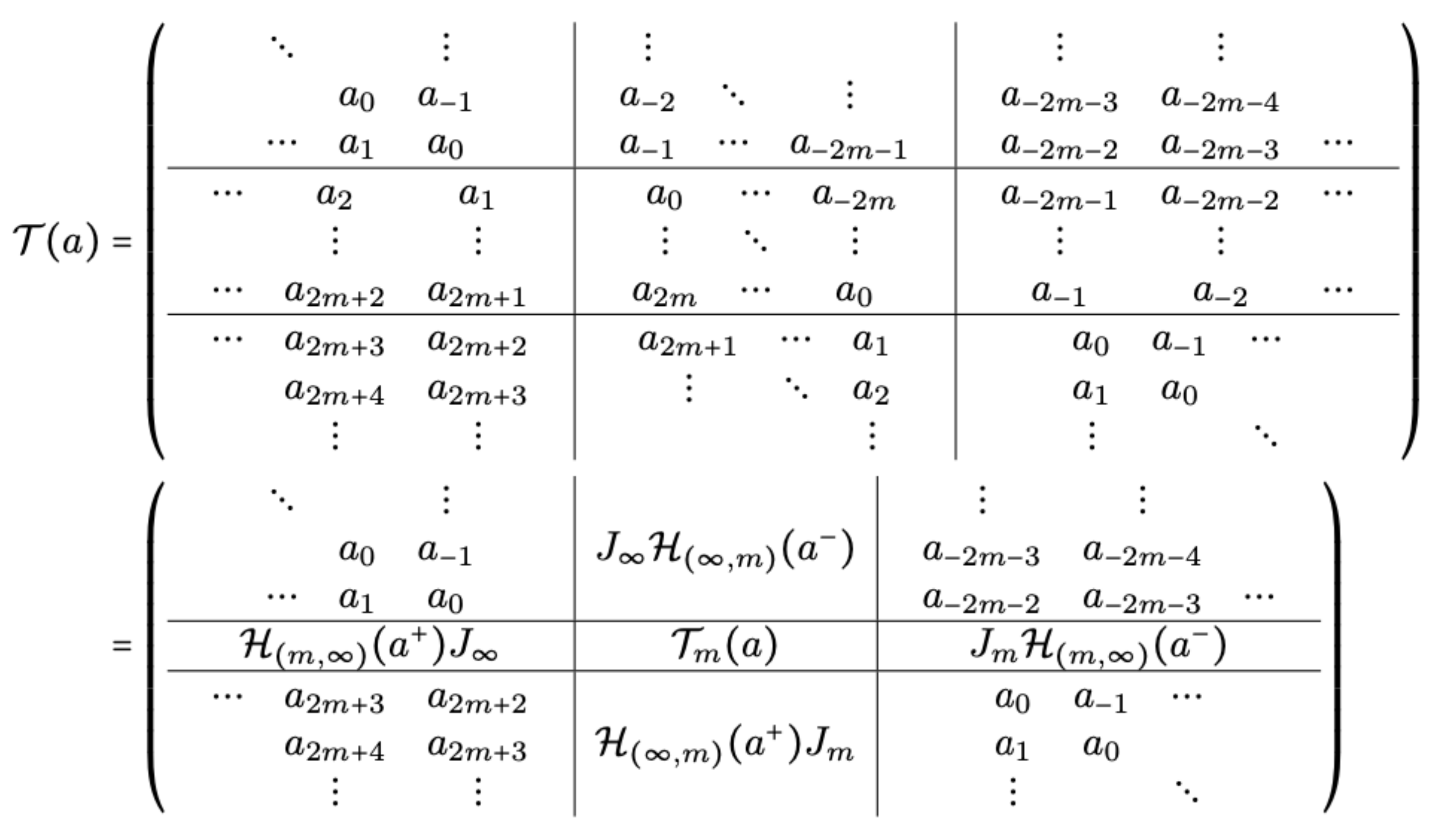}
\caption{Block decomposition of an infinite Toeplitz matrix $\mathcal{T}(a)$} \label{fig20}
\end{center}
\end{figure}

\begin{definition}\label{bt}
	The block Toeplitz transformation of a $T-$periodic $n\times n$ matrix function $A\in L^{2}([0 \ T],\mathbb{C}^{n \times n})$, denoted $\mathcal{A}:=\mathcal{T}(A)$, defines a constant $n\times n$ block Toeplitz and infinite-dimensional matrix as follows: 
	$$\mathcal{A}:=\left(\begin{array}{cccc}
		\mathcal{A}_{11} & \mathcal{A}_{12} & \cdots & \mathcal{A}_{1n} \\
		\mathcal{A}_{21} & \mathcal{A}_{22} & & \vdots \\
		\vdots & & \ddots& \vdots \\
		\mathcal{A}_{n1} & \cdots & \cdots & \mathcal{A}_{nn}\end{array}\right)$$ where the infinite matrices $\mathcal{A}_{ij}:=\mathcal{T}(a_{ij})$, $i,j:=1,\cdots,n$, are the Toeplitz transformations of the entries $a_{ij}(t)$ of the matrix $A(t)$:
	\begin{align*}
		\mathcal{T}(a_{ij}):=
		\left[
		\begin{array}{ccccc}
			\ddots & & \vdots & &\udots \\ & a_{ij,0} & a_{ij,-1} & a_{ij,-2} & \\
			\cdots & a_{ij,1} & a_{ij,0} & a_{ij,-1} & \cdots \\
			& a_{ij,2} & a_{ij,1} & a_{ij,0} & \\
			\udots & & \vdots & & \ddots\end{array}\right],\end{align*}
	with $a_{ij,k} :=\frac{1}{T}\int_{t-T}^t a_{ij}(\tau)e^{-j\omega k \tau}d\tau$. \\
\end{definition}

{
In the sequel, to avoid confusions, for any $T-$periodic matrix function $A\in L^2$, we denote by  ${\bf A}:=\mathcal{F}(A)$ its Fourier decomposition and by  $\mathcal{A}:=\mathcal{T}(A)$ its Toeplitz transformation. 
The $m-$truncation of the $n \times n$ block Toeplitz matrix  $\mathcal{A}$ is defined by the $m-$truncation $\mathcal{T}(a_{ij})_m$ of all its entries $(i,j)$.}
The symbol matrix $A(z)$ associated to a $n\times n$ block Toeplitz matrix is given by:
\begin{equation}A(z):=\left(\begin{array}{cccc}
a_{11}(z) & a_{12}(z) & \cdots & a_{1n}(z) \\
a_{21}(z) & a_{22}(z) & & \vdots \\
\vdots & & \ddots& \vdots \\
a_{n1}(z) & \cdots & & a_{nn}(z)\end{array}\right).\label{symA}
\end{equation}
The $n\times n$ block Hankel matrices $\mathcal{H}(A^+)$, $\mathcal{H}(A^-)$ are also defined respectively by $\mathcal{H}(A^+)_{ij}:=\mathcal{H}(a^+_{ij})$ and $\mathcal{H}(A^-)_{ij}:=\mathcal{H}(a^-_{ij})$ for $i,j:=1,\cdots,n$. {In the same way, their subprincipal submatrices $\mathcal{H}(A^+)_{(p,q)}$, $\mathcal{H}(A^-)_{(p,q)}$ for $p,q >0$ are obtained by considering the subprincipal submatrices of the entries $\mathcal{H}(a^+_{ij})_{(p,q)}$ and $\mathcal{H}(a^-_{ij})_{(p,q)}$ for $i,j:=1,\cdots,n$.}
 
\begin{theorem} \label{product}Let $A(z)$, $B(z)$ be two symbol matrices and $C(z) :=A(z)B(z)$. Then,
\begin{equation}
\mathcal{T}_s(A)\mathcal{T}_s(B) = \mathcal{T}_s(C) - \mathcal{H}(A^+)\mathcal{H}(B^-) \label{ee1}\end{equation} and
\begin{align}
\mathcal{T}_m(A)\mathcal{T}_m(B) &= \mathcal{T}_m(C)- \mathcal{H}_{(m,\eta)}(A^+)\mathcal{H}_{(\eta,m)}(B^-)\nonumber \\&- \mathcal{J}_{n,m}\mathcal{H}_{(m,\eta)}(A^-)\mathcal{H}_{(\eta,m)}(B^+) \mathcal{J}_{n,m}, \label{ee2}\end{align}
where $\mathcal{J}_{n,m}:=Id_n\otimes J_m$ and $\eta$ is such that $2\eta\geq \min d^o (A(z),B(z))$

\end{theorem}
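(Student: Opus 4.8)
The starting point is the classical identity for semi-infinite Toeplitz and Hankel operators. For scalar symbols $a(z),b(z)$ with $c(z)=a(z)b(z)$, the $(i,j)$ entry of $\mathcal{T}_s(a)\mathcal{T}_s(b)$ is $\sum_{k\geq 0} a_{i-k}b_{k-j}$, whereas the $(i,j)$ entry of $\mathcal{T}_s(c)$ is $c_{i-j}=\sum_{k\in\mathbb{Z}} a_{i-k}b_{k-j}$; subtracting, the missing terms are those with $k<0$, i.e. $\sum_{k<0} a_{i-k}b_{k-j}=\sum_{\ell\geq 1} a_{i+\ell}b_{-\ell-j+1}\cdot$(reindex), which is exactly the $(i,j)$ entry of $\mathcal{H}(a^+)\mathcal{H}(a^-)$ after matching the Hankel index conventions $h^+_{i\ell}=a_{i+\ell-1}$, $h^-_{\ell j}=a_{-\ell-j+1}$. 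This proves \eqref{ee1} entrywise in the scalar case; passing to $n\times n$ block symbols is just bookkeeping since each block $C_{ij}(z)=\sum_k A_{ik}(z)B_{kj}(z)$ and the Toeplitz/Hankel block transforms are applied entrywise, so the scalar identity lifts block-by-block.

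For \eqref{ee2}, the idea is to realize the finite $(2m+1)\times(2m+1)$ truncation $\mathcal{T}_m$ as a \emph{compression} of the bi-infinite $\mathcal{T}$, and to write the bi-infinite matrix as the sum of two semi-infinite pieces glued at the origin: the part indexed by $\{k\geq 0\}$ behaves like $\mathcal{T}_s$, and the part indexed by $\{k\leq 0\}$ is, after the flip $J_m$ (equivalently conjugation by $\mathcal{J}_{n,m}$), again of semi-infinite Toeplitz type but now built from the ``reflected'' symbol whose $a^+$ and $a^-$ roles are swapped. Concretely, let $P_m$ be the coordinate projection onto indices $\{-m,\dots,m\}$. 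Then $P_m\mathcal{T}(A)\mathcal{T}(B)P_m = P_m\mathcal{T}(C)P_m - P_m\mathcal{T}(A)(\mathcal{I}-P_N)\mathcal{T}(B)P_m$ where the middle deleted block splits into a ``high-index tail'' ($k>m$) and a ``low-index tail'' ($k<-m$). The high tail contributes $\mathcal{H}_{(m,\eta)}(A^+)\mathcal{H}_{(\eta,m)}(B^-)$ and the low tail, after recognizing that reflecting indices $k\mapsto -k$ turns $A^+$ into $A^-$ and $B^-$ into $B^+$ while the compression becomes a flipped compression, contributes $\mathcal{J}_{n,m}\mathcal{H}_{(m,\eta)}(A^-)\mathcal{H}_{(\eta,m)}(B^+)\mathcal{J}_{n,m}$. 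The cutoff $\eta$ with $2\eta\geq \min d^o(A(z),B(z))$ is exactly what guarantees that all Hankel entries falling outside the $(2\eta+1)$-band are zero (one of the two symbols has no Fourier coefficient of that order), so truncating the inner Hankel dimension to $2\eta+1$ loses nothing — this is why a \emph{finite} correction suffices rather than an infinite one.

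The main obstacle is purely combinatorial: pinning down the index ranges so that the two finite Hankel blocks $\mathcal{H}_{(m,\eta)}$ and $\mathcal{H}_{(\eta,m)}$ have exactly the stated shapes and the flip matrices $\mathcal{J}_{n,m}=Id_n\otimes J_m$ land in the right places. I would do this first for a scalar symbol and $m=\eta$ to fix sign and offset conventions, checking one or two explicit small entries (e.g. the $(m,m)$ and $(-m,m)$ entries) against the formula, then invoke the degree bound to replace the a priori infinite inner sum by the finite one, and finally tensor up to the $n\times n$ block case, where $\mathcal{J}_{n,m}$ appears precisely because the block-flip permutation factors as $Id_n\otimes J_m$. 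Equation \eqref{ee1} is then recovered formally as the $m\to\infty$ limit, which also serves as a consistency check.
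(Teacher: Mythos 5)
Your proposal is correct and follows essentially the same route as the paper: both rest on the bi-infinite identity $\mathcal{T}(A)\mathcal{T}(B)=\mathcal{T}(C)$, obtain the $m$-truncated formula by compressing onto the index block $\{-m,\dots,m\}$ with the two Hankel corrections arising from the deleted semi-infinite tails, and lift from the scalar to the $n\times n$ block case by bookkeeping (the paper cites B\"ottcher for \eqref{ee1} and its block-decomposition figure for \eqref{ee2}, whereas you derive both entrywise, which is slightly more self-contained). One small index slip to fix before you carry out the combinatorial check you flagged as the main obstacle: for row and column indices in $[-m,m]$, the \emph{low} tail $k<-m$ gives $a_{i-k}$ with $i-k>0$ and $b_{k-j}$ with $k-j<0$, so it is the one contributing $\mathcal{H}_{(m,\eta)}(A^+)\mathcal{H}_{(\eta,m)}(B^-)$, while the \emph{high} tail $k>m$ produces the flipped term $\mathcal{J}_{n,m}\mathcal{H}_{(m,\eta)}(A^-)\mathcal{H}_{(\eta,m)}(B^+)\mathcal{J}_{n,m}$; you have the two assignments reversed, though since both terms appear symmetrically in the final sum the conclusion is unaffected.
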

\begin{proof} A classical result states that the product of two infinite Toeplitz matrices is a Toeplitz matrix. This means that for two symbols $a(z)$ and $b(z)$ with $c(z):=a(z)b(z)$, we have $\mathcal{T}(c)=\mathcal{T}(a)\mathcal{T}(b)$. The formula $\mathcal{T}_m(a)\mathcal{T}_m(b) = \mathcal{T}_m(c) - \mathcal{H}_{(m,\eta)}(a^+)\mathcal{H}_{(\eta,m)}(b^-) -J_m\mathcal{H}_{(m,\eta)}(a^-)\mathcal{H}_{(\eta,m)}(b^+)J_m,$ where $2\eta\geq\min(d^o(a(z),d^o(b(z))$, is directly obtained by applying to each Toeplitz matrix $\mathcal{T}(a)$, $\mathcal{T}(b)$ and $\mathcal{T}(c)$ the block decomposition of Fig.~\ref{fig20}. If the degrees of $a(z)$ and $b(z)$ are unknown or infinite then $\eta$ can be set to $+\infty$. 
For the $n\times n$ block Toeplitz case, the result is obtained by considering the entries $(i,j)$, $i,j =1,\cdots,n$ of the block matrix product:
\begin{align*}
(\mathcal{T}_m(A)\mathcal{T}_m(B))_{ij}&=\sum_{k=1}^n\mathcal{T}_m(a_{ik})\mathcal{T}_m(b_{kj})
\end{align*}
and decomposing each term of the sum, that is:
\begin{align*}
(\mathcal{T}_m(A)\mathcal{T}_m(B))_{ij}=\sum_{k=1}^n&(\mathcal{T}_m(c_{ijk})-\mathcal{H}_{(m,\eta)}(a_{ik}^+)\mathcal{H}_{(\eta,m)}(b_{kj}^-)\nonumber\\ - J_m&(\mathcal{H}_{(m,\eta)}(a_{ik}^-)\mathcal{H}_{(\eta,m)}(b_{kj}^+) J_\eta)
\end{align*}
where $c_{ijk}(z):=a_{ik}(z)b_{kj}(z)$. The results  follows for \eqref{ee2} and also for \eqref{ee1} using similar steps from the symbol formula $\mathcal{T}_s(a)\mathcal{T}_s(b) = \mathcal{T}_s(c) - \mathcal{H}(a^+)\mathcal{H}(b^-)$ (see \cite{Bottcher}). \end{proof}
\begin{figure}\begin{center}
\includegraphics[scale=0.25]{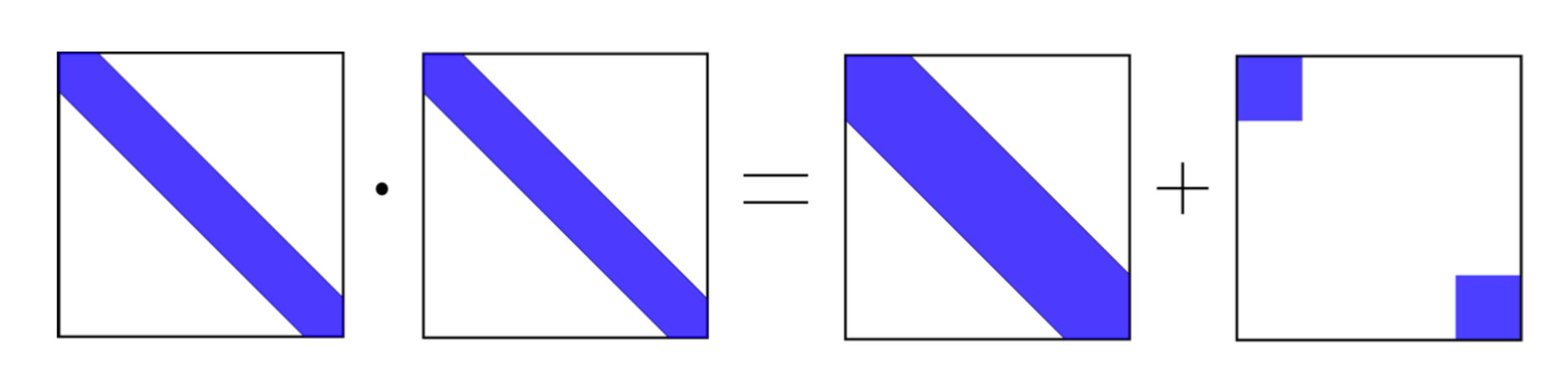}
\caption{Multiplication of two finite dimensional banded Toeplitz matrices}\label{fig1}
\end{center}
\end{figure}
An illustration of the above theorem is given in Fig.~\ref{fig1} for $n:=1$ with $a(z)$ and
$b(z)$ Laurent polynomials of degree much less than $m$ so that $\mathcal{T}_m(a)$ and $\mathcal{T}_m(b)$
are banded.
If $a(z) :=\sum^k_{i=-k} a_iz^i$ and $b(z) :=\sum^k_{i=-k} b_iz^i$ with $k$ much smaller than $m$, then
the matrices $E^+:=\mathcal{H}_m(a^+)\mathcal{H}_m(b^-)$ and $E^-:= J_m\mathcal{H}_m(a^-)\mathcal{H}_m(b^+) J_m$ have disjoint supports located in the upper leftmost corner and in the lower rightmost corner, respectively. As a consequence,
$\mathcal{T}_m(a)\mathcal{T}_m(b)$ can be represented as the sum of the Toeplitz matrix associated with $c(z)$ and two correcting terms $E^+$ and $E^-$.

We end these preliminaries on block Toeplitz matrices by defining what we call letf and right truncations and two results given without proofs as they follow from the block decomposition of Fig.~\ref{fig20}.
\begin{definition}\label{truncdef} The left $m-$truncation (resp. right $m-$truncation) of a $n\times n$ block Toeplitz infinite matrix $\mathcal{A}$ is given by:
	{\small$$\mathcal{A}_{m^+}:=\left(\begin{array}{cccc}
			\mathcal{A}_{{11}_{m^+}} & \mathcal{A}_{{12}_{m^+}} & \cdots & \mathcal{A}_{{1n}_{m^+}} \\
			\mathcal{A}_{{21}_{m^+}} & \mathcal{A}_{{22}_{m^+}} & & \vdots \\
			\vdots & & \ddots& \vdots \\
			\mathcal{A}_{{n1}_{m^+}} & \cdots & \mathcal{A}_{{n(n-1)}_{m^+}} & \mathcal{A}_{{nn}_{m^+}}\end{array}\right)$$} 
	(resp. $\mathcal{A}_{m^-}$) where $\mathcal{A}_{{ij}_{m^+}}$, $i,j:=1,\cdots, n$
	are obtained by suppressing in the infinite matrices $\mathcal{A}_{ij}$ all the columns and lines having an index strictly smaller than $-m$ (respectively strictly greater than $m$).  {Finally, the $m-$truncation is obtained by applying successively a left and a right $m-$truncations.}
\end{definition}
\begin{proposition}\label{product3}Let $a(z)$ be a symbol and $x:=(x_k)_{k\in \mathbb{Z}}$ an infinite vector of complex numbers.
Define the $m-$truncation of $x$ by $x|_m:=(x_{-m},\cdots, x_{m})$ and consider the semi-infinite vectors $x|_m^+:=(x_{m+1},x_{m+2}, \cdots)$ and $x|_m^-:=(\cdots, x_{-m-2},x_{-m-1})$. Let $\breve x$ be the infinite vector given by $\breve x:=(\cdots, 0,x|_m,0,\cdots)$. Then, the following relations hold true:
\begin{align}
\mathcal{T}(a)\breve x=\left[\begin{array}{c}
J_\infty\mathcal{H}_{(\infty,m)}(a^-) \\
\mathcal{T}_m(a) \\
\mathcal{H}_{(\infty,m)}(a^+)J_m\end{array}\right]x|_m\label{z1}\end{align}
\begin{align}
\mathcal{T}_m(a)x|_m=(\mathcal{T}(a)x)|_m&-\mathcal{H}_{(m,\infty)}(a^+)J_\infty x|_m^-\nonumber\\&-J_m \mathcal{H}_{(m,\infty)}(a^-)x|_m^+ \label{z2}\end{align}
 \end{proposition}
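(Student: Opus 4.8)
The plan is to prove both identities by a direct entrywise computation; they are purely algebraic rearrangements of the single relation $(\mathcal{T}(a)y)_i=\sum_{j\in\mathbb{Z}}a_{i-j}y_j$, and amount to reading off the block decomposition of $\mathcal{T}(a)$ depicted in Fig.~\ref{fig20} against the padded vector $\breve x$ (for \eqref{z1}) and against $x$ restricted to the central block (for \eqref{z2}). In \eqref{z1} every sum is finite because $\breve x$ is finitely supported, so no convergence issue arises; \eqref{z2} is to be read componentwise, the rearranged series converging precisely when the one defining $(\mathcal{T}(a)x)|_m$ does.

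For \eqref{z1} I would write $(\mathcal{T}(a)\breve x)_i=\sum_{j=-m}^{m}a_{i-j}x_j$ and split on the value of $i\in\mathbb{Z}$. For $-m\le i\le m$ this is by definition the $i$-th component of $\mathcal{T}_m(a)\,x|_m$, i.e. the middle block. For $i>m$ all the indices $i-j$, $-m\le j\le m$, lie in $\{i-m,\dots,i+m\}\subset\mathbb{Z}^{+*}$, so only coefficients of $a^+$ occur; setting $r:=i-m$ and reindexing the columns via $j=m+1-c$ turns the sum into $\sum_{c=1}^{2m+1}a_{r+c-1}x_{m+1-c}$, which is exactly the $r$-th component of $\mathcal{H}_{(\infty,m)}(a^+)\,J_m\,x|_m$ once one uses $h^+_{rc}=a_{r+c-1}$ and $(J_m x|_m)_c=x_{m+1-c}$; this is the bottom block. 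Symmetrically, for $i<-m$ all indices $i-j$ are negative, only $a^-$ occurs, and with $r:=-m-i$ and $j=c-m-1$ the sum becomes $\sum_{c=1}^{2m+1}a_{-r-c+1}x_{c-m-1}$, the $i$-th component of $J_\infty\,\mathcal{H}_{(\infty,m)}(a^-)\,x|_m$, where $J_\infty$ reverses the orientation of the one-sided Hankel block so that its row $r=1$ is aligned with the boundary index $i=-m-1$. Stacking the three cases is \eqref{z1}.

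For \eqref{z2} I would instead fix $i\in\{-m,\dots,m\}$ in $(\mathcal{T}(a)x)_i=\sum_{j\in\mathbb{Z}}a_{i-j}x_j$ and split the sum over $j$ into $-m\le j\le m$, $j>m$ and $j<-m$. The first piece is the $i$-th component of $\mathcal{T}_m(a)\,x|_m$. For $j>m$ the indices $i-j$ are negative, only $a^-$ appears, and reindexing $j=m+c$ together with a flip of the $2m+1$ rows identifies $\sum_{j>m}a_{i-j}x_j$ with the $i$-th component of $J_m\,\mathcal{H}_{(m,\infty)}(a^-)\,x|_m^+$ via $h^-_{rc}=a_{-r-c+1}$. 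For $j<-m$ the indices $i-j$ are positive, only $a^+$ appears, and reindexing $j=-m-c$ together with the reversal $(J_\infty x|_m^-)_c=x_{-m-c}$ identifies $\sum_{j<-m}a_{i-j}x_j$ with the $i$-th component of $\mathcal{H}_{(m,\infty)}(a^+)\,J_\infty\,x|_m^-$. Collecting terms gives $(\mathcal{T}(a)x)|_m=\mathcal{T}_m(a)\,x|_m+\mathcal{H}_{(m,\infty)}(a^+)\,J_\infty\,x|_m^-+J_m\,\mathcal{H}_{(m,\infty)}(a^-)\,x|_m^+$, and solving for $\mathcal{T}_m(a)\,x|_m$ is \eqref{z2}. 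Equivalently, \eqref{z2} follows from \eqref{z1} by writing $x$ as the sum of $\breve x$ and its two tails supported on $\{|k|>m\}$, applying $\mathcal{T}(a)$, and restricting to the central $2m+1$ rows.

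There is no real obstacle here beyond keeping the index conventions consistent: the offsets linking the running index $i$ (resp. $j$) to the row index $r$ (resp. column index $c$) of the rectangular Hankel blocks $\mathcal{H}_{(\infty,m)}$, $\mathcal{H}_{(m,\infty)}$, and the exact action of the flips --- column reversal on $x|_m$, row reversal of a one-sided Hankel block aligned with the block boundary, and order reversal of the one-sided tails $x|_m^{\pm}$. Once these are pinned down in accordance with the definitions $h^+_{ij}=a_{i+j-1}$, $h^-_{ij}=a_{-i-j+1}$ and the picture in Fig.~\ref{fig20}, both \eqref{z1} and \eqref{z2} follow at once.
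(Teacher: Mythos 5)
Your argument is correct, and it fills in precisely what the paper leaves implicit: the paper states Proposition~\ref{product3} without proof, remarking only that both identities ``follow from the block decomposition of Fig.~\ref{fig20},'' and your entrywise index computation is exactly the carrying-out of that reading. The case splits on $i$ (for \eqref{z1}) and on $j$ (for \eqref{z2}), and the flips $J_m$, $J_\infty$ reconciling the Hankel indexing $h^{\pm}_{rc}$ with the natural ordering of $x|_m$, $x|_m^{\pm}$, are all consistent with the paper's definitions.
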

The next proposition is a generalization of Proposition~\ref{product3} to the case of $n\times n$ block Toeplitz matrices.
\begin{proposition}\label{general}Let $A(z)$ be a $n \times n$ symbol matrix and $x:=(x_1,\cdots, x_n)$ a vector whose components $x_i$ are infinite sequences $x_i:=(\cdots, x_{i,-1},x_{i,0},x_{i,1},\cdots)$.
Define $x|_m:=(x_1|_m,\cdots,x_n|_m)$ the $m-$truncation of $x$ where for $i :=1,\cdots ,n$, $x_i|_m:=(x_{i,-m},\cdots, x_{i,m})$.
Define also the semi infinite vectors $x_i|_m^+:=(x_{i,m+1},x_{i,m+2}, \cdots)$ and $x_i|_m^-:=(\cdots, x_{i,-m-2},x_{i,-m-1})$. 
Set $\breve x:=(\breve x_1,\breve x_2, \cdots,\breve x_n)$ with $\breve{x}_i:=(\cdots, 0,x_i|_m,0,\cdots)$ for any $i :=1,\cdots ,n$.
Then, we have:
$$(\mathcal{T}(A)\breve x)_i=\sum_{j=1}^n\mathcal{T}(a_{ij})\breve x_j$$
where $\mathcal{T}(a_{ij})\breve x_j $ is given by \eqref{z1}
and
\begin{align*}
\mathcal{T}_m(A)x|_m&=\sum_{j=1}^n(\mathcal{T}(a_{ij})x_j)|_m\\&-\mathcal{H}_{(m,\infty)}(a_{ij}^+)J_\infty x_j|_m^--J_m\mathcal{H}_{(m,\infty)}(a_{ij}^-)x_j|_m^+
\end{align*}
\end{proposition}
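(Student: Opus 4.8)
The plan is to reduce everything to the two scalar identities \eqref{z1} and \eqref{z2} of Proposition~\ref{product3} by exploiting the block structure of $\mathcal{T}(A)$ together with linearity of the (block) matrix–vector product. By Definition~\ref{bt}, the $i$-th block row of $\mathcal{T}(A)$ is $\big(\mathcal{T}(a_{i1}),\dots,\mathcal{T}(a_{in})\big)$, while $\breve x$ is stacked blockwise as $(\breve x_1,\dots,\breve x_n)$; hence ordinary block matrix–vector multiplication gives
$$(\mathcal{T}(A)\breve x)_i=\sum_{j=1}^n\mathcal{T}(a_{ij})\breve x_j,$$
which is precisely the first claimed identity. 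Each summand $\mathcal{T}(a_{ij})\breve x_j$ is of exactly the form handled by \eqref{z1}, with $a$ replaced by $a_{ij}$ and $x$ by $x_j$, so nothing further needs to be said.

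For the second identity I would recall that, by the paragraph following Definition~\ref{bt}, the $m$-truncation of the block Toeplitz matrix $\mathcal{A}=\mathcal{T}(A)$ is obtained by $m$-truncating every scalar block, i.e.\ the $(i,j)$ block of $\mathcal{T}_m(A)$ is $\mathcal{T}_m(a_{ij})$ and the $i$-th block of $x|_m$ is $x_i|_m$. Therefore
$$(\mathcal{T}_m(A)\,x|_m)_i=\sum_{j=1}^n\mathcal{T}_m(a_{ij})\,x_j|_m.$$
Now I apply \eqref{z2} of Proposition~\ref{product3} to each term $\mathcal{T}_m(a_{ij})\,x_j|_m$, with $a\leftarrow a_{ij}$ and $x\leftarrow x_j$: this replaces it by $(\mathcal{T}(a_{ij})x_j)|_m-\mathcal{H}_{(m,\infty)}(a_{ij}^+)J_\infty x_j|_m^- -J_m\mathcal{H}_{(m,\infty)}(a_{ij}^-)x_j|_m^+$. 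Summing over $j$ yields the stated formula, the summation sign on the first line of the displayed equation being understood to extend over the correction terms as well.

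So the argument is essentially bookkeeping: the componentwise definitions of the block Toeplitz transformation and of its $m$-truncation, combined with linearity, reduce the proposition to $n^2$ instances of Proposition~\ref{product3}. The only point that needs genuine care is checking that the objects defined componentwise in Proposition~\ref{general} — namely $\breve x$, $x|_m$, and the semi-infinite tails $x_j|_m^\pm$ — coincide block by block with the scalar objects appearing in Proposition~\ref{product3}; once the indexing conventions are matched this is immediate, and there is no analytic obstacle. One may also note in passing that, since each $\breve x_j$ has only finitely many nonzero entries, all infinite sums implicit in $\mathcal{T}(a_{ij})\breve x_j$ are in fact finite, so no convergence question arises in the first identity either.
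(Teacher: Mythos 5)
Your proof is correct and matches the approach the paper implicitly endorses: the paper presents Proposition~\ref{general} without proof, stating that it (like Proposition~\ref{product3}) ``follows from the block decomposition of Fig.~\ref{fig20},'' and your blockwise reduction via linearity to $n^2$ applications of \eqref{z1} and \eqref{z2} is exactly the bookkeeping the authors are implicitly invoking. Your remark that the summation sign in the second display must be read as extending over the Hankel correction terms, and that the left side should carry the block index $i$, is a fair reading of a slightly sloppy statement in the paper and does not represent a gap in your argument.
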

\subsection{ Operator norms}
We provide here some results concerning operator norms to be used in the sequel. Recall that the norm of an operator $M$ from $\ell^p$ to $\ell^q$ is given by 
$$\|M\|_{\ell^p,\ell^q}:=\sup_{\|X\|_{\ell^p}=1}\| MX \|_{\ell^q}.$$
This operator norm is sub-multiplicative i.e. if $M: \ell^p \rightarrow \ell^q$ and $N: \ell^q \rightarrow \ell^r$ then
$\|NM\|_{\ell^p,\ell^r} \leq \|M\|_{\ell^p,\ell^q} \|N\|_{\ell^q,\ell^r}$.
If $p=q$, we use the notation: $\|M\|_{\ell^p}:=\|M\|_{\ell^p,\ell^p}$.
\begin{definition}Consider a vector $x(t)\in L^2([0 \ T],\mathbb{C}^n)$ and define $X:=\mathcal{F}(x)$ with its symbol $X(z)$.
	The $\ell^2 -$norm of $X(z)$ is given by:
	$$\|X(z)\|_{\ell^2}:=\|X\|_{\ell^2}$$ where $\|X\|_{\ell^2}:=\left(\sum_{k\in\mathbb{Z}}|X_k|^2\right)^{\frac{1}{2}}$.
\end{definition}

\begin{theorem}\label{borne} Let $A(t)\in L^2([0 \ T],\mathbb{C}^{n\times m})$. Then, $\mathcal{A}:=\mathcal{T}(A)$ is a bounded operator on $\ell^2$ if and only if $A\in L^{\infty}([0\ T],\mathbb{C}^{n\times m} )$.
	Moreover, we have:
	\begin{enumerate}
		\item the operator norm induced by the $\ell^2$-norm satisfies: $$\|A(z)\|_{\ell^2}=\|\mathcal{A}\|_{\ell^2}=\|A\|_{L^{\infty}}$$
		\item the operator norm of the semi infinite Toeplitz matrix satisfies: $\|\mathcal{T}_s(A)\|_{\ell^2}=\|\mathcal{A}\|_{\ell^2}$
		\item the operator norm of the Hankel operators $\mathcal{H}(A^+)$, $\mathcal{H}(A^-)$ satisfies: 
		$\|\mathcal{H}(A^-)\|_{\ell^2}\leq \|A\|_{L^{\infty}}$ and 
		$\|\mathcal{H}(A^+)\|_{\ell^2}\leq \|A\|_{L^{\infty}}$
		\item the operator norm related to the left and right $m-$truncations satisfies:	$\|\mathcal{A}_{m^+}\|_{\ell^2}=\|\mathcal{A}_{m^-}\|_{\ell^2}=\|\mathcal{A}\|_{\ell^2}=\|A(t)\|_{L^\infty}$
	\end{enumerate}
\end{theorem}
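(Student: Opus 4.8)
The plan is to transport the whole statement to the picture of a multiplication operator on $L^2$. Since $A$ is $T$-periodic, its phasors $a_{ij,k}$ are simply the ordinary Fourier coefficients of the entries $a_{ij}$, so by the Riesz--Fisher theorem the Fourier coefficient map identifies $\ell^2(\mathbb{C}^m)$ isometrically with $L^2([0\ T],\mathbb{C}^m)$ (equipped with the normalised measure $\tfrac{1}{T}\,dt$), and likewise for $n$ in place of $m$. Under these identifications $\mathcal{A}=\mathcal{T}(A)$ is exactly the operator $x(\cdot)\mapsto A(\cdot)x(\cdot)$ of pointwise multiplication: the $k$-th Fourier coefficient of $Ax$ is $\sum_j A_{k-j}x_j$, which is precisely the $k$-th block-component of $\mathcal{A}X$ when $X=\mathcal{F}(x)$. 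Hence $\|\mathcal{A}X\|_{\ell^2}=\|Ax\|_{L^2}$ and $\|X\|_{\ell^2}=\|x\|_{L^2}$, so $\|\mathcal{A}\|_{\ell^2}=\|M_A\|$, the norm of the multiplication operator $M_A$, and $\mathcal{A}$ is bounded on $\ell^2$ iff $M_A$ is bounded on $L^2$. Statement~1 then reduces to the classical identity $\|M_A\|=\operatorname{ess\,sup}_{t}\sigma^+(A(t))=\|A\|_{L^\infty}$, with boundedness holding iff this quantity is finite (and $\|A(z)\|_{\ell^2}$ is, by definition, this operator norm). The upper bound is immediate from $|A(t)x(t)|\le\sigma^+(A(t))\,|x(t)|\le\|A\|_{L^\infty}\,|x(t)|$ a.e.; for the lower bound and the ``only if'' I would fix any $c<\|A\|_{L^\infty}$ (possibly $+\infty$), observe that $E:=\{t:\sigma^+(A(t))>c\}$ has positive measure, select measurably a unit vector field $v(\cdot)$ on $E$ with $|A(t)v(t)|>c$, and test $M_A$ on $x:=\mathbbm{1}_E v/|E|^{1/2}$, which gives $\|M_A\|\ge c$; letting $c\uparrow\|A\|_{L^\infty}$ finishes it and simultaneously exhibits unboundedness when $A\notin L^\infty$.

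For statement~2, note that $\mathcal{T}_s(A)$ is the compression of $\mathcal{A}$ to the coordinates of nonnegative block-index (on both the domain and the range side); a compression never increases the operator norm, so $\|\mathcal{T}_s(A)\|_{\ell^2}\le\|\mathcal{A}\|_{\ell^2}$. The reverse inequality uses shift-invariance: let $S$ be the bilateral shift acting within each Fourier block (corresponding to multiplication by $z$), which commutes with $\mathcal{A}$. For a finitely supported $X$ with $\|X\|_{\ell^2}=1$ and $k$ large enough, $S^kX$ is supported on nonnegative block-indices, so $\mathcal{T}_s(A)(S^kX)=P\mathcal{A}S^kX=P\,S^k(\mathcal{A}X)$, where $P$ projects onto nonnegative block-indices, and $\|P\,S^k(\mathcal{A}X)\|_{\ell^2}^2=\|\mathcal{A}X\|_{\ell^2}^2-\sum_{l<-k}|(\mathcal{A}X)_l|^2\to\|\mathcal{A}X\|_{\ell^2}^2$ as $k\to\infty$ because $\mathcal{A}X\in\ell^2$. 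Hence $\|\mathcal{T}_s(A)\|_{\ell^2}\ge\|\mathcal{A}X\|_{\ell^2}$ for all such $X$, and these are dense in the unit sphere, giving $\|\mathcal{T}_s(A)\|_{\ell^2}\ge\|\mathcal{A}\|_{\ell^2}$.

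Statement~3 follows by reading off from the block decomposition of Fig.~\ref{fig20} that $\mathcal{H}(A^+)$ (resp.\ $\mathcal{H}(A^-)$) is, after reversing the column ordering, the submatrix of $\mathcal{A}$ formed by the rows of positive block-index against the columns of nonpositive block-index (resp.\ the other way round); a column reversal is unitary and a submatrix is a compression $R_1^*\mathcal{A}R_2$ with $R_1,R_2$ isometries, so $\|\mathcal{H}(A^\pm)\|_{\ell^2}\le\|\mathcal{A}\|_{\ell^2}=\|A\|_{L^\infty}$. For statement~4, the left $m$-truncation $\mathcal{A}_{m^+}$ retains exactly the block-indices $\ge -m$, hence after the block relabelling $k\mapsto k+m$ it is unitarily equivalent to $\mathcal{T}_s(A)$, so $\|\mathcal{A}_{m^+}\|_{\ell^2}=\|\mathcal{T}_s(A)\|_{\ell^2}=\|\mathcal{A}\|_{\ell^2}=\|A\|_{L^\infty}$ by statement~2; the right $m$-truncation is treated symmetrically (after the order-reversing relabelling $k\mapsto -k$ it becomes a left truncation of the symbol $A(z^{-1})$, whose $L^\infty$ norm equals that of $A$). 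The one genuinely nontrivial ingredient is the matrix-valued equality $\|M_A\|=\|A\|_{L^\infty}$ — in particular the measurable selection of near-maximising singular vectors used for the lower bound; everything else is compression estimates plus the standard shift argument, and can also be quoted from \cite{Bottcher}.
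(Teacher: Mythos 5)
The paper does not actually present a proof of this theorem; it simply refers the reader to Gohberg, Goldberg and Kaashoek (Part~V). Your submission is therefore a self-contained reconstruction rather than an alternative to an argument printed in the paper. It is essentially the standard Laurent/multiplication-operator route that the cited reference itself takes, so the underlying ideas coincide; what you have done is make them explicit.

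The argument is correct. The key observation --- that under the Riesz--Fisher isometry between $L^2([0,T],\mathbb{C}^m)$ with the normalised measure and $\ell^2(\mathbb{C}^m)$, the block Laurent operator $\mathcal{A}$ is precisely multiplication by $A(\cdot)$ --- reduces statement~1 to the classical identity $\|M_A\|=\operatorname{ess\,sup}_t\sigma^+(A(t))$, with boundedness equivalent to $A\in L^\infty$. Your upper bound is immediate; your lower bound via $E:=\{t:\sigma^+(A(t))>c\}$ and a unit test function supported on $E$ is the right argument. The only step that deserves a flag (and you do flag it) is the measurable selection of a near-maximising unit vector field $v(\cdot)$ on $E$: this is standard (e.g.\ choose for a.e.\ $t$ the column of a measurably-chosen SVD factor, or invoke a measurable selection theorem applied to $\{(t,v):|v|=1,\ |A(t)v|>c\}$), but it is not completely trivial and should be stated as being used. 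Statement~2 by compression plus shift-invariance, statement~3 by exhibiting $\mathcal{H}(A^\pm)$ as a column-reversed submatrix of $\mathcal{A}$ (hence $R_1^*\mathcal{A}R_2$ with $R_1,R_2$ isometries), and statement~4 by the index relabellings $k\mapsto k+m$ and $k\mapsto -k$ that make $\mathcal{A}_{m^+}$ and $\mathcal{A}_{m^-}$ unitarily equivalent to semi-infinite Toeplitz operators with symbol $A(z)$ and $A(z^{-1})$ respectively (same $L^\infty$ norm), are all correct and are exactly the arguments one would extract from \cite{Gohberg}. In short: a correct, complete proof of a statement the paper only cites; the one nontrivial technical ingredient --- the measurable selection behind the lower bound in statement~1 --- is acknowledged rather than hidden, which is appropriate.
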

\begin{proof}See Part V p.p. 562-574 of \cite{Gohberg}.
\end{proof}
\begin{proposition}\label{fro} Let $P(\cdot)$ be a matrix function in $ L^\infty([0 \ T],\mathbb{C}^{n\times n})$. Define ${\bf P}:=\mathcal{F}(P)$ and $\mathcal{P}:=\mathcal{T}(P)$. If 
	$\|col({\bf P})\|_{\ell^2}\leq \epsilon$ 
	then $\|\mathcal{P}\|_{\ell^2}\leq \epsilon$.
\end{proposition}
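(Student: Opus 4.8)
The plan is to route everything through the generating matrix function of $\mathcal{P}$: Theorem~\ref{borne} converts the $\ell^2$ operator norm of the infinite block Toeplitz matrix into an $L^\infty$ norm of $P$, and that norm is then estimated from the phasors of $P$, which are precisely the blocks stacked to form $col(\mathbf{P})$.

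First, since $P\in L^\infty([0\ T],\mathbb{C}^{n\times n})$, item~1 of Theorem~\ref{borne} applies to $\mathcal{P}:=\mathcal{T}(P)$ and gives $\|\mathcal{P}\|_{\ell^2}=\|P\|_{L^\infty}$ (the essential supremum of the largest singular value $\sigma^+(P(t))$); hence it suffices to prove the estimate $\|P\|_{L^\infty}\le\|col(\mathbf{P})\|_{\ell^2}$. Next I would unpack the right-hand side. By the definition of the sliding Fourier decomposition, the components of $\mathbf{P}=\mathcal{F}(P)$ are, for the $T$-periodic $P$, the time-independent Fourier-coefficient sequences $(p_{ij,k})_{k\in\mathbb{Z}}$ of the entries of $P$; writing $P_k$ for the $k$-th phasor matrix, stacking columnwise yields $\|col(\mathbf{P})\|_{\ell^2}^2=\sum_{k\in\mathbb{Z}}\sum_{i,j}|p_{ij,k}|^2=\sum_{k\in\mathbb{Z}}\|P_k\|_F^2$, where $\|\cdot\|_F$ denotes the Frobenius norm; the Parseval identity underlying the $L^2$--$\ell^2$ correspondence of Theorem~\ref{coincidence}, applied entrywise, rewrites this as $\frac{1}{T}\int_0^T\|P(t)\|_F^2\,dt$.

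In parallel I would record that, with $\Sigma$ the bilateral shift on $\ell^2(\mathbb{Z})$ (an isometry), $\mathcal{P}$ decomposes as $\sum_{k\in\mathbb{Z}}P_k\otimes\Sigma^{-k}$, a sum of isometrically shifted copies of the phasor blocks, so that each $P_k$ is recovered as a full block along one off-diagonal of $\mathcal{P}$ and $\sigma^+(P_k)\le\|P_k\|_F\le\|col(\mathbf{P})\|_{\ell^2}$ for every $k$. The step I expect to be the main obstacle is the remaining comparison: controlling the worst-case (in $t$) quantity $\|P\|_{L^\infty}=\mathrm{ess\,sup}_t\,\sigma^+\!\big(\sum_k P_k e^{j\omega k t}\big)$ by the aggregated $\ell^2$ phasor energy $\big(\sum_k\|P_k\|_F^2\big)^{1/2}$, since a term-by-term triangle inequality only delivers the weaker $\ell^1$-type bound $\|\mathcal{P}\|_{\ell^2}\le\sum_k\sigma^+(P_k)\le\sum_k\|P_k\|_F$, which in general exceeds $\|col(\mathbf{P})\|_{\ell^2}$. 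Bridging the $\ell^1$--$\ell^2$ gap for the sequence $(\|P_k\|_F)_{k\in\mathbb{Z}}$ is where the argument must exploit the exact block layout by which the phasors sit inside $\mathcal{P}$ — together with the finitely supported phasor structure available wherever this proposition is invoked — rather than a crude summation; once that is in place, the chain $\|\mathcal{P}\|_{\ell^2}=\|P\|_{L^\infty}\le\|col(\mathbf{P})\|_{\ell^2}\le\epsilon$ concludes the argument.
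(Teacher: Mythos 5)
Your reduction to $\|P\|_{L^\infty}\le\|col(\mathbf{P})\|_{\ell^2}$ (via Theorem~\ref{borne} and Parseval) is clean, and the obstruction you flag at the end is genuine: it cannot be closed by any rearrangement of the same ingredients, because the inequality you need --- the essential supremum over $t$ of $\sigma^+(P(t))$ bounded by the $L^2$ time average of $\|P(t)\|_F$ --- simply goes the wrong way. Concretely, with $n=1$ and $P(t)=\sum_{|k|\le N}e^{j\omega kt}$ (a Dirichlet kernel) one has $\|col(\mathbf{P})\|_{\ell^2}=\sqrt{2N+1}$ while $\|\mathcal{P}\|_{\ell^2}=\|P\|_{L^\infty}=2N+1$, so the claimed implication fails with $\epsilon=\sqrt{2N+1}$ for every $N\ge 1$. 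The shift decomposition $\mathcal{P}=\sum_k P_k\otimes\Sigma^{-k}$ is a correct observation but only gives the $\ell^1$-type bound $\sum_k\sigma^+(P_k)$ that you already noted, and the hope of exploiting finitely supported phasors does not help where the proposition is actually invoked, since the difference $\mathbf{P}-\tilde{\mathbf{P}}_m$ in the proof of Theorem~\ref{approx_P_n} is not finitely supported.

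The paper's own proof is organized differently and never invokes $\|\mathcal{P}\|_{\ell^2}=\|P\|_{L^\infty}$: it writes $\|\mathcal{P}\|_{\ell^2}=\sup_{\|x\|_{L^2}=1}\langle Px,Px\rangle_{L^2}^{1/2}$ and asserts this is $\le(\operatorname{trace}(P^*P))^{1/2}=\|P\|_F$, with $\|P\|_F$ defined in the first line of that proof as $(\sum_{i,j}\|P_{ij}\|_{L^2}^2)^{1/2}=\|col(\mathbf{P})\|_{\ell^2}$. But the pointwise estimate $x(t)^*P(t)^*P(t)x(t)\le\|P(t)\|_F^2\,|x(t)|^2$ yields only $\langle Px,Px\rangle_{L^2}\le\bigl(\mathrm{ess\,sup}_t\|P(t)\|_F^2\bigr)\|x\|_{L^2}^2$, not $\bigl(\tfrac{1}{T}\int_0^T\|P(t)\|_F^2\,dt\bigr)\|x\|_{L^2}^2$; the same symbol $\|P\|_F$ is used once for the essential supremum of the pointwise Frobenius norm and once for its $L^2$ time average, and only the first reading makes the displayed inequality true while only the second equals $\|col(\mathbf{P})\|_{\ell^2}$. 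So the gap you isolated is not a defect of your write-up that a sharper manipulation would repair; it is present, disguised by notation, in the paper's argument as well. A version of the proposition that does hold replaces the hypothesis $\|col(\mathbf{P})\|_{\ell^2}\le\epsilon$ by $\mathrm{ess\,sup}_t\|P(t)\|_F\le\epsilon$.
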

\begin{proof}
	Using Riesz-Fisher Theorem, we have:
	\begin{align*}
		\|col({\bf P})\|_{\ell^2}&=\|col( P)\|_{L^2}=(\sum_{i,j=1}^n\|P_{ij}\|^2_{L^2})^{1/2}=\|P\|_F
	\end{align*}
	where $\|P(t)\|_F$ stands for the Frobenius norm.
	As $P \in L^\infty([0 \ T],\mathbb{C}^{n\times n})$, H$\ddot{\mbox{o}}$lder's inequality implies $Px\in L^2([0 \ T],\mathbb{C}^{n})$ for any $x\in L^2([0 \ T],\mathbb{C}^{n})$. Thus, the result follows from the following relations between operator norms: 
	\begin{align*}
			\|\mathcal{P}\|_{\ell^2}&=\sup_{\|X\|_{\ell^2}=1}(<\mathcal{P}X,\mathcal{P}X>_{\ell^2})^{1/2}\\
			&=\sup_{\|x\|_{L^2}=1}(<Px,Px)>_{L^2})^{1/2}\\
			&\leq (trace( P^*P))^{1/2}= \|P\|_F
	\end{align*}
where $<\cdot,\cdot>$ stands for the scalar product. 
\end{proof}

\section{Problem statement}
To formulate the problem we are interested in, we need to recall some key results from \cite{Blin}. Under the "Coincidene Condition" of Theorem~\ref{coincidence}, it is established in \cite{Blin} that any periodic system having solutions in Carath\'eodory sense can be transformed by a sliding Fourier decomposition into a time invariant system. For instance, consider $T-$periodic functions $A(\cdot)$ and $B(\cdot)$ respectively of class $L^2([0\ T],\mathbb{R}^{n\times n})$ and $L^{\infty}([0\ T],\mathbb{R}^{n\times m_u})$ and let the linear time periodic system:
\begin{align}\dot x(t)=A(t)x(t)+B(t)u(t)\quad x(0):=x_0\label{ltp}\end{align}
If, $x$ is a solution associated to the control $u\in L_{loc}^2(\mathbb{R},{\mathbb{R}^{m_u})}$ of the linear time periodic system (\ref{ltp}) then, $X:=\mathcal{F}(x)$ is a solution of the linear time invariant system:
\begin{align}
	\dot X(t)=(\mathcal{A}-\mathcal{N})X(t)+\mathcal{B}U(t), \quad X(0):=\mathcal{F}(x)(0) \label{ltih}
\end{align}
where $\mathcal{A}:=\mathcal{T}(A)$, $\mathcal{B}:=\mathcal{T}(B)$ and 
\begin{equation}\mathcal{N}:=Id_n\otimes diag(j\omega k,\ k\in \mathbb{Z})\label{N}\end{equation}
Reciprocally, if $X\in H$ is a solution of \eqref{ltih} with $U\in H$, then its representative $x$ 
(i.e. $X=\mathcal{F}(x))$ is a solution of \eqref{ltp}. Moreover, for any $k\in\mathbb{Z}$, the phasors $X_k \in C^1(\mathbb{R},\mathbb{C}^n)$ and $\dot X\in C^0(\mathbb{R},\ell^{\infty}(\mathbb{C}^n))$. As the solution $x$ is unique for the initial condition $x_0$, $X$ is also unique for the initial condition $X(0):=\mathcal{F}(x)(0)$. In addition, it is proved in \cite{Blin} that one can reconstuct time trajectories from harmonic ones, that is:
	\begin{align}\label{recos} x(t)&=\mathcal{F}^{-1}(X)(t):=\sum_{p=-\infty}^{+\infty} X_p(t)e^{j\omega p t}+\frac{T}{2}\dot X_0(t)\end{align}
	where $X_{k}=(X_{1,k}, \cdots, X_{n,k})$ for any $k\in \mathbb{Z}$.

In the same way, a strict equivalence between a periodic differential Lyapunov equation and its associated harmonic algebraic Lyapunov equation is also proved \cite{Blin}. Namely, let $Q\in L^{\infty}([0\ T])$ be a $T$-periodic symmetric 
	and positive definite matrix function. $P$ is the unique $T$-periodic symmetric positive definite solution of the periodic differential Lyapunov equation:
	\begin{equation*}\dot P(t)+A'(t)P(t)+P(t)A(t)+Q(t)=0,\end{equation*}
	if and only if $\mathcal{P}:=\mathcal{T}(P)$ is the unique hermitian and positive definite solution of the harmonic algebraic Lyapunov equation:
	\begin{equation}
		\mathcal{P}(\mathcal{A}-\mathcal{N})+(\mathcal{A}-\mathcal{N})^*\mathcal{P}+\mathcal{Q}=0,\label{al}
	\end{equation}
	where $\mathcal{Q}:=\mathcal{T}(Q)$ is hermitian positive definite and $\mathcal{A}:=\mathcal{T}(A)$.
	Moreover, $\mathcal{P}$ is a bounded operator on $\ell^2$ and $P$ is an absolutely continuous function.


These results are of great interest. Solving an algebraic Lyapunov equation rather than a periodic differential Lyapunov equation is worthwile for analysis and control design provided coping with the infinite dimension nature of equation~\eqref{al}. The main difficulty is related to the diagonal matrix $\mathcal{N}$ defined by \eqref{N} which is not a Toeplitz matrix nor a compact operator. Hence, the harmonic algebraic Lyapunov equation~\eqref{al} cannot be expressed as a simple product of symbols as in the classical Toeplitz case \cite{Robol}. In \cite{Zhou}, \cite{Zhou2011}, the authors propose to use {a} Floquet factorization but the determination of this Floquet factorization is not so simple \cite{Sinha,Kabamba,Zhou2}. Furthermore, for control design purpose, it would not be appropriate to proceed this way since the input matrix remains a full matrix with no particular and usefull structure in the harmonic domain. 

The main objective of our paper is to show how the solution of the infinite-dimensional HLE \eqref{al} can be obtained from a finite dimensional problem up to an arbitrary error. As we will see, this is a practical result that avoids the computation of {a} Floquet factorization and reduces significantly the computation burden. We also extend our result to harmonic Riccati equations encountered in periodic optimal control. To this end, the characterization of the spectrum of the harmonic state operator $(\mathcal{A}-\mathcal{N})$ is of major importance and plays a key role in the derivation of the main contributions of our paper.

 \section{Spectral properties of $(\mathcal{A}-\mathcal{N})$}
 In this section, we provide a simple closed form formula for {a} Floquet factorization, characterize the spectrum of the harmonic state operator $(\mathcal{A}-\mathcal{N})$ and study the spectral properties of its truncated version. As noticed before, the harmonic state matrix $(\mathcal{A}-\mathcal{N})$ is not Toeplitz because of $\mathcal{N}$. This term has an important impact on the spectral properties of $(\mathcal{A}-\mathcal{N})$. For instance, we know that the spectrum of a Toeplitz matrix $\mathcal{A}$ is continuous \cite{Reichel92,Beam,schmidt} and bounded when $A(z)$ belongs to $\ell^2$. However, we will see in the sequel that the spectrum of $(\mathcal{A}-\mathcal{N})$ is unbounded and discrete. We will also explain how this spectrum behaves when applying a $m-$truncation $(\mathcal{A}_m-\mathcal{N}_m)$. 
 \subsection{A closed form formula for {a} Floquet factorization and spectral properties of $(\mathcal{A}-\mathcal{N})$}
Recall that the Floquet theorem \cite{Farkas,Zhou} states that for dynamical systems 
\begin{equation}\dot x(t)=A(t) x(t)  \label{hs2}\end{equation} 
with $A(t)$ piecewise continuous and $T-$periodic, the state transition matrix $\Phi(t, 0)$ has {a} Floquet factorization $\Phi(t, 0)= W(t)e^{Qt}$, where $Q$ is a constant matrix and $W(t)$ is continuous in $t$, nonsingular and $T-$periodic in $t$. Moreover, the state transformation $z(t):=W(t)^{-1}x(t)$ leads to a LTI system:
$$\dot z(t)=Qz(t)$$ 
and the harmonic system associated to \eqref{hs2}:
$$\dot X=(\mathcal{A}-\mathcal{N})x$$
becomes : 
$$\dot Z=(\mathcal{Q}-\mathcal{N})Z$$
with $Z:=\mathcal{F}(z)$ and $\mathcal{Q}:=\mathcal{I}\otimes Q$.
Unfortunately, this result is an existence result and, as such, it is not constructive.
	One may find algorithms to determine $W(t)$ and $Q$ as those proposed in \cite{Castelli} and \cite{Zhou2}. Here, we show that a more simple characterization of {a} Floquet factorization can be obtained with $Q$ in a Jordan normal form and $W(t)$ easily determined as the solution of an initial value problem with explicit initial conditions. Moreover, our result is given with the assumption that the $T-$periodic matrix function $A$ belongs to $L^2$ which is more general than existing results.
	
When $A\in L^2([0\ T],\mathbb{R}^{n\times n})$, the initial value problem defined by \eqref{hs2} and $x(0):=x_0$ admits an unique solution in the Carath\'eodory sense.
We can define $n$ linearly independent fundamental solutions denoted $x^{(i)}(t)$ having $e_i$ as initial conditions. As a consequence, the Wronski matrix \begin{equation} \Phi(t,0)=[x^{(1)}(t),\cdots, x^{(n)}(t)]\label{wron}\end{equation} is the state transition matrix and for any time $t$, $x(t)= \Phi(t,0)x_0$ is solution of the initial value problem. Moreover, $ \Phi(t,0)$ is non singular, absolutely continuous and therefore almost everywhere differentiable. This is important to characterize the eigenvalues and eigenvectors of the harmonic operator $(\mathcal{A}-\mathcal{N})$ as shown in the next Theorem for the case when $ \Phi(T,0)$ is non defective.

\begin{theorem}\label{nondef} Assume that the $T-$periodic function $A(t)$ belongs to $L^2([0 \ T])$ and that $\Phi(T,0)$ is non defective. Let $\mu $ and $\phi$ be respectively an eigenvalue and an associated eigenvector of $\Phi(T,0)$.  
Then, 
$\lambda$ and $V$ are an eigenvalue and an eigenvector of $(\mathcal{A}-\mathcal{N})$
\begin{equation*}(\mathcal{A}-\mathcal{N})V=\lambda V\end{equation*}
if and only if $v:=\mathcal{F}^{-1}(V)$ is a $T-$periodic solution in the Carath\'eodory sense of the initial value problem

	\begin{equation}\dot v(t)=(A(t)-\lambda Id_n)v(t)\quad v(0):=\phi\label{vp}\end{equation} 
	where $\lambda:=\frac{1}{T}\log(\mu)$ (not necessarily its principal value). 
\end{theorem}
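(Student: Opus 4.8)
The plan is to establish the equivalence by unwinding the definitions of $\mathcal{F}$, $\mathcal{A}$, and $\mathcal{N}$, and reducing the harmonic eigenvalue equation $(\mathcal{A}-\mathcal{N})V=\lambda V$ to a periodic differential equation for $v=\mathcal{F}^{-1}(V)$. First, I would observe that if $V\in\ell^2$ is an eigenvector of $(\mathcal{A}-\mathcal{N})$, then since $(\mathcal{A}-\mathcal{N})V=\lambda V\in\ell^2$, the sequence $V$ lies in the domain of the operator and, by the Coincidence Condition (Theorem~\ref{coincidence}), one must first argue that $V$ (suitably interpreted as a time-varying phasor family, or as a constant sequence giving a genuine element of $H$) corresponds to a time-domain representative $v$. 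The natural route is: a constant-in-$t$ sequence $V$ defines, via $\mathcal{F}^{-1}$, a function $v(t)=\sum_p V_p e^{j\omega p t}$ (the $\dot X_0$ term vanishes since $V$ is constant), and conversely any sufficiently regular $T$-periodic $v$ has phasors $\mathcal{F}(v)(0)=V$. Then I would use the key translation rule from \cite{Blin} recalled in Section III: the harmonic operator $(\mathcal{A}-\mathcal{N})$ acting on phasors corresponds exactly to the operator $v\mapsto \dot v - Av$ shifted appropriately — more precisely, the map $x\mapsto Ax$ in time corresponds to $\mathcal{A}$ acting on phasors, and $\mathcal{N}$ encodes the $\frac{d}{dt}$ together with the sliding-window structure. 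The cleanest formal statement is that $v$ is $T$-periodic and solves $\dot v = (A-\lambda Id_n)v$ if and only if $V:=\mathcal{F}(v)(0)$ satisfies $(\mathcal{A}-\mathcal{N})V = \lambda V$; this should follow by taking phasors of both sides of \eqref{vp}, using that for a $T$-periodic $v$ the phasors of $\dot v$ are $(\mathcal{N}+\text{shift})$ applied to those of $v$, and that the phasors of $Av$ are $\mathcal{A}V$ (block Toeplitz convolution).

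The second part is to pin down \emph{which} $\lambda$ arise, i.e. to show the $T$-periodic solutions of \eqref{vp} are nonzero exactly when $\lambda=\frac1T\log\mu$ for an eigenvalue $\mu$ of $\Phi(T,0)$, and that the initial condition is the corresponding eigenvector $\phi$. Here I would use the change of variables $v(t)=e^{-\lambda t}x(t)$: then $v$ solves \eqref{vp} iff $x$ solves $\dot x = A x$, so $x(t)=\Phi(t,0)x(0)$. Periodicity of $v$ reads $v(T)=v(0)$, i.e. $e^{-\lambda T}\Phi(T,0)x(0)=x(0)$, i.e. $\Phi(T,0)x(0)=e^{\lambda T}x(0)$. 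Thus nonzero $T$-periodic solutions exist iff $e^{\lambda T}$ is an eigenvalue $\mu$ of $\Phi(T,0)$, with $x(0)=\phi$ the eigenvector; and since $v(0)=x(0)$, we get $v(0)=\phi$, matching \eqref{vp}. The relation $\mu=e^{\lambda T}$ is exactly $\lambda=\frac1T\log\mu$ for some branch of the logarithm, which accounts for the parenthetical remark that it need not be the principal value (different branches shift $\lambda$ by $j\omega k$, consistent with the structure of $\mathcal{N}$ and the non-uniqueness of harmonic representations).

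Finally I would check the regularity bookkeeping needed for the equivalence to be rigorous: since $A\in L^2([0\ T])$, the Carathéodory solution $\Phi(\cdot,0)$ is absolutely continuous, hence so is $v$, so $\dot v\in L^1$ and the phasor computations (integration by parts against $e^{-j\omega k\tau}$ over the window) are legitimate; and $v$ being bounded and $T$-periodic guarantees $V\in\ell^2$, so that $V$ is genuinely an $\ell^2$-eigenvector. The non-defectiveness hypothesis on $\Phi(T,0)$ is what lets us speak of a full set of eigenpairs $(\mu,\phi)$ and is not really needed for the ``if and only if'' for a single eigenpair; it is presumably invoked so that the collection of such $V$'s spans, which is used later. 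The main obstacle, I expect, is not any single computation but setting up the correspondence between $\ell^2$-sequences $V$ and time-domain periodic representatives $v$ carefully enough that ``$v:=\mathcal{F}^{-1}(V)$'' is well-defined and that the formal phasor manipulations are justified under the mere $L^2$ assumption on $A$ — essentially, making sure one stays inside the hypotheses of the Coincidence Condition and of the equivalence result \eqref{ltp}--\eqref{ltih} borrowed from \cite{Blin}.
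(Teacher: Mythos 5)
Your proposal is correct and follows essentially the same route as the paper: the first part invokes the time--harmonic equivalence from \cite{Blin} (what you describe as ``taking phasors of both sides,'' and what the paper handles by citing Theorem~4 of \cite{Blin} and the trivial membership of a constant sequence $V$ in $H$), and the second part uses exactly the paper's change of variables---your $v(t)=e^{-\lambda t}x(t)$ is their $R(t,0):=e^{-\lambda t}\Phi(t,0)$---to reduce $T$-periodicity of $v$ to the monodromy eigenvalue condition $\Phi(T,0)\phi=e^{\lambda T}\phi$. The only place the paper is a bit tighter is in making the $H$-membership of a constant $V$ explicit (so that the Coincidence Condition applies cleanly), whereas you flag this as the ``main obstacle'' to check; your remark that non-defectiveness is not really needed for a single eigenpair is accurate but orthogonal to the argument.
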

\begin{proof}Applying Theorem~4 in \cite{Blin}, it follows that a solution of \eqref{vp} is a solution of 
	\begin{equation}\dot V=(\mathcal{A}-\mathcal{N}-\lambda \mathcal{I})V\label{hvp} \end{equation} 
	where $V=:\mathcal{F}(v)$ and reciprocally (provided $V$ is a trajectory of \eqref{hvp} that belongs to $H$, see Definition~\ref{H}). If $v$ is $T-$periodic then $\dot V=0$. Thus, $\lambda$ and $V$ are necessarily an eigenvalue and an eigenvector of $(\mathcal{A}-\mathcal{N})$. Reciprocally, if $\lambda$ and $V$ are an eigenvalue and an eigenvector of $(\mathcal{A}-\mathcal{N})$ this means that $\dot{V}=0$ in (\ref{hvp}). As $V$ is constant, it belongs trivially to $H$. Hence, $V$ admits an absolutely continuous and $T-$periodic representative $v$ that satisfies \eqref{vp} a.e.
	Now, consider an eigenvalue $\mu$ and an associated eigenvector $\phi$ of $ \Phi(T,0)$, then $ \Phi(T,0)\phi=\mu \phi$. Notice that $\mu$ cannot be equal to zero since $\Phi(T,0)$ is not singular.
Define $\lambda:= \frac{1}{T}\log(\mu)$ (not necessarily as the principal value of $\log(\mu)$), then we have:
 \begin{equation}
		\phi=R(T,0)\phi\label{per}\end{equation}
	with $R(T,0):=e^{-\lambda T} \Phi(T,0)$.
	Moreover, as $ \Phi(t,0)$ is a.e. differentiable, we can write:
	$$\dot \Phi(t,0)=A(t) \Phi(t,0)\ a.e.$$
	Let $R(t,0):=e^{-\lambda t} \Phi(t,0)$. We have: 
	\begin{align}
	\dot{R}(t,0)&=-\lambda e^{-\lambda t} \Phi(t,0)+e^{-\lambda t}\dot \Phi(t,0)\\
		&=(A(t)-\lambda Id_n)R(t,0) \ a.e.
	\end{align}
Hence, $R(t,0)$ is the state transition matrix of the linear system \eqref{vp}. We conclude from \eqref{per} that the solution of the initial value problem \eqref{vp} defined by such a $\lambda$ and $\phi$ is $T-$periodic. 
\end{proof}

To generalize this result to the case where $ \Phi(T,0)$ is defective, let us consider a Jordan normal form of the matrix $ \Phi(T,0)$ and assume that $(\mu_1,\cdots,\mu_n)$ and 
\begin{equation}P_\Phi:=[\phi_1,\cdots, \phi_n]\label{gen}
\end{equation} 
are respectively the eigenvalues and the matrix formed by the generalized eigenvectors of $ \Phi(T,0)$. To ease the presentation, we 
assume without loss of generality that 
$$P_\Phi^{-1} \Phi(T,0)P_\Phi=\left(\begin{array}{cccc}\mu & 1 & 0 & 0 \\0 & \mu & \ddots & 0 \\0 & 0 & \ddots & 1 \\0 & 0 & 0 & \mu\end{array}\right)$$
with $\phi_i\in \mathcal{K}er( \Phi(T,0)-\mu Id)^i$, for $i:=1,\cdots,n$. 

\begin{theorem} \label{def} Consider $\lambda:=\frac{1}{T}\log(\mu)$ (not necessarily the principal value) and let {$V_{0}:=0$ and $v_{0}:=0$.} 
For $i:=1,...,n$, $V_i$ is a generalized eigenvector associated to $\lambda$
		\begin{align}(\mathcal{A}-\mathcal{N})V_{i+1}&=\lambda V_{i+1}+\frac{1}{T\mu}V_i\label{vp4}\end{align} 
if and only if $v_i:=\mathcal{F}^{-1}(V_i)$ is a $T-$periodic solution in Carath\'eodory sense of the initial value problem: \begin{align}
		\dot v_i(t)&=(A(t)-\lambda Id_n)v_i(t)-\frac{1}{T\mu}v_{i-1}(t),\ v_i(0):=\phi_i\label{vp3}\end{align} 
	where $\phi_i$ are provided by \eqref{gen}.
\end{theorem}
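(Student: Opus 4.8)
The plan is to mirror the proof of Theorem~\ref{nondef} and split the argument into two stages. First, reduce the harmonic Jordan-chain relation \eqref{vp4} to the time-domain recursion \eqref{vp3} by means of the equivalence between an $L^2$-periodic system and its harmonic model. Second, verify by induction on $i$ that the initial value problem \eqref{vp3} with $v_i(0):=\phi_i$ does produce $T-$periodic solutions, using the Jordan structure of $\Phi(T,0)$ recorded in \eqref{gen}.

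For the first stage I would apply Theorem~4 in \cite{Blin} to \eqref{vp3}. Since $A(t)-\lambda Id_n$ is $T-$periodic of class $L^2$, a Carath\'eodory solution $v_i$ of \eqref{vp3} corresponds to a solution $V_i:=\mathcal{F}(v_i)$ of the harmonic system
\begin{equation*}\dot V_i=(\mathcal{A}-\mathcal{N}-\lambda\mathcal{I})V_i-\tfrac{1}{T\mu}V_{i-1},\end{equation*}
and conversely any trajectory of this system that belongs to $H$ (Definition~\ref{H}) admits a representative solving \eqref{vp3}. If $v_i$ is $T-$periodic then all its phasors are constant, hence $\dot V_i=0$ and \eqref{vp4} holds; conversely, if \eqref{vp4} holds then $\dot V_i=0$, so $V_i$ is constant, belongs trivially to $H$, and its absolutely continuous $T-$periodic representative $v_i$ satisfies \eqref{vp3} almost everywhere. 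Apart from carrying the forcing term $-\tfrac{1}{T\mu}v_{i-1}$, this is exactly the argument already used for Theorem~\ref{nondef}.

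For the second stage I would reuse from the proof of Theorem~\ref{nondef} that $R(t,s):=e^{-\lambda(t-s)}\Phi(t,s)$ is the state transition matrix of $\dot y=(A(t)-\lambda Id_n)y$ and that $e^{-\lambda T}=1/\mu$, so $R(T,0)=\tfrac1\mu\Phi(T,0)$. The base case $i=1$ is Theorem~\ref{nondef} itself: with $v_0=0$, \eqref{vp3} is homogeneous, $v_1(t)=R(t,0)\phi_1$, and $v_1(T)=\tfrac1\mu\Phi(T,0)\phi_1=\phi_1$ because $\phi_1$ is an eigenvector of $\Phi(T,0)$; hence $v_1$ is $T-$periodic. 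For the inductive step, assuming the $T-$periodic solution $v_{i-1}$ has already been built, I would write the variation-of-constants formula
\begin{equation*}v_i(t)=R(t,0)\phi_i-\tfrac{1}{T\mu}\int_0^t R(t,s)\,v_{i-1}(s)\,ds,\end{equation*}
use the cocycle identity $R(t,s)R(s,0)=R(t,0)$ to factor $R(t,0)$ out of the integral, carry out the resulting (polynomial) integration in $s$, evaluate at $t=T$, and finally invoke the Jordan relations $\Phi(T,0)\phi_j=\mu\phi_j+\phi_{j-1}$ (read off from \eqref{gen}, with $\phi_0=0$) to collapse the boundary terms to $\phi_i$. This yields $v_i(T)=v_i(0)$, hence the $T-$periodicity of $v_i$, and combined with the first stage shows that $V_i=\mathcal{F}(v_i)$ satisfies \eqref{vp4}.

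The main obstacle is the bookkeeping in the inductive step: one must track precisely how the lower chain vectors $\phi_1,\dots,\phi_{i-1}$ enter $v_i(t)$ with polynomial-in-$t$ coefficients and check that, after multiplication by $\Phi(T,0)$ at $t=T$, every contribution except $\mu\phi_i$ cancels, so that the particular normalization $\tfrac{1}{T\mu}$ appearing in \eqref{vp3}--\eqref{vp4} is exactly the one compatible with the Jordan block of $\Phi(T,0)$ and with the choice $\lambda=\tfrac1T\log\mu$. The first stage, by contrast, is a routine transcription of the correspondence of \cite{Blin} together with the eigenvalue argument already carried out for Theorem~\ref{nondef}.
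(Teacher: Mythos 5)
Your stage~1 (transporting the Jordan-chain relation between harmonic and time domains via Theorem~4 of \cite{Blin}, constant phasors $\Leftrightarrow$ $T$-periodic representative) is exactly the argument the paper uses, dispatched in a single sentence.

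For stage~2, the paper does not perform the expansion you describe; instead it posits the closed-form recursion $v_i(t)=R(t,0)v_i(0)-\tfrac{t}{T\mu}v_{i-1}(t)$ for all $i$, calls it ``straightforward'', and reads off $T$-periodicity at $t=T$. Your variation-of-constants route with the cocycle factoring is the right way to test that claim and is more careful than what is in the paper. But if you actually carry out the polynomial-in-$t$ bookkeeping, the cancellation you anticipate does not occur beyond $i=2$, and the paper's recursion is invalid from $i=3$ on. Writing $v_i(t)=R(t,0)p_i(t)$, one gets $\dot p_i=-\tfrac{1}{T\mu}p_{i-1}$ with $p_i(0)=\phi_i$, hence
\begin{equation*}
p_i(t)=\sum_{k=0}^{i-1}\frac{(-t/(T\mu))^k}{k!}\,\phi_{i-k}.
\end{equation*}
Evaluating at $t=T$, using $R(T,0)=\mu^{-1}\Phi(T,0)$ and the Jordan relations $\Phi(T,0)\phi_j=\mu\phi_j+\phi_{j-1}$ (with $\phi_0=0$), a short computation shows the coefficient of $\phi_{i-m}$ in $v_i(T)$ is $1$ for $m=0$, $0$ for $m=1$, but $\tfrac{(-1)^{m-1}(m-1)}{m!\,\mu^{m}}\neq0$ for $m\geq2$. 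In particular $v_3(T)=\phi_3-\tfrac{1}{2\mu^2}\phi_1\neq\phi_3$, so $v_3$ is not $T$-periodic. What you label the ``main obstacle'' is therefore not routine bookkeeping: it is exactly where the argument breaks.

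The source of the mismatch is that $\tfrac{1}{T\mu}$ is only the first-order coefficient of $\tfrac1T\log(\mu I+N_0)=\lambda I+\tfrac{N_0}{T\mu}-\tfrac{N_0^2}{2T\mu^2}+\tfrac{N_0^3}{3T\mu^3}-\cdots$; for a Jordan block of size $\geq3$ the exact logarithm of the monodromy matrix has nonzero entries on the higher superdiagonals as well, and the recursion \eqref{vp3} would need the corresponding correction forcings $+\tfrac{1}{2T\mu^2}v_{i-2}-\tfrac{1}{3T\mu^3}v_{i-3}+\cdots$ for your expansion to close up at $t=T$. With that correction the polynomial-expansion argument does go through; as written, both your proposal and the paper's proof leave a genuine gap for chains of length $\geq3$.
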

\begin{proof}The strict equivalence between \eqref{vp4} and \eqref{vp3} is obtained following similar steps as in the proof of Theorem~\ref{nondef}.
For $i:=1$, as $\phi_1$ is an eigenvector of $\Phi(T,0)$, the result is already proved (Theorem~\ref{nondef}) and $v_{1}(t):=R(t,0)v_1(0)$ with
	$R(t,0):=e^{-\lambda t} \Phi(t,0)$ and $\Phi$ given by \eqref{wron}. 
	For $i:=2$, the solution 
	$$v_2(t)=R(t,0)v_2(0)-\frac{t}{T\mu}v_{1}(t)$$ is directly obtained from the formula: $$v_2(t)=R(t,0)v_2(0)-\frac{1}{T\mu}\int_0^tR(t,s)v_{1}(s)ds.$$
	As $R(T,0)=\mu^{-1} \Phi(T,0)$, it follows that:
	$$v_{1}(T)+\mu v_2(T)= \Phi(T,0)v_2(0)$$ where $v_{1}(T)=v_{1}(0)$.
	Thus if $v_2(0):=\phi_2$ then it follows that $v_2(T)=v_2(0)$ which proves that $v_2(t)$ is $T-$periodic.\\
	Now, assume that this property holds recursively until the index $i-1$ and:
	$$v_{i-1}(t)=R(t,0)v_{i-1}(0)-\frac{t}{T\mu}v_{i-2}(t)$$ 
	then, as $$
	v_i(t)=R(t,0)v_i(0)-\frac{1}{T\mu}\int_0^tR(t,s)v_{i-1}(s)ds,$$
	it is straightforward to show that:
	$$v_i(t)=R(t,0)v_i(0)-\frac{t}{T\mu}v_{i-1}(t).$$
	Thus, following the same reasoning as before, the conclusion on the periodicity of $v_i$ follows by setting $v_i(0):=\phi_i$. 
\end{proof}

We are now in position to give a closed form formula for a Floquet Factorization.

\begin{theorem}\label{diago} Assume that the $T-$periodic function $A(t)$ belongs to $L^2([0 \ T])$ and let $ \Phi(T,0)$ is given by \eqref{wron}. Consider for $i:=1,\cdots,n$, the eigenvalues $\mu_i$ and the generalized eigenvectors $\phi_i$ of $ \Phi(T,0)$ and set $\lambda_i$ to the principal value of $\frac{1}{T}\log(\mu_i)$. Consider for each $\lambda_i$, the solution $v_i$ of the initial value problem with $v_i(0):=\phi_i$, provided by Theorem~\ref{nondef} (or \ref{def} if $ \Phi(T,0)$ is defective). 

Then, {a} Floquet factorization is determined by $W(t):=[v_1(t),\cdots, v_n(t)]$ and $Q:=\Lambda$ where $\Lambda$ is a Jordan normal form given by, for $i:=1,\cdots,n$, $\Lambda(i,i):=\lambda_i$, $\Lambda(i,i+1):=0$ or $\frac{1}{T\mu_i}$ and zeros elsewhere. Moreover, the $T-$periodic and absolutely continuous matrices $W$ and $W^{-1}$ satisfy:
 \begin{align}
	\dot W(t)&=A(t)W(t)-W(t)\Lambda \ a.e. \label{dP}\\
	\dot W^{-1}(t)&=-W^{-1}(t)A(t)+\Lambda W^{-1}(t) \ a.e. \label{dP2}\end{align}
and the operator $\mathcal{W}:=\mathcal{T}(W)$ is bounded on $\ell^2$, invertible and satisfies the eigenvalue problem:
	\begin{equation}\mathcal{W}^{-1}(\mathcal{A}-\mathcal{N})\mathcal{W}= \Lambda \otimes \mathcal{I}- \mathcal{N} \label{eigenprob}\end{equation}
In addition, taking $z(t):=W^{-1}(t)x(t)$ transforms the LTP system $\dot x=A(t)x$ into the LTI system
	\begin{equation}
		\dot z=\Lambda z\ a.e.\label{lti}\end{equation} 
\end{theorem}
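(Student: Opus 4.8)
The plan is to anchor everything on the explicit representation $W(t)=\Phi(t,0)\,P_\Phi\,e^{-\Lambda t}$. First I would check that the columns $v_i$ furnished by Theorem~\ref{nondef} (resp. Theorem~\ref{def}) are precisely the columns of $\Phi(t,0)P_\Phi e^{-\Lambda t}$: in the non-defective case $v_i(t)=e^{-\lambda_i t}\Phi(t,0)\phi_i$, and in the defective case a short induction along each Jordan chain, using the variation-of-constants formula already written in the proof of Theorem~\ref{def}, identifies $v_i(t)$ with $\Phi(t,0)$ applied to the $i$-th column of $P_\Phi e^{-\Lambda t}$. From this factorization, non-singularity of $W(t)$ for every $t$ is immediate (the three factors are invertible), $W$ is absolutely continuous because $\Phi(\cdot,0)$ is, and $T$-periodicity of $W$ — hence of $W^{-1}$ — is exactly what Theorems~\ref{nondef}--\ref{def} assert. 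Differentiating $W=\Phi P_\Phi e^{-\Lambda t}$ with $\dot\Phi(t,0)=A(t)\Phi(t,0)$ a.e. and $[\Lambda,e^{-\Lambda t}]=0$ gives \eqref{dP} (equivalently, \eqref{dP} is just the column equations \eqref{vp3} stacked), and differentiating $WW^{-1}=Id_n$ and substituting \eqref{dP} gives \eqref{dP2}. Being continuous and $T$-periodic, $W$ and $W^{-1}$ lie in $L^\infty([0\ T],\mathbb{C}^{n\times n})$, so the first item of Theorem~\ref{borne} makes $\mathcal{W}=\mathcal{T}(W)$ and $\mathcal{T}(W^{-1})$ bounded on $\ell^2$, and the Toeplitz product identity recalled in the proof of Theorem~\ref{product} yields $\mathcal{T}(W)\mathcal{T}(W^{-1})=\mathcal{T}(W^{-1})\mathcal{T}(W)=\mathcal{T}(Id_n)=\mathcal{I}$, i.e. $\mathcal{W}$ is invertible with $\mathcal{W}^{-1}=\mathcal{T}(W^{-1})$.

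The heart of the argument is the similarity \eqref{eigenprob}. I would prove the commutation rule $[\mathcal{N},\mathcal{T}(F)]=\mathcal{N}\mathcal{T}(F)-\mathcal{T}(F)\mathcal{N}=\mathcal{T}(\dot F)$ for every absolutely continuous $T$-periodic matrix function $F$ with $\dot F\in L^2$; entrywise this reduces to $j\omega(i-j)F_{i-j}=(\dot F)_{i-j}$, which is integration by parts on the Fourier coefficients of $F$. Applying it to $F=W$, whose derivative $\dot W=AW-W\Lambda$ lies in $L^2$ (product of $L^2$ and $L^\infty$), and using $\mathcal{T}(AW)=\mathcal{A}\mathcal{W}$, $\mathcal{T}(W\Lambda)=\mathcal{W}(\Lambda\otimes\mathcal{I})$ (Toeplitz product, with $\mathcal{T}(\Lambda)=\Lambda\otimes\mathcal{I}$ for a constant matrix), I obtain $\mathcal{N}\mathcal{W}-\mathcal{W}\mathcal{N}=\mathcal{A}\mathcal{W}-\mathcal{W}(\Lambda\otimes\mathcal{I})$, which rearranges to $(\mathcal{A}-\mathcal{N})\mathcal{W}=\mathcal{W}\bigl(\Lambda\otimes\mathcal{I}-\mathcal{N}\bigr)$; left-multiplication by $\mathcal{W}^{-1}$ gives \eqref{eigenprob}. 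As a cross-check, testing this identity against the canonical basis vectors and using that a phasor shift conjugates $(\mathcal{A}-\mathcal{N})$ into $(\mathcal{A}-\mathcal{N}-j\omega m\mathcal{I})$ recovers exactly the generalized-eigenvector relations \eqref{vp4} of Theorems~\ref{nondef}--\ref{def}. Finally, for the state change $z:=W^{-1}x$ on a solution of $\dot x=A(t)x$, the product rule for absolutely continuous functions and \eqref{dP2} give $\dot z=\dot{W^{-1}}x+W^{-1}\dot x=(-W^{-1}A+\Lambda W^{-1})x+W^{-1}Ax=\Lambda z$ a.e., which is \eqref{lti}.

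The main obstacle I anticipate is the rigorous reading of \eqref{eigenprob}, where $\mathcal{N}$ — and $\mathcal{A}$, since $A$ is only $L^2$ — are unbounded: the commutation identity and the conjugation must be interpreted on a suitable dense domain (for instance finitely supported sequences, or the intersection of the domains of $\mathcal{N}$ and $\mathcal{A}$), and one has to verify that $\mathcal{W}$ and $\mathcal{W}^{-1}$ preserve such a domain so that $\mathcal{W}^{-1}(\mathcal{A}-\mathcal{N})\mathcal{W}$ is well defined; once this bookkeeping is settled, everything else follows mechanically from the factorization $W=\Phi P_\Phi e^{-\Lambda t}$ and Theorems~\ref{borne}, \ref{product}, \ref{nondef}, \ref{def}.
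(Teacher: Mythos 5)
Your proposal follows essentially the same route as the paper's: define $W:=[v_1,\dots,v_n]$ from the periodic vectors of Theorems~\ref{nondef}--\ref{def}, transfer the differential identity $\dot W = AW-W\Lambda$ into the Toeplitz identity $(\mathcal{A}-\mathcal{N})\mathcal{W}=\mathcal{W}(\Lambda\otimes\mathcal{I}-\mathcal{N})$, and conclude by the change of variable $z=W^{-1}x$. Your explicit commutation lemma $\mathcal{N}\mathcal{T}(F)-\mathcal{T}(F)\mathcal{N}=\mathcal{T}(\dot F)$, checked entrywise via $\iota\omega(k-l)F_{k-l}=(\dot F)_{k-l}$, is exactly the mechanism that the paper delegates to ``similar steps as in the proof of (\cite{Blin}, Theorem~5)''; isolating it as a lemma is a genuine clarity improvement. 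Your closing caveat about reading \eqref{eigenprob} on a dense domain is also a legitimate concern that the paper leaves implicit, and it is correctly identified.

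There is, however, a gap in the defective case which your anchor $W(t)=\Phi(t,0)P_\Phi e^{-\Lambda t}$ actually exposes, even though you do not flag it. That factorization is $T$-periodic if and only if $P_\Phi^{-1}\Phi(T,0)P_\Phi=e^{\Lambda T}$, i.e. $e^{\Lambda T}$ must equal the Jordan form $\mu\, Id + N_1$ of $\Phi(T,0)$. With the theorem's $\Lambda=\lambda\, Id + \tfrac{1}{T\mu}N_1$ and zeros beyond the first superdiagonal, one gets $e^{\Lambda T}=\mu\, Id+N_1+\tfrac{1}{2\mu}N_1^2+\cdots$, which has spurious higher-superdiagonal terms whenever the Jordan chain has length $\geq 3$. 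Concretely, the correct variation-of-constants solution of \eqref{vp3} is $v_3(t)=R(t,0)\phi_3-\tfrac{t}{T\mu}R(t,0)\phi_2+\tfrac12(\tfrac{t}{T\mu})^2 v_1(t)$, which at $t=T$ gives $v_3(T)=\phi_3-\tfrac{1}{2\mu^2}\phi_1\neq\phi_3$; the closed form $v_i(t)=R(t,0)\phi_i-\tfrac{t}{T\mu}v_{i-1}(t)$ asserted in the paper's proof of Theorem~\ref{def} does not in fact satisfy \eqref{vp3} for $i\geq 3$, as one sees by differentiating (the extra term $\tfrac{t}{(T\mu)^2}v_{i-2}$ appears). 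So the ``short induction along each Jordan chain'' you invoke would actually surface an inconsistency rather than close the argument. The repair is to take $\Lambda$ as $\tfrac1T$ times the full matrix logarithm of the Jordan form, whose superdiagonals are $\tfrac{(-1)^{k+1}}{kT\mu^k}$ for $k\geq 1$; with that $\Lambda$, your factorization $W=\Phi P_\Phi e^{-\Lambda t}$ is $T$-periodic, \eqref{dP} holds by differentiating the three-factor product, and everything downstream in your proposal (boundedness of $\mathcal{W}$ and $\mathcal{W}^{-1}$, the commutation lemma yielding \eqref{eigenprob}, and \eqref{lti}) goes through unchanged.
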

\begin{proof} For $i:=1,\cdots,n$, consider for $\lambda_i$ the principal value of $\frac{1}{T}\log(\mu_i)$ only and let the $T-$periodic vectors $v_i$ determined using Theorem~\ref{nondef} (or Theorem~\ref{def} if $ \Phi(T,0)$ is defective) with $V_i:=\mathcal{F}(v_i)$. Denote by $$A_{k}:=\left(\begin{array}{cccc}A_{11,k} & A_{12,k} & \cdots & A_{1n,k} \\A_{21,k} & A_{22,k} & & \\\vdots & & \ddots & \vdots \\A_{n1,k} & \cdots & \cdots & A_{nn,k}\end{array}\right)$$ the $k-th$ phasors of $A$ and $V_{i,k}$ the $k-th$ phasors of $V_i$. We have for any $k$:
	$$\sum_{p\in\mathbb{Z}} A_{k-p}V_{i+1,p}-j\omega k V_{i+1,k}=\lambda_i V_{i+1,k}+s_iV_{i,k}$$
	with $s_i:=\frac{1}{T\mu_i}$ or $0$.
	Thus, a $m$-shift in the components of $V_i$, {$V_{i+1}$ }leads to: 
	\begin{align*}
\sum_{p\in\mathbb{Z}} A_{k-p}V_{i+1,p+m}&-j\omega k V_{i+1,k+m}\\
	&=(\lambda+j\omega m) V_{i+1,k+m}+{s_i} V_{i,k+m}	\end{align*}
	{which means that the $m$-shifted vector is also a generalized eigenvector associated to $\lambda_i+j\omega m$ (and not to $\lambda_i$). 
It follows that $\mathcal{T}(v_i)$ is the set of all generalized eigenvectors associated to all values of $\frac{1}{T}\log(\mu_i)$ defined modulo $j\omega$.}

Now, set $W:=[v_1,\cdots,v_n]$. Obviously, $W(t)$ satisfies \eqref{dP}.
	As $W(t)$ is a $T-$periodic and absolutely continuous matrix function, $\mathcal{W}:=\mathcal{T}(W)=[\mathcal{T}(v_1),\cdots,\mathcal{T}(v_n)]$ is a constant and bounded operator on $\ell^2$ (see Theorem~\ref{borne}). Furthermore,  {using similar steps as in the proof of (\cite{Blin}, Theorem 5)}, the $n\times n$ block Toeplitz matrix $\mathcal{W}$ satisfies :
	$$(\mathcal{A}-\mathcal{N})\mathcal{W}= \mathcal{W}(\Lambda \otimes \mathcal{I}- \mathcal{N}).$$ 
	As $\mathcal{W}$ solves the eigenvalue problem {for all admissible eigenvalues} and is invertible, the same holds true for $W(t)$.
	Since $\mathcal{W}^{-1}(\mathcal{A}-\mathcal{N})= (\Lambda \otimes \mathcal{I}- \mathcal{N})\mathcal{W}^{-1}$,   {using similar steps as in the proof of (\cite{Blin}, Theorem 5)}, it is straightforward to establish that the absolutely continuous matrix function $W^{-1}:= \mathcal{T}^{-1}(\mathcal{W}^{-1})$ satisfies \eqref{dP2}.

Finally, let $x$ be a solution of $\dot x =A(t)x$ in Carath\'eodory sense and set $z(t):=W^{-1}(t)x(t)$. From \eqref{dP2} we have:
	\begin{align*}
		\dot z(t)&=\dot W^{-1}(t)x(t)+W^{-1}(t)\dot x(t) \ a.e.\\
		&=\Lambda z(t)\ a.e.
	\end{align*}
\end{proof}
 
\begin{corollary}$(\mathcal{A}-\mathcal{N})$ is non-defective if and only if $\Phi(T,0)$ is non-defective
\end{corollary}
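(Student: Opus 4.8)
The plan is to read the statement off the Floquet factorization of Theorem~\ref{diago}. By \eqref{eigenprob} we have $\mathcal{W}^{-1}(\mathcal{A}-\mathcal{N})\mathcal{W}=\Lambda\otimes\mathcal{I}-\mathcal{N}$, where, by Theorem~\ref{diago} together with Theorem~\ref{borne}, $\mathcal{W}=\mathcal{T}(W)$ is a bounded operator on $\ell^2$ with bounded inverse $\mathcal{W}^{-1}=\mathcal{T}(W^{-1})$ (both $W$ and $W^{-1}$ being $T$-periodic and absolutely continuous, hence in $L^{\infty}$). Conjugation by such an operator sends eigenvectors to eigenvectors and, more generally, Jordan chains to Jordan chains of the same length; consequently $(\mathcal{A}-\mathcal{N})$ is non-defective if and only if $M:=\Lambda\otimes\mathcal{I}-\mathcal{N}$ is non-defective. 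The proof then reduces to showing that $M$ is non-defective if and only if $\Phi(T,0)$ is non-defective.

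First I would dispatch the easy implication. If $\Phi(T,0)$ is non-defective, the construction in Theorem~\ref{diago} gives $\Lambda=\mathrm{diag}(\lambda_1,\dots,\lambda_n)$, so $M$ is the diagonal operator with entries $\lambda_i-j\omega k$, $i\in\{1,\dots,n\}$, $k\in\mathbb{Z}$; the canonical basis of $\ell^2(\mathbb{C}^n)$ is then an eigenbasis and $M$, hence $(\mathcal{A}-\mathcal{N})$, is non-defective. For the converse, suppose $\Phi(T,0)$ is defective. Because the principal value of the logarithm is injective on $\mathbb{C}\setminus\{0\}$, distinct eigenvalues of $\Phi(T,0)$ produce distinct $\lambda_i$, so the Jordan structure of $\Lambda$ mirrors that of $\Phi(T,0)$ exactly; in particular there is an index $i$ with $\Lambda(i,i)=\Lambda(i+1,i+1)=:\lambda$ and $\Lambda(i,i+1)=\frac{1}{T\mu_i}\neq 0$. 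Denoting by $e_{i,0}$ the canonical basis vector attached to the $i$-th block and the $0$-th phasor, one checks directly that $Me_{i,0}=\lambda e_{i,0}$ and $Me_{i+1,0}=\lambda e_{i+1,0}+\frac{1}{T\mu_i}e_{i,0}$, so $(M-\lambda\mathcal{I})e_{i+1,0}=\frac{1}{T\mu_i}e_{i,0}\neq 0$ while $(M-\lambda\mathcal{I})^2e_{i+1,0}=0$: the vector $e_{i+1,0}$ is a generalized eigenvector of $M$ which is not an eigenvector, so $M$, and therefore $(\mathcal{A}-\mathcal{N})$, is defective.

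An alternative route for the converse, avoiding the normal form, is to start from a length-two Jordan chain $\phi_1,\phi_2$ of $\Phi(T,0)$, apply Theorems~\ref{nondef} and~\ref{def} to obtain $T$-periodic $v_1,v_2$ with $v_i(0)=\phi_i$, and note that $V_1:=\mathcal{F}(v_1)\neq 0$ (since $v_1\not\equiv 0$), $(\mathcal{A}-\mathcal{N})V_1=\lambda V_1$ and $(\mathcal{A}-\mathcal{N})V_2=\lambda V_2+\frac{1}{T\mu_1}V_1$, which again exhibits a Jordan chain of length two. The one point requiring care is the very first step: one must make sure the bounded change of variables $\mathcal{W}$ genuinely intertwines the unbounded operators $(\mathcal{A}-\mathcal{N})$ and $M$ on their natural domains, so that it transports the whole eigen/generalized-eigen structure; this is exactly what \eqref{eigenprob} of Theorem~\ref{diago} provides, and everything else is elementary linear algebra. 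I expect this domain/intertwining bookkeeping to be the only real obstacle, the rest being the two short computations above.
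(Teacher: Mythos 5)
Your argument is correct and rests on the same central fact as the paper — the similarity \eqref{eigenprob}, $\mathcal{W}^{-1}(\mathcal{A}-\mathcal{N})\mathcal{W}=\Lambda\otimes\mathcal{I}-\mathcal{N}$, with $\mathcal{W}$ boundedly invertible — but you close the argument differently in one place. You read the Jordan structure of $\Lambda$ directly off Theorem~\ref{diago}'s construction and then exhibit an explicit length-two Jordan chain $(e_{i,0},e_{i+1,0})$ for $\Lambda\otimes\mathcal{I}-\mathcal{N}$ whenever $\Lambda$ has a nontrivial block, whence defectiveness transports back to $(\mathcal{A}-\mathcal{N})$. The paper instead returns to the time domain: from $z(t)=e^{\Lambda t}z_0$, $z(t)=W^{-1}(t)\Phi(t,0)W(0)z_0$ and $W(T)=W(0)$ it derives the monodromy identity $e^{\Lambda T}=W^{-1}(T)\Phi(T,0)W(T)$, and then transfers diagonalizability through this \emph{finite-dimensional} similarity by noting that $\Lambda$ is diagonal iff $e^{\Lambda T}$ is diagonalizable. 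Both routes are valid; the paper's is slightly more compact because no Jordan-chain computation in $\ell^2$ is needed, while yours never revisits the flow and makes explicit the step the paper leaves implicit (that a nontrivial Jordan block of $\Lambda$ forces $\Lambda\otimes\mathcal{I}-\mathcal{N}$ to be defective). The caution you raise at the end about the bounded change of variables genuinely intertwining the unbounded operators is warranted, and is indeed exactly what \eqref{eigenprob} supplies.
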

\begin{proof}As $z(t)=e^{\Lambda t}z_0$ and as $z(t):=W^{-1}(t)x(t)=W^{-1}(t)\Phi(t,0)W(0)z_0$, it follows that $$e^{\Lambda t}=W^{-1}(t)\Phi(t,0)W(0).$$ 
For $t:=T$ since $W$ is $T-$periodic, we have $W(0)=W(T)$ and \begin{equation}e^{\Lambda T}=W^{-1}(T)\Phi(T,0)W(T).\label{change}\end{equation}
Now if $(\mathcal{A}-\mathcal{N})$ is non-defective, the eigenvalue problem corresponding to \eqref{eigenprob} is determined by a diagonal matrix $\Lambda$. Thus,  $e^{\Lambda T}$ is diagonal and we conclude from \eqref{change} that $\Phi(T,0)$ is non defective.
Reciprocally, if $\Phi(T,0)$ is non-defective, Theorem~\ref{diago} leads to \eqref{eigenprob} with $\Lambda$ diagonal.
\end{proof}
\color{black}

The previous Theorem provides a simple characterization of a Floquet factorization which is of interest for analysis purpose. The fact that the input harmonic matrix remains a full matrix when applying {a} Floquet factorization makes this approach difficult to apply to design stabilizing state feedback control laws for example. Our choice is to push further the spectral analysis of the operator $(\mathcal{A}-\mathcal{N})$ and analyze the impact of a $m-$truncation on the spectrum of $(\mathcal{A}_m-\mathcal{N}_m)$ in order to provide efficient algorithms that can also be used for harmonic control design. We start by the following corollary which states that the spectrum of $(\mathcal{A}-\mathcal{N})$ is unbounded and discrete.
\begin{corollary}\label{corlll} Assume that $(\mathcal{A}-\mathcal{N})$ is non-defective. The spectrum of $(\mathcal{A}-\mathcal{N})$ is given by the unbounded and discrete set
	$$\sigma(\mathcal{A}-\mathcal{N}):=\{ \lambda_p +j\omega k: k\in\mathbb{Z}, p:=1,\cdots, n \}$$ where $\lambda_p$, $p:=1,\cdots, n$ are not necessarily distinct eigenvalues. 
\end{corollary}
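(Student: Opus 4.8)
The plan is to read the spectrum off the diagonalization~\eqref{eigenprob}. Since $(\mathcal{A}-\mathcal{N})$ is non-defective, the preceding corollary together with Theorem~\ref{diago} shows that $\Lambda$ is the diagonal matrix $\mathrm{diag}(\lambda_1,\dots,\lambda_n)$, $\lambda_p$ being the principal value of $\frac{1}{T}\log\mu_p$, and that $\mathcal{W}:=\mathcal{T}(W)$ is bounded and boundedly invertible on $\ell^2$ with $\mathcal{W}^{-1}(\mathcal{A}-\mathcal{N})\mathcal{W}=\Lambda\otimes\mathcal{I}-\mathcal{N}=:\mathcal{D}$. So it suffices to compute $\sigma(\mathcal{D})$ and then transfer it through the similarity.

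In the standard basis of $\ell^2(\mathbb{C}^n)$ indexed by pairs $(p,k)$ with $p\in\{1,\dots,n\}$ and $k\in\mathbb{Z}$, the operator $\mathcal{D}$ is diagonal with entry $d_{p,k}=\lambda_p-j\omega k$, because $\mathcal{N}=Id_n\otimes\mathrm{diag}(j\omega k,\ k\in\mathbb{Z})$. Each $d_{p,k}$ is an eigenvalue (the corresponding basis vector is an eigenvector), hence $S:=\{\lambda_p-j\omega k:\ p=1,\dots,n,\ k\in\mathbb{Z}\}\subseteq\sigma(\mathcal{D})$; conversely, if $\lambda\notin\overline{S}$ then $\delta:=\inf_{p,k}|\lambda-d_{p,k}|>0$ and $(\lambda\mathcal{I}-\mathcal{D})^{-1}=\mathrm{diag}\big((\lambda-d_{p,k})^{-1}\big)$ is everywhere defined with norm at most $\delta^{-1}$, so $\lambda\in\rho(\mathcal{D})$. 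Thus $\sigma(\mathcal{D})=\overline{S}$. But $S$ is already closed and discrete: for each $R>0$ the inequality $|d_{p,k}|\le R$ forces $|k|\le(R+\max_p|\lambda_p|)/\omega$, so only finitely many points of $S$ lie in any bounded disc; hence $S$ has no finite accumulation point and $\sigma(\mathcal{D})=S$. Replacing $k$ by $-k$ (a bijection of $\mathbb{Z}$) rewrites $S$ as the set claimed in the statement.

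It remains to argue that $\sigma(\mathcal{A}-\mathcal{N})=\sigma(\mathcal{D})$. Both operators are densely defined on $\mathrm{dom}(\mathcal{N})$; relations~\eqref{dP}--\eqref{dP2} say exactly that $\mathcal{W}$ and $\mathcal{W}^{-1}$ intertwine $\mathcal{N}$ with itself modulo the bounded operators $\mathcal{A}$ and $\Lambda\otimes\mathcal{I}$, so $\mathcal{W}$ restricts to a bijection of $\mathrm{dom}(\mathcal{N})=\mathrm{dom}(\mathcal{D})$ onto $\mathrm{dom}(\mathcal{A}-\mathcal{N})$. Consequently $\lambda\mathcal{I}-(\mathcal{A}-\mathcal{N})=\mathcal{W}(\lambda\mathcal{I}-\mathcal{D})\mathcal{W}^{-1}$ is a bijection of $\ell^2$ with bounded inverse precisely when $\lambda\mathcal{I}-\mathcal{D}$ is, giving $\sigma(\mathcal{A}-\mathcal{N})=S$. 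A more self-contained variant avoids the similarity for the point spectrum: by Theorem~\ref{nondef}, $\lambda$ is an eigenvalue of $(\mathcal{A}-\mathcal{N})$ iff $\dot v=(A-\lambda Id_n)v$ has a nonzero $T$-periodic Carath\'eodory solution iff $e^{\lambda T}\in\sigma(\Phi(T,0))$, i.e. iff $\lambda\in S$; since the conjugate operator $\mathcal{D}$ is diagonal, the rest of the spectrum can only consist of accumulation points of $S$, of which there are none.

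The delicate point is this transfer of spectrum through $\mathcal{W}$: one must be sure that the \emph{bounded} similarity $\mathcal{W}$ genuinely conjugates the \emph{unbounded} operator $(\mathcal{A}-\mathcal{N})$, which comes down to $\mathcal{W}$ preserving $\mathrm{dom}(\mathcal{N})$, and this uses the regularity of $W$ established in Theorem~\ref{diago} (absolute continuity and $\dot W=AW-W\Lambda$, which controls the decay of the phasors of $W$). The remaining ingredients are just the elementary spectral calculus of diagonal operators on $\ell^2$.
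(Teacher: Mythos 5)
Your proof is correct and follows essentially the same route as the paper: reduce to the diagonal operator $\mathcal{D}=\Lambda\otimes\mathcal{I}-\mathcal{N}$ via the similarity \eqref{eigenprob}, exhibit the basis vectors as eigenvectors, build the bounded diagonal resolvent off the point spectrum, and note that the point set has no finite accumulation points so it is already closed. The one thing you add that the paper leaves implicit is the justification that the bounded similarity $\mathcal{W}$ actually conjugates the unbounded operator $(\mathcal{A}-\mathcal{N})$ onto $\mathcal{D}$ — i.e.\ that $\mathcal{W}$ preserves $\mathrm{dom}(\mathcal{N})$, which you extract from \eqref{dP}--\eqref{dP2} by rewriting the intertwining relation as the boundedness of $\mathcal{W}\mathcal{N}-\mathcal{N}\mathcal{W}$; this is a genuine gap in the paper's proof and your treatment of it is the right one.
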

\begin{proof}
	Consider the unbounded diagonal operator $D:=(\Lambda \otimes \mathcal{I}- \mathcal{N})$.
	As the point spectrum $\sigma_{ps}:=\{ \lambda_p +j\omega k: k\in\mathbb{Z}, p=1,\cdots, n \}$ has no cluster points, it is a closed set and $\sigma_{ps}\subset\sigma(D)$. Denote by $D_j$ the $j-$th entry of the diagonal of $D$ and $e_j$ the $j-th$ vector of the basis. If $\zeta \notin \sigma_{ps}$, then $S$ defined by
	$Se_j:=\frac{1}{\zeta-D_j}e_j $ for any $j\in\mathbb{Z}$ is a bounded (diagonal) operator on $\ell^2$ whose inverse is $\zeta \mathcal{I}-D$. Thus, $\zeta \notin\sigma(D)$ and $\sigma(D)=\sigma_{sp}$.
\end{proof}

The result of Corollary~\ref{corlll} holds also when $(\mathcal{A}-\mathcal{N})$ is defective. This can be established by defining $\mathcal{J}:=(J\otimes \mathcal{I}-\mathcal{N})$ with $J$ a Jordan normal form and showing that $\zeta \mathcal{I}-\mathcal{J}$ is invertible for any $\zeta \notin \sigma_{ps}$. The invertibility of $\zeta \mathcal{I}-\mathcal{J}$ is proved recursively on the $n\times n$ blocks of $\zeta \mathcal{I}-\mathcal{J}$ noticing that each of these blocks is diagonal and using 
	the matrix formula $$\left(\begin{array}{cc}A & B \\0 & C\end{array}\right)^{-1}=\left(\begin{array}{cc}A^{-1} &-A^{-1} BC^{-1} \\0 & C^{-1}\end{array}\right).$$

We discuss the properties of the inverse of $(\mathcal{A}-\mathcal{N})$ in the following corollary.

\begin{corollary}\label{spectre_bounded} $\Lambda$ is invertible if and only if the operator $(\mathcal{A}-\mathcal{N})$ is invertible. Moreover, $(\mathcal{A}-\mathcal{N})^{-1}$ is bounded on $\ell^2$ and $$\sigma^+=\|(\mathcal{A}-\mathcal{N})^{-1}\|_{\ell^2}$$ where
	$\sigma^+:=\sup\{|(\lambda_p +j\omega k)^{-1}|: k\in\mathbb{Z}, p=1,\cdots, n\}.$
\end{corollary}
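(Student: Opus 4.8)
The plan is to transport everything through the Floquet similarity of Theorem~\ref{diago}. Put $D:=\Lambda\otimes\mathcal{I}-\mathcal{N}$; Theorem~\ref{diago} gives $\mathcal{W}^{-1}(\mathcal{A}-\mathcal{N})\mathcal{W}=D$ with $\mathcal{W}$ and $\mathcal{W}^{-1}$ bounded on $\ell^2$ (they are the block Toeplitz transforms of the $T$-periodic, absolutely continuous, invertible matrices $W$ and $W^{-1}$, so boundedness is Theorem~\ref{borne}) and both preserving $\operatorname{dom}(\mathcal{N})$, which is the common domain of $\mathcal{A}-\mathcal{N}$ and $D$. Since conjugation by a boundedly invertible operator preserves the resolvent set, $(\mathcal{A}-\mathcal{N})$ is invertible iff $D$ is, and in that case $(\mathcal{A}-\mathcal{N})^{-1}=\mathcal{W}D^{-1}\mathcal{W}^{-1}$.

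First I would establish the equivalence ``$\Lambda$ invertible $\iff$ $D$ invertible''. As $\mathcal{N}$ is diagonal, $D$ is diagonal when $\Phi(T,0)$ is non-defective, and block upper triangular with each block itself diagonal in the defective case; by Corollary~\ref{corlll} and the defective-case remark following it, $\sigma(D)=\{\lambda_p+j\omega k:\ k\in\mathbb{Z},\ p=1,\dots,n\}$. Each $\lambda_p$ is the \emph{principal} value of $\tfrac1T\log\mu_p$, hence $\operatorname{Im}\lambda_p\in(-\omega/2,\omega/2]$; thus for $k\neq0$ one has $\operatorname{Im}(\lambda_p+j\omega k)\neq0$, so $\lambda_p+j\omega k\neq0$. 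Consequently $0\in\sigma(D)$ forces $k=0$ and $\lambda_p=0$, i.e. $0\notin\sigma(D)$ iff every $\lambda_p\neq0$, which is exactly the invertibility of $\Lambda$; together with the previous paragraph this is the first assertion. In the defective case one can moreover display $D^{-1}$, computed recursively on the block-triangular structure via the inverse formula recalled after Corollary~\ref{corlll}.

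Boundedness of $(\mathcal{A}-\mathcal{N})^{-1}$ is then immediate: $D^{-1}$ is bounded (indeed compact) on $\ell^2$, its entries being built from the scalars $(\lambda_p+j\omega k)^{-1}\to0$, so $\mathcal{W}D^{-1}\mathcal{W}^{-1}$ is bounded. One inequality for the norm is also free: by the similarity, an eigenvector of $D^{-1}$ for $(\lambda_p+j\omega k)^{-1}$ is mapped by $\mathcal{W}$ to an eigenvector $w$ of $(\mathcal{A}-\mathcal{N})^{-1}$ for the same eigenvalue, whence $\|(\mathcal{A}-\mathcal{N})^{-1}w\|_{\ell^2}/\|w\|_{\ell^2}=|(\lambda_p+j\omega k)^{-1}|$; taking the supremum over $(p,k)$ gives $\|(\mathcal{A}-\mathcal{N})^{-1}\|_{\ell^2}\ge\sigma^+$.

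The reverse inequality $\|(\mathcal{A}-\mathcal{N})^{-1}\|_{\ell^2}\le\sigma^+$ is the step I expect to be the main obstacle. Submultiplicativity only yields $\|(\mathcal{A}-\mathcal{N})^{-1}\|_{\ell^2}\le\|\mathcal{W}\|_{\ell^2}\|D^{-1}\|_{\ell^2}\|\mathcal{W}^{-1}\|_{\ell^2}=\|W\|_{L^\infty}\|W^{-1}\|_{L^\infty}\,\sigma^+$ (with $\|D^{-1}\|_{\ell^2}=\sigma^+$ in the non-defective case), and the prefactor $\|W\|_{L^\infty}\|W^{-1}\|_{L^\infty}$ exceeds $1$ unless $W(t)$ is a.e.\ unitary, which for a generic $A\in L^2$ it is not. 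Closing this gap appears to require either a hypothesis forcing $(\mathcal{A}-\mathcal{N})$ to be a normal operator --- in which case $\|(\mathcal{A}-\mathcal{N})^{-1}\|_{\ell^2}=1/\operatorname{dist}\big(0,\sigma(\mathcal{A}-\mathcal{N})\big)=\sigma^+$ at once --- or else reading the identity at the level of the spectral radius: $(\mathcal{A}-\mathcal{N})^{-1}$ is compact with eigenvalues $(\lambda_p+j\omega k)^{-1}$, so Corollary~\ref{corlll} gives $r\big((\mathcal{A}-\mathcal{N})^{-1}\big)=\sup_{p,k}|(\lambda_p+j\omega k)^{-1}|=\sigma^+$, which agrees with the operator norm precisely when $(\mathcal{A}-\mathcal{N})^{-1}$ is normaloid. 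The remaining parts --- the invertibility equivalence, boundedness of the inverse, and the bound $\sigma^+\le\|(\mathcal{A}-\mathcal{N})^{-1}\|_{\ell^2}$ --- follow cleanly from the three preceding steps.
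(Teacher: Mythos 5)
Your reduction to the Floquet similarity is correct, and everything you establish --- the invertibility equivalence, the boundedness (indeed compactness) of $(\mathcal{A}-\mathcal{N})^{-1}$, and the inequality $\sigma^+ \le \|(\mathcal{A}-\mathcal{N})^{-1}\|_{\ell^2}$ --- is sound and is argued more explicitly than in the paper, whose own proof is a one-line appeal to Corollary~\ref{corlll} together with the remark that the $\ell^2$ operator norm is the largest singular value. The step you flag as problematic, the reverse inequality $\|(\mathcal{A}-\mathcal{N})^{-1}\|_{\ell^2}\le\sigma^+$, is indeed where the argument breaks down, and the paper does not close it either: identifying $\sigma^+=\sup_{p,k}|(\lambda_p+j\omega k)^{-1}|$ with the largest singular value tacitly equates eigenvalue moduli with singular values, which holds only for normal (more precisely, normaloid) operators.

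Since $(\mathcal{A}-\mathcal{N})^{-1}=\mathcal{W}(\Lambda\otimes\mathcal{I}-\mathcal{N})^{-1}\mathcal{W}^{-1}$ and $\mathcal{W}=\mathcal{T}(W)$ is unitary only if $W(t)$ is a.e.\ unitary --- which fails for generic $A\in L^2$ --- the operator $(\mathcal{A}-\mathcal{N})^{-1}$ has no reason to be normaloid, and the spectral radius $\sigma^+$ is in general a strict lower bound for the operator norm, exactly as your finite-dimensional intuition suggests. The honest conclusion from the Floquet similarity is the two-sided estimate $\sigma^+\le\|(\mathcal{A}-\mathcal{N})^{-1}\|_{\ell^2}\le\|\mathcal{W}\|_{\ell^2}\|\mathcal{W}^{-1}\|_{\ell^2}\,\sigma^+$, which is all that the subsequent uniform-boundedness arguments (e.g.\ Corollary~\ref{specbound}) actually use. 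As written, the claimed equality appears to be an overclaim; your identification of the gap is a genuine observation about the statement itself rather than a defect of your proof.
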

\color{black}
\begin{proof}The result follows from the above theorem and noticing that the $\ell^2$ operator norm corresponds to the maximun singular value.
\end{proof}

\begin{remark}
	As shown in \cite{Blin}, if $x$ is the solution de $\dot x=A(t)x$ then $X:=\mathcal{F}(x)\in C^0(\mathbb{R},\ell^2)$ and we have $(\mathcal{A}-\mathcal{N})X(t) \notin \ell^{p}$, for any $1<p<+\infty$ and $(\mathcal{A}-\mathcal{N})X(t) \in \ell^{\infty}$. Clearly, $(\mathcal{A}-\mathcal{N})$ is not a bounded operator on $l^2$ while its inverse (if it exists) is bounded.
\end{remark}
 \subsection{Spectrum analysis of $(\mathcal{A}_m-\mathcal{N}_m)$}
Here, we explain how the spectrum of $(\mathcal{A}_m-\mathcal{N}_m)$ is modified w.r.t. the spectrum of $(\mathcal{A}-\mathcal{N})$ when performing a $m-$truncation on $(\mathcal{A}-\mathcal{N})$. From now, we assume that the $T-$periodic matrix function $A(t)$ belongs to $L^\infty(\mathbb{R},\mathbb{R}^{n\times n})$ or equivalently $\mathcal{A}$ is a bounded operator on $\ell^2$. This will help us in providing algorithms with guarantees at an arbitrarily small error when a $m-$truncation is applied. For simplicity reasons, we provide the results when the operator $(\mathcal{A}-\mathcal{N})$ is non-defective but the results hold true in general.
\begin{theorem}\label{spectre_plus} Assume that $A(t)\in L^\infty([0 \ T])$ and $(\mathcal{A}-\mathcal{N})$ is non-defective. Denote by $\sigma:=\{\lambda_p+j\omega k: k\in\mathbb{Z}, p:=1,\cdots,n\}$ the spectrum of $(\mathcal{A}-\mathcal{N})$. Let $(\mathcal{A}_{m^+}-\mathcal{N}_{m^+})$ be a left $m-$truncation of $(\mathcal{A}-\mathcal{N})$ according to Definition~\ref{truncdef} and assume that it is non defective, with an eigenvalues set denoted by $\Lambda_{m^+}$. 
\begin{enumerate}
\item For $\epsilon>0$, there exists an index $j_0$ such that for any eigenvalue $\lambda \in \Lambda_1^+(m):=\{\lambda_p+j\omega k: k\in\mathbb{Z}, k\leq m+1 -j_0, p:=1,\cdots,n\}\subset \sigma$:
	\begin{align}\|(\mathcal{A}_{m^+}-\mathcal{N}_{m^+}-\lambda \mathcal{I}_{m^+}) {V}|_{m^+}\|_{\ell^{2}}<\epsilon \label{e3}\end{align}
	where ${V}|_{m^+}$ is the left $m-$truncation of the eigenvector associated to $\lambda$. 
\item The set $\Lambda_{m^+}$ can be approximated by the union of $\Lambda_1^+(m)$ and $\Lambda_2^+(m)$
	that is $$\Lambda_{m^+}\approx\Lambda_1^+(m)\cup \Lambda_2^+(m)$$
	where $\Lambda_2^+(m)$ is a finite subset of $\Lambda_{m^+}$.
	Moreover, any eigenvalue $\lambda_{m+1}$ which belongs to the set $\Lambda_2^+(m+1)$ is obtained by the relation: 
	$\lambda_{m+1}=\lambda_{m}+j\omega$ 
	where $\lambda_{m}$ belongs to $\Lambda_2^+(m)$. 
	
\end{enumerate}
\end{theorem}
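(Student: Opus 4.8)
The plan is to exploit the Floquet factorization from Theorem~\ref{diago} together with the truncation identities of Proposition~\ref{general}, and to control the error terms using the operator norm bounds of Theorem~\ref{borne}. Write $\mathcal{M}:=\mathcal{A}-\mathcal{N}$ and recall from \eqref{eigenprob} that $\mathcal{M}\mathcal{W}=\mathcal{W}(\Lambda\otimes\mathcal{I}-\mathcal{N})$, so the eigenvectors of $\mathcal{M}$ are the columns $V^{(p,k)}$ of $\mathcal{W}$ suitably $k$-shifted, associated to $\lambda_p+j\omega k$. The key observation for part~(1) is that these eigenvectors inherit the decay of the Fourier coefficients of the periodic functions $v_p(t)=W(t)e_p$: since $W\in L^\infty$ is absolutely continuous (Theorem~\ref{diago}), the phasor sequences are square-summable, hence for any $\epsilon$ there is an index $j_0$ beyond which the "tail" of each eigenvector carries $\ell^2$-mass below $\epsilon/C$ for an appropriate constant $C$ depending only on $\|A\|_{L^\infty}$.

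First I would make the eigenvector bookkeeping explicit. For the eigenvalue $\lambda:=\lambda_p+j\omega k$, the associated eigenvector $V$ is the $k$-shift of $\mathrm{col}_p(\mathcal{W})$, i.e. $V_\ell = V_{\ell-k}^{(p,0)}$ where $V^{(p,0)}$ is built from the phasors of $v_p$. Then I would apply the left-truncation identity from Proposition~\ref{general} (the block version of \eqref{z2}): for each block entry,
\begin{align*}
\mathcal{T}_{m^+}(a_{ij})\,V|_{m^+}=(\mathcal{T}(a_{ij})V)|_{m^+}-\text{(Hankel tail terms)},
\end{align*}
and subtract $\lambda\,V|_{m^+}$. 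Because $\mathcal{M}V=\lambda V$ exactly on the full space, the "interior" part $(\mathcal{M}V-\lambda V)|_{m^+}$ vanishes, so
\begin{align*}
(\mathcal{M}_{m^+}-\lambda\mathcal{I}_{m^+})V|_{m^+}=-\text{(sum of Hankel tail correction terms)}.
\end{align*}
Each correction term is a Hankel operator (bounded by $\|A\|_{L^\infty}$, Theorem~\ref{borne}, item~3) applied to a truncated tail $V|_{m^+}^{\pm}$ of the eigenvector. The crucial point: if $\lambda$ lies in $\Lambda_1^+(m)$, meaning $k\le m+1-j_0$, then the relevant tail of the $k$-shifted eigenvector $V$ — the components with index beyond $m$ on the retained (upper) side, together with the components below $-m$ — is a tail of the square-summable phasor sequence of $v_p$ starting at least $j_0$ terms in, so its $\ell^2$ norm is $<\epsilon/\|A\|_{L^\infty}$. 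Combining with submultiplicativity gives \eqref{e3}.

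For part~(2), I would argue that $\Lambda_{m^+}$ splits according to whether an eigenvalue is "well-approximated" by a genuine eigenvalue of $\mathcal{M}$ (the set $\Lambda_1^+(m)$, justified by part~(1) and a continuity/perturbation argument, e.g. via the Bauer--Fike type reasoning applicable since $(\mathcal{A}_{m^+}-\mathcal{N}_{m^+})$ is assumed non-defective) or belongs to a bounded residual set $\Lambda_2^+(m)$ consisting of the finitely many eigenvalues that are "artifacts" of the truncation, localized near the boundary index $m$. The recursion $\lambda_{m+1}=\lambda_m+j\omega$ for elements of $\Lambda_2^+$ would be read off from the structure of $\mathcal{N}$: passing from the $m$-truncation to the $(m+1)$-truncation appends one more diagonal block $j\omega(m+1)$ at the edge, and the spurious edge eigenvalue shifts accordingly by $j\omega$; I would verify this by writing $(\mathcal{A}_{(m+1)^+}-\mathcal{N}_{(m+1)^+})$ as a bordered extension of $(\mathcal{A}_{m^+}-\mathcal{N}_{m^+})$ and tracking the edge mode. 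The main obstacle I anticipate is making part~(2) rigorous rather than heuristic — the statement is phrased with "$\approx$", so the honest task is to pin down in what norm/sense the approximation holds and to prove that the residual set $\Lambda_2^+(m)$ is genuinely finite and stable under $m\mapsto m+1$; the cleanest route is probably to combine the quantitative estimate from part~(1) with a counting argument on eigenvalues inside a fixed large disk, controlling how many eigenvalues of the truncation can fail to be $\epsilon$-close to $\sigma(\mathcal{M})$.
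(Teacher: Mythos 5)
Your part (1) is essentially the paper's route. The paper reduces to $n=1$, applies Theorem~\ref{product} once to $(\mathcal{A}-\mathcal{N})\mathcal{V}=\mathcal{V}(\lambda\mathcal{I}-\mathcal{N})$ after left truncation, obtaining $(\mathcal{A}_{m^+}-\mathcal{N}_{m^+})\mathcal{V}_{m^+}-\mathcal{V}_{m^+}(\lambda\mathcal{I}-\mathcal{N}_{m^+})=E^+$ with the single Hankel correction $E^+=-\mathcal{H}(A^+)\mathcal{H}(V^-)$, then bounds the $j$-th column by $\|A\|_{L^\infty}\|V|_j^-\|_{\ell^2}\to 0$; you do the same calculation column by column via the truncation identity. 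Two small corrections to your set-up: for a \emph{left} $m$-truncation only one boundary is cut, so only one Hankel tail appears and there is no ``together with the components below $-m$'' contribution --- the relevant identity is the semi-infinite one \eqref{ee1}, not the two-sided \eqref{z2}; and double-check the sign of your shift $V_\ell = V^{(p,0)}_{\ell-k}$, since from $(\mathcal{A}-\mathcal{N})\mathcal{V}e_k=(\lambda-j\omega k)\mathcal{V}e_k$ the column of $\mathcal{T}(v)$ attached to $\lambda_p+j\omega k$ is the $(-k)$-shift, which is what produces the one-sided condition $k\le m+1-j_0$ defining $\Lambda_1^+(m)$.

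The genuine divergence is in part (2). Your ``bordered extension plus tracking the edge mode'' plan is a perturbative argument that would at best give an approximate shift and would require further estimates; the paper's argument is exact and much simpler, and it is the one you should use. Under the reindexing $i\mapsto i+1$, the semi-infinite Toeplitz block $\mathcal{A}_{(m+1)^+}$ is \emph{the same matrix} as $\mathcal{A}_{m^+}$ (Toeplitz shift-invariance), while $-\mathcal{N}_{(m+1)^+}$ becomes $-\mathcal{N}_{m^+}+j\omega\mathcal{I}$. Hence the index-shifted copy of the \emph{same} semi-infinite eigenvector $V_{\lambda_2}$ satisfies $(\mathcal{A}_{(m+1)^+}-\mathcal{N}_{(m+1)^+})V_{\lambda_2}=(\lambda_2+j\omega)V_{\lambda_2}$ exactly; no edge-mode perturbation analysis is needed. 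As for the ``$\approx$'' in part (2), the paper does not go further than you propose here either --- it identifies $\Lambda_2^+(m)$ as the finitely many residual columns $j<j_0$ and leaves the approximation qualitative --- so your flagging of that gap matches the paper's actual level of rigor rather than being a defect of your proposal.
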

\begin{proof} It is sufficient to prove the theorem for $n:=1$. Indeed, as the difficulties are related to infinite Toeplitz matrices, if the result is established for $n:=1$, the same result holds for any finite $n$ using ad hoc formula and Proposition~\ref{general}. 
	Consider $\mathcal{W}$ given by \eqref{eigenprob}. When $n:=1$, the set of eigenvalues is given by $\{ \lambda +j\omega k: k\in\mathbb{Z}, \}$ for a given $\lambda$ and the matrix $\mathcal{W}$ reduces to $\mathcal{W}:=\mathcal{V}=\mathcal{T}(v)$ with phasors denoted by $V$.
	Applying a left $m-$truncation and using Theorem~\ref{product}, we have: 
	$$(\mathcal{A}_{m^+}-\mathcal{N}_{m^+})\mathcal{V}_{m^+}-\mathcal{V}_{m^+}(\lambda \mathcal{I}-\mathcal{N}_{m^+})=E^+$$
	where $E^+:=-\mathcal{H}(A^+)\mathcal{H}(V^-)$.
	
	For $j:=1,2,\cdots$, the $j-th$ column of $E^+$ is provided by
	$E^+(j):=-\mathcal{H}(A^+)V|_{j}^-$
	where $V|_j^-:=(V_{-j},V_{-j-1},\cdots, )$.
	Using Theorem~\ref{borne}, we have:
	\begin{align*}\|E^+(j)\|_{\ell^2}&\leq\|\mathcal{H}(A^+)\|_{\ell^2}\|V|_{j}^-\|_{\ell^2}\\
		&\leq\|A(t)\|_{L^\infty}\|V|_{j}^-\|_{\ell^2}.\end{align*}
	Therefore, for a given $\epsilon>0$, there always exists an index $j_0$ such that $\|E^+(j)\|_{\ell^2}\leq \epsilon $ for $j\geq j_0$ since $\|V|_{j}^-\|_{\ell^2}\rightarrow 0$ when $j \rightarrow +\infty$
	which establishes \eqref{e3}. The remaining $j_0-1$ eigenvalues form a finite subset $\Lambda_2^+(m)$ of $\Lambda_{m^+}$. Thus, if $\lambda_2\in\Lambda_2^+(m)$ is an eigenvalue associated to its semi-infinite eigenvector $V_{\lambda_2}$, it follows that: 
	$$(\mathcal{A}_{m^+}-\mathcal{N}_{m^+})V_{\lambda_2}=\lambda_2V_{\lambda_2}$$
	and it is straightforward to show that
	$$(\mathcal{A}_{(m+1)^+}-\mathcal{N}_{(m+1)^+})V_{\lambda_2}=(\lambda_2+j\omega) V_{\lambda_2}.$$
	Therefore, any eigenvalue of the set $\Lambda_2^+(m+1)$ is obtained from $\lambda_2 \in\Lambda_2^+(m)$ by adding $j\omega$ and the associated semi-infinite eigenvector is obtained by shifting $V_{\lambda_2}$. 
\end{proof}
\begin{theorem}\label{spec_trunc}Assume that the matrix $A(t)\in L^\infty([0 \ T])$ and that $(\mathcal{A}-\mathcal{N})$ is non-defective with $\sigma:=\{\lambda_p+j\omega k: k\in\mathbb{Z}, p:=1,\cdots,n\}$ its spectrum. 
	Assume that the ${m}-$truncation $(\mathcal{A}_{{m}}-\mathcal{N}_{{m}})$ is non-defective with its eigenvalues set denoted by $\Lambda_{ m}$. 
	
	For $\epsilon>0$, there exists a $m_0$ such that for $m\geq m_0$:
	\begin{enumerate}
		\item there exists an index $j_0$ such that for any eigenvalue $\lambda_1\in \Lambda_1(m)$ defined by the subset $\{\lambda_p+j\omega k: |k|\leq j_0, p:=1,\cdots,n\}$ of $\sigma$, the following relation is satisfied:
		\begin{align}\|(\mathcal{A}_{{m}}-\mathcal{N}_{{m}}-\lambda_1 \mathcal{I}_{{m}}){V_1}|_{m}\|_{\ell^{2}}<\epsilon \label{e4}\end{align}
	 where ${V_1}|_{m}$ is the $m-$truncation of the eigenvector associated to $\lambda_1$.
		\item for any eigenvalue $\lambda_2\in \Lambda_2(m^+)$ or in $ \Lambda_2(m^-):= \bar{\Lambda}_2(m^+)$ with $\Lambda_2(m^+)$ defined in Theorem~\ref{spectre_plus},
		the following relation is satisfied:
		\begin{align*}\|(\mathcal{A}_{{m}}-\mathcal{N}_{{m}}-\lambda_2 \mathcal{I}_{{m}}) {V}_2|_m\|_{\ell^{2}}<\epsilon \end{align*}
		where ${V_2}|_{m}$ is the $m-$truncation of the eigenvector associated to $\lambda_2$.
	\end{enumerate}
	Then, the set $\Lambda_{m}$ can be approximated by the union of the sets $\Lambda_1( m)$, $\Lambda_2^-( m)$ and $\Lambda_2^+( m)$ that is:
	$$\Lambda_{m}\approx\Lambda_1( m)\cup \Lambda_2^-( m)\cup \Lambda_2^+( m).$$
\end{theorem}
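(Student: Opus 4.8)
The plan is to extend Theorem~\ref{spectre_plus}, which already controls the spectrum of a \emph{left} $m$-truncation, by performing in addition a \emph{right} $m$-truncation, and -- exactly as in the proof of Theorem~\ref{spectre_plus} -- to reduce everything to the scalar case $n:=1$, the block case then following from Proposition~\ref{general} and the product formula~\eqref{ee2}. The single computational tool is the vector identity~\eqref{z2}: since $\mathcal{N}$ is diagonal, $\mathcal{N}_m x|_m=(\mathcal{N}x)|_m$, so for every $x$
\begin{equation*}(\mathcal{A}_m-\mathcal{N}_m)x|_m=\big((\mathcal{A}-\mathcal{N})x\big)|_m-\mathcal{H}_{(m,\infty)}(A^+)J_\infty x|_m^-\;-\;J_m\mathcal{H}_{(m,\infty)}(A^-)x|_m^+,\end{equation*}
and the whole proof amounts to bounding the two Hankel-times-tail terms by means of the estimates $\|\mathcal{H}(A^\pm)\|_{\ell^2}\le\|A\|_{L^\infty}$ of Theorem~\ref{borne}.

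For item~1, let $V=\mathcal{F}(v)$ be the $\ell^2$ eigenvector of $(\mathcal{A}-\mathcal{N})$ attached to $\lambda$ (the Floquet vector of Theorem~\ref{diago}); the eigenvector $V_1$ of $\lambda_1:=\lambda+j\omega k$ is the $k$-shift of $V$. Substituting $x:=V_1$ in the identity above and using $\big((\mathcal{A}-\mathcal{N})V_1\big)|_m=\lambda_1V_1|_m$ yields
\begin{equation*}\|(\mathcal{A}_m-\mathcal{N}_m-\lambda_1\mathcal{I}_m)V_1|_m\|_{\ell^2}\le\|A\|_{L^\infty}\big(\|V_1|_m^-\|_{\ell^2}+\|V_1|_m^+\|_{\ell^2}\big).\end{equation*}
Since $V_1$ is a fixed shift of an $\ell^2$ sequence, the two norms on the right are tails of $V$ starting at an index comparable to the distance from the shifted support to the truncation edges; because $V\in\ell^2$ one first chooses $j_0$ so that $\|A\|_{L^\infty}$ times the tail of (each) Floquet phasor sequence beyond index $j_0$ is below $\epsilon$, and then any $m\ge m_0$, with $m_0$ depending on $j_0$, makes these tails $<\epsilon$ for every $\lambda_1\in\Lambda_1(m)$, which is~\eqref{e4}.

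For item~2, take $\lambda_2\in\Lambda_2^+(m)$ with its semi-infinite $\ell^2$ eigenvector $V_{\lambda_2}$ of the left truncation given by Theorem~\ref{spectre_plus}; the shift mechanism in that theorem's item~2 shows that $V_{\lambda_2}$ keeps a fixed profile anchored at the left edge $-m$, hence $\|V_{\lambda_2}|_m^+\|_{\ell^2}\to0$ as $m\to\infty$. Restricting $V_{\lambda_2}$ to the index range $[-m,m]$ therefore introduces only a right-edge Hankel correction, of $\ell^2$-norm $\le\|A\|_{L^\infty}\|V_{\lambda_2}|_m^+\|_{\ell^2}<\epsilon$ for $m$ large (uniform in $\lambda_2\in\Lambda_2^+(m)$ since this set is finite), which is the required estimate for $\Lambda_2^+(m)$. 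The conjugate family $\Lambda_2^-(m):=\overline{\Lambda_2^+(m)}$ and its eigenvectors come for free from the conjugation--flip identity $\mathcal{J}_{n,m}\,\overline{(\mathcal{A}_m-\mathcal{N}_m)}\,\mathcal{J}_{n,m}=\mathcal{A}_m-\mathcal{N}_m$ -- itself a consequence of $J_m\mathcal{T}_m(a)J_m=\mathcal{T}_m(a)'$, of $\overline{a_k}=a_{-k}$ (as $A$ is real), and of $J_m\mathcal{N}_mJ_m=-\mathcal{N}_m$ -- since if $w$ is the approximate eigenvector of $\lambda_2$ just built then $\mathcal{J}_{n,m}\overline{w}$ is one for $\overline{\lambda_2}$ with the same residual norm, now anchored at the right edge $+m$.

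Finally, $(\mathcal{A}_m-\mathcal{N}_m)$ has exactly $n(2m+1)$ eigenvalues, which equals the cardinality of $\Lambda_1(m)\cup\Lambda_2^-(m)\cup\Lambda_2^+(m)$ (the two edge families being finite and $\Lambda_1(m)$ filling the rest); we have produced, for each element of this union, a unit-normalized vector with residual $<\epsilon$, so each such element lies in the $\epsilon$-pseudospectrum of $(\mathcal{A}_m-\mathcal{N}_m)$, and since this matrix is assumed non-defective a Bauer--Fike argument together with the cardinality count matches the two sets up to a distance controlled by $\epsilon$ and the conditioning of the eigenbasis, which is the asserted approximation. I expect the main obstacle to be item~2: rigorously establishing the edge-localization of the spurious eigenvectors $V_{\lambda_2}$ -- so that the second truncation perturbs them by only $\epsilon$ -- by combining the shift mechanism of Theorem~\ref{spectre_plus} with the $\ell^2$-membership of those eigenvectors; relatedly, since $(\mathcal{A}_m-\mathcal{N}_m)$ is strongly non-normal, the passage from ``small residual'' to ``nearby eigenvalue'' is valid only in this soft, approximate sense.
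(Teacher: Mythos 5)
Your proposal is correct and follows essentially the same route as the paper's proof: reduce to $n=1$, write the truncation correction as Hankel-times-tail terms via the identities of Theorem~\ref{product} and Proposition~\ref{product3}, bound them by $\|A\|_{L^\infty}$ times tail $\ell^2$-norms of the (possibly shifted) eigenvector, and conclude from the decay of those tails; the handling of $\Lambda_2^-(m)$ by conjugation/flip symmetry and of $\Lambda_2^+(m)$ by anchored semi-infinite eigenvectors also matches the paper. Your closing Bauer--Fike/pseudospectrum remark and your explicit flag of the non-normality caveat go slightly beyond what the paper writes, but they do not change the underlying argument.
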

\begin{proof} As before, the proof is given for $n:=1$. The right truncation leads to a symmetric result of Theorem~\ref{spectre_plus} for which $\Lambda_2^-(m)=\bar \Lambda_2^+(m)$. In case both right and left $m-$truncations are perfomed, applying Theorem~\ref{product} to \eqref{eigenprob} leads to : 
	$$(\mathcal{A}_{{m}}-\mathcal{N}_{{m}})\mathcal{V}_{{m}}-\mathcal{V}_{{m}}(\lambda \mathcal{I}_{{m}}-\mathcal{N}_{{m}})=-E_{m}$$
	where $E_{m}:=E^+_{m}+E^-_{m}$ and
	\begin{align*}
		E^+_{m}&:=\mathcal{H}_{({m},\eta)}(A^+)\mathcal{H}_{(\eta,{m})}(V^-) \\
		E^-_{m}&:=J_{m}\mathcal{H}_{({m},\eta)}(A^-)\mathcal{H}_{(\eta,{m})}(V^+) J_{m}. 
	\end{align*}
	Notice that $E^-_m(i,j)$ is simply obtained from $E^+_m(i,j)$ by a central symmetry of index $(m+1,m+1)$.
	As in the previous proof, for $j:=1,2,\cdots$, the $\ell^2$ norm of the $j-th$ column of $E^+$ satisfies
	$$\|E^+(j)\|_{\ell^2}\leq\|A(t)\|_{L^\infty}\|V|_{j}^-\|_{\ell^2}$$
	and for a given $\epsilon >0$, there exists an index $j_0<m$ (provided that $m$ is chosen sufficiently large) such that for any $j\geq j_0$
	$$\|E^+(j)\|_{\ell^2}\leq \epsilon/2.$$
	By symmetry, the $\ell^{2}$-norm of the columns of $E^-_m$ is less than $\epsilon/2$ for $j\leq 2(m+1)-j_0$.
	Thus, it follows that the $\ell^{2}$-norm of the columns of $E^-_m+E^-_m$ is less than $\epsilon$ for $j$ such that $ j_0\leq j\leq 2(m+1)-j_0$.
	Consequently, Equation \eqref{e4} is satisfied.
	
	Now it remains to show that the elements of $\Lambda_2^+(m)$ and $\Lambda_2^-(m)$ are eigenvalues of $(\mathcal{A}_{m}-\mathcal{N}_{m})$ up to an arbitrary small error.
	If $\lambda_2\in\Lambda_2^+(m)$ is an eigenvalue associated to an eigenvector $V_2$, it follows that: 
	\begin{equation}(\mathcal{A}_{m^+}-\mathcal{N}_{m^+})V_2=\lambda_2V_2. \label{e5}\end{equation}
	If a right $m-$truncation is applied on \eqref{e5}, then 
	$$(\mathcal{A}_{m}-\mathcal{N}_{m})V_2|_m-\lambda_2V_2|_m= E_2$$
	where 
	$E_2:=- J_m \mathcal{H}_{(m,\infty)}(A^-)V_2|_m^+$ (see \eqref{z2} in Proposition~\ref{product3})
	and $V_2|_m^+:=(V_{2_{m+1}},V_{2_{m+2}},\cdots)$. As before, we have:
	$\|E_2\|_{\ell^2}\leq \|A(t)\|_{L^\infty}\|V_2|_m^+\|_{\ell^2}$.
	Hence, there always exists a $m_0$ such that for $m\geq m_0$, 
	$$\|E_2\|_{\ell^2}\leq \epsilon$$
	since $\|V_2|_m^+\|_{\ell^2}\rightarrow 0$ when $m\rightarrow +\infty$.
	This completes the proof.
\end{proof}

\begin{corollary}\label{specbound}Assume that the matrix $A(t)\in L^\infty([0 \ T])$.
	If $(\mathcal{A}-\mathcal{N})$ is invertible, there exists a $m_0>0$ such that for any $m\geq m_0$, the matrix $(\mathcal{A}_{m}-\mathcal{N}_{m})$ is invertible. Moreover, $\|(\mathcal{A}_{m}-\mathcal{N}_{m})^{-1}\|_{\ell^2}$ is uniformly bounded i.e. $\sup_{m\geq m_0} \|(\mathcal{A}_{m}-\mathcal{N}_{m})^{-1}\|_{\ell^2}<+\infty.$
\end{corollary}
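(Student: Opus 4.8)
The plan is to decompose the harmonic index set into a \emph{fixed} finite block of ``low'' indices $\{|k|\le K\}$ and the complementary ``high'' indices, and to run a Schur–complement argument in which the high block is handled by a Neumann series (there the diagonal part $\mathcal{N}$ dominates $\mathcal{A}$), while the low block is controlled by importing the already established invertibility of the genuine infinite operator $\mathcal{A}-\mathcal{N}$. As in the previous proofs it suffices to treat $n:=1$; the general case is identical after reordering the basis so as to group the phasors by harmonic index, the diagonal of $\mathcal{N}$ then reading $j\omega k\,Id_n$.

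Put $a:=\|A\|_{L^\infty}<\infty$ and, by Corollary~\ref{spectre_bounded} together with the invertibility hypothesis, $\sigma^+:=\|(\mathcal{A}-\mathcal{N})^{-1}\|_{\ell^2}<\infty$. Choose an integer $K$ large enough that $\omega(K+1)>a$ and that $\delta_K:=a^2/\bigl(\omega(K+1)-a\bigr)$ satisfies $\sigma^+\delta_K<1/4$; this is possible since $\delta_K\to0$ as $K\to\infty$. Split the coordinates into $L=\{|k|\le K\}$ and $H=\{|k|>K\}$, and write for each $m\ge K$ the matrix $\mathcal{A}_m-\mathcal{N}_m$ in the induced $2\times2$ block form $\left(\begin{smallmatrix}B_{ll}&B_{lh}^{(m)}\\ B_{hl}^{(m)}&B_{hh}^{(m)}\end{smallmatrix}\right)$, and likewise $\mathcal{A}-\mathcal{N}$ with blocks $B_{ll},B_{lh},B_{hl},B_{hh}$. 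The three facts that make the argument work are: (i) the low block $B_{ll}=(a_{i-j}-j\omega i\,\delta_{ij})_{|i|,|j|\le K}$ is one and the same finite matrix for every $m\ge K$ and for the infinite operator; (ii) every off-diagonal block is a submatrix of $\mathcal{T}_m(A)$ or $\mathcal{A}$, hence of norm $\le a$ by Theorem~\ref{borne}; (iii) any high block, finite or infinite, has the form $\mathcal{A}_{hh}-\mathcal{N}_{hh}$ with $\|\mathcal{A}_{hh}\|\le a$ and $\|\mathcal{N}_{hh}^{-1}\|\le 1/(\omega(K+1))$, so the Neumann series shows it is invertible with inverse of norm $\le 1/(\omega(K+1)-a)$, a bound independent of $m$.

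Apply this first to the infinite operator: $B_{hh}$ is boundedly invertible and $\mathcal{A}-\mathcal{N}$ is invertible, so (all operators entering the elimination being bounded) eliminating the $H$-coordinates shows the Schur complement $\tilde S:=B_{ll}-B_{lh}B_{hh}^{-1}B_{hl}$ is invertible on the finite-dimensional space $L$ and that $\tilde S^{-1}$ is the compression $P_L(\mathcal{A}-\mathcal{N})^{-1}|_L$, whence $\|\tilde S^{-1}\|\le\sigma^+$. As $\|B_{lh}B_{hh}^{-1}B_{hl}\|\le\delta_K$ and $\sigma^+\delta_K<1/4$, a Neumann perturbation gives $B_{ll}$ invertible with $\|B_{ll}^{-1}\|\le 2\sigma^+$. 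Now take $m\ge m_0:=K$. If $m=K$ then $\mathcal{A}_m-\mathcal{N}_m=B_{ll}$. If $m>K$, the finite high block $B_{hh}^{(m)}$ is invertible with $\|(B_{hh}^{(m)})^{-1}\|\le 1/(\omega(K+1)-a)$, its Schur complement $S_m:=B_{ll}-B_{lh}^{(m)}(B_{hh}^{(m)})^{-1}B_{hl}^{(m)}$ differs from $B_{ll}$ by an operator of norm $\le\delta_K$, and since $\|B_{ll}^{-1}\|\,\delta_K\le 2\sigma^+\delta_K<1/2$ we get $S_m$ invertible with $\|S_m^{-1}\|\le 4\sigma^+$; hence $\mathcal{A}_m-\mathcal{N}_m$ is invertible and the $2\times2$ block-inversion formula writes $(\mathcal{A}_m-\mathcal{N}_m)^{-1}$ as a fixed algebraic expression in $S_m^{-1}$, $(B_{hh}^{(m)})^{-1}$ and the off-diagonal blocks, each of norm bounded by a constant depending only on $\sigma^+$, $a$ and $\omega K$. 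Therefore $\sup_{m\ge m_0}\|(\mathcal{A}_m-\mathcal{N}_m)^{-1}\|_{\ell^2}<+\infty$.

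The delicate point is precisely the combination of (i)–(iii) with the compression identity $\|\tilde S^{-1}\|\le\sigma^+$: the statement is \emph{not} a consequence of the spectral picture of Theorem~\ref{spec_trunc}, because $\mathcal{A}_m-\mathcal{N}_m$ is highly non-normal and the error terms there are small only column-by-column, so proximity of the spectrum to the origin does not bound the inverse. The argument instead rests on the ``frozen'' low block $B_{ll}$, whose uniform invertibility is inherited from that of the true operator $\mathcal{A}-\mathcal{N}$, the high block being tamed by the diagonal dominance of $\mathcal{N}$.
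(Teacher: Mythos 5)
Your proof is correct, and it takes a genuinely different route from the paper's. The paper's argument is a two-line spectral appeal: invoking the preceding spectrum analysis, it says that the spurious eigenvalue clusters $\Lambda_2^{\pm}(m)$ escape to infinity while the genuine ones stay inside $\sigma(\mathcal{A}-\mathcal{N})$, which is bounded away from the origin, so the eigenvalues of $(\mathcal{A}_m-\mathcal{N}_m)^{-1}$ are uniformly bounded, ``thus'' $\|(\mathcal{A}_m-\mathcal{N}_m)^{-1}\|_{\ell^2}$ is uniformly bounded. Your route instead fixes a cut-off $K$ with $\omega(K+1)>\|A\|_{L^\infty}$, inverts the high-harmonic block uniformly in $m$ by diagonal dominance of $\mathcal{N}$ (Neumann series), imports a bound on the low block from the \emph{infinite} operator via the compression identity $\tilde S^{-1}=P_L(\mathcal{A}-\mathcal{N})^{-1}|_L$, and then perturbs; the two structural facts that make the constants $m$-independent are that the low block $B_{ll}$ is one and the same frozen finite matrix for all $m\ge K$ and for the infinite operator, and that all off-diagonal blocks are submatrices of $\mathcal{A}$, hence of norm $\le\|A\|_{L^\infty}$.

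What each buys: the paper's version is short and keeps the whole section's narrative spectral; yours is longer but gives an explicit, $m$-independent constant and never appeals to the approximate eigenvalue picture of Theorem~\ref{spec_trunc}. Your closing critique is well taken and in fact points to a real gap in the published proof: the inference from ``spectral radius of $(\mathcal{A}_m-\mathcal{N}_m)^{-1}$ is bounded'' to ``operator norm of $(\mathcal{A}_m-\mathcal{N}_m)^{-1}$ is bounded'' does not hold for non-normal matrices, and $\mathcal{A}_m-\mathcal{N}_m$ is generically non-normal; moreover Theorem~\ref{spec_trunc} only controls errors column-by-column and uses ``$\approx$'', which is not the same as a perturbation bound on eigenvalues, let alone on the resolvent. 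Your Schur/Neumann argument sidesteps all of this and is the more rigorous justification of the corollary.

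One small remark for polish: it is worth stating explicitly that in the Schur elimination on the infinite operator all the pieces entering $\tilde S=B_{ll}-B_{lh}B_{hh}^{-1}B_{hl}$ are bounded (the off-diagonal blocks are submatrices of $\mathcal{A}$ since $\mathcal{N}$ is diagonal, and $B_{hh}^{-1}$ is bounded by the Neumann argument), so the compression identity $\tilde S^{-1}=P_L(\mathcal{A}-\mathcal{N})^{-1}|_L$ is legitimate even though $\mathcal{A}-\mathcal{N}$ itself is unbounded; as written this is implicit in your ``all operators entering the elimination being bounded'' parenthetical, but spelling it out would preempt the obvious objection.
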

\begin{proof}
	As $\lim_{m\rightarrow+\infty} |\min(\Lambda_2^+(m))|=+\infty$, and as $(\mathcal{A}-\mathcal{N})$ is invertible, for sufficiently large $m$, $(\mathcal{A}_m-\mathcal{N}_m)$ is not singular and the eigenvalues of $(\mathcal{A}_m-\mathcal{N}_m)^{-1}$ are uniformly bounded by $\sup\{|(\lambda_p -j\omega k)^{-1}|: k\in\mathbb{Z}, p:=1,\cdots, n\}.$ Thus, $\|(\mathcal{A}_{m}-\mathcal{N}_{m})^{-1}\|_{\ell^2}$ is uniformly bounded.
\end{proof}
\subsection{Example}
Consider the following $2\times 2$ block Toeplitz matrix 
$$\mathcal{A}:=\left(\begin{array}{cccc}
	\mathcal{A}_{11} & \mathcal{A}_{12} \\
	\mathcal{A}_{21} & \mathcal{A}_{22} \\
\end{array}\right)$$ 
where the Toeplitz matrices $\mathcal{A}_{ij}$ are characterized by
\begin{align*}
	a_{11}&:=(0.5, 0.6-j,\underline{ -1},0.6+j ,0.5)\\
	a_{12}&:=(1.3-0.4j,-2.2+0.5j,\underline{-0.4},-2.2-0.5j,1.3+0.4j)\\
	a_{21}&:=(-0.3 - 0.6j, 0.4+0.7j, \underline{-0.1}, 0.4-0.7j, -0.3 + 0.6j)\\
	a_{22}&:=(-1.3-1.8j, 1.4-1.6j, \underline{-2},1.4+1.6j,-1.3+1.8j)
\end{align*} 
with the underlined terms corresponding to the index $0$. 
\begin{figure}[h]
	\begin{center}
		\includegraphics[width=\linewidth,height=5cm]{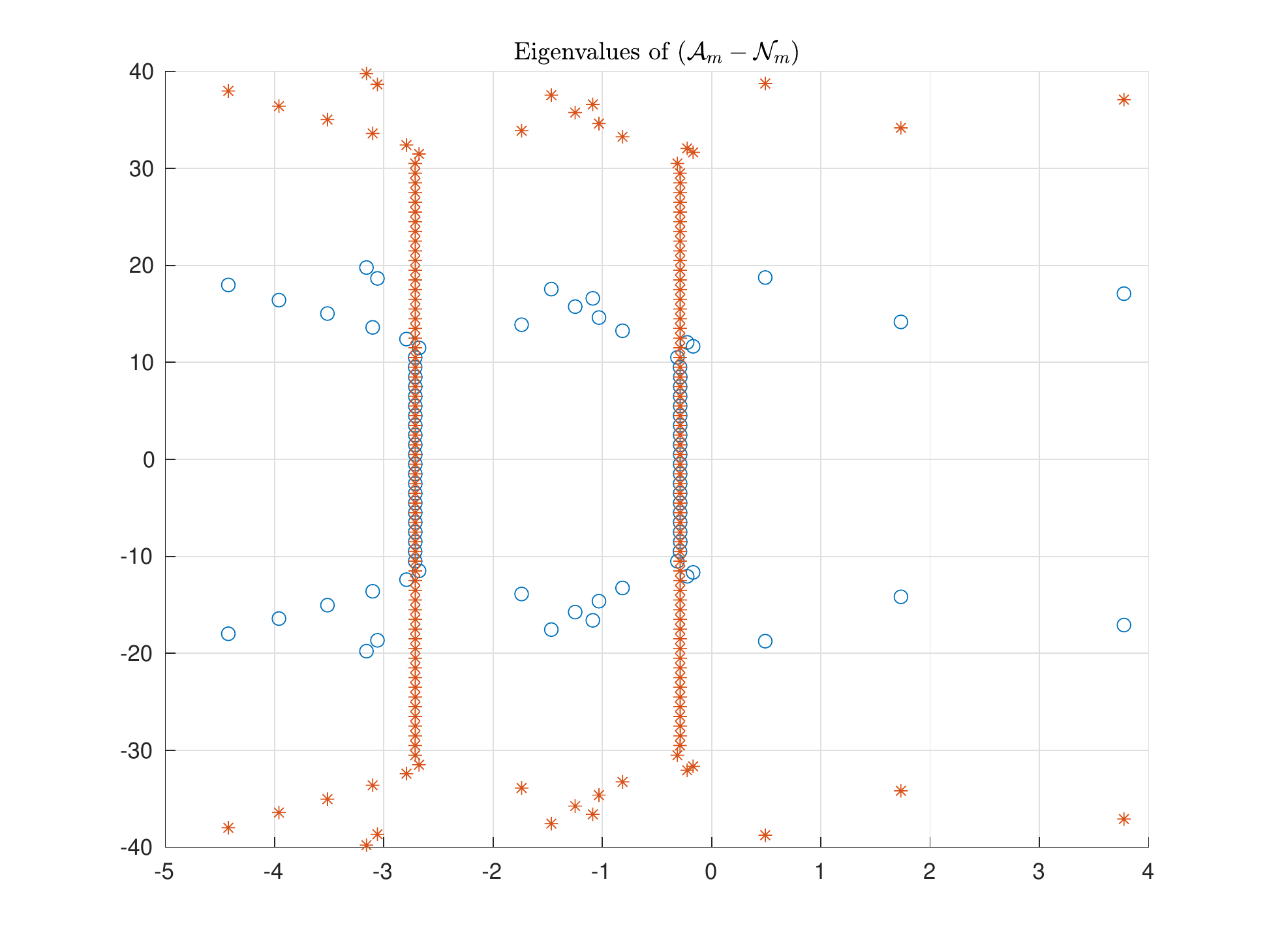}
		\caption{Eigenvalues of $(\mathcal{A}_m-\mathcal{N}_m)$ for $m:=20$ (blue) or $40$ (red).}\label{fig3}
	\end{center}
\end{figure}
The eigenvalues of $(\mathcal{A}_m-\mathcal{N}_m)$ are depicted in Fig.~\ref{fig3} for $m:=20$ (green circles) and $m:=40$ (red stars). We clearly observe the sets $\Lambda_1(m)$, $\Lambda_2^-(m)$ and $\Lambda_2^+(m)$. Notice that $\Lambda_1(m)$ is defined by two eigenvalues (as expected) satisfying $R(\lambda_1)\approx-0.3$ and $R(\lambda_2)\approx-2.7$ as shown by the alignment of the eigenvalues along these vertical axes. Thus, $(\mathcal{A}-\mathcal{N})$ is Hurwitz while $(\mathcal{A}_m-\mathcal{N}_m)$ is never Hurwitz for all $m$ since $\Lambda_2^-(m)$ and $\Lambda_2^+(m)$ have eigenvalues with positive real parts.
As mentioned, we see that the set $\Lambda_2^+(40)$ is obtained from the set $\Lambda_2^+(20)$ by a translation of $j\omega 20$ ($\omega:=1$).

This example illustrates the fact that having $(\mathcal{A}-\mathcal{N})$ Hurwitz, if we try to solve a Lyapunov equation with a truncated version $(\mathcal{A}_m-\mathcal{N}_m)$, the solution would never be positive definite whatever $m$. This motivates the following section devoted to solving harmonic Lyapunov equations.

\section{Solving Harmonic Lyapunov equation}
Taking benefit from the spectral analysis of the previous section, 
the objective here is to study how the infinite dimensional harmonic Lyapunov equation \eqref{al} can be solved in a practice without invoking {a} Floquet factorization. 
The following proposition introduces the symbol Lyapunov equation. 
\begin{proposition}\label{symLyap} Assume that $A(t)\in L^\infty([0 \ T])$. The symbol harmonic Lyapunov equation is given by 
	\begin{equation}A(z)' P(z)+P(z)A(z)+(\mathbbm{1}_{n,n}\otimes N(z))\cdot P(z)+Q(z)=0 \label{sym_syl}\end{equation}
	where the symbol matrices $A(z)$, $Q(z)$, $P(z)$ are given by \eqref{symA} and where $N(z):=\sum_{k=-\infty}^{+\infty}j\omega k z^k$. 
\end{proposition}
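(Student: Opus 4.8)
The plan is to start from the infinite-dimensional harmonic Lyapunov equation \eqref{al}, namely $\mathcal{P}(\mathcal{A}-\mathcal{N})+(\mathcal{A}-\mathcal{N})^*\mathcal{P}+\mathcal{Q}=0$, and translate each of the three operator terms into a statement about symbols of block Toeplitz matrices. The key observation is that $\mathcal{P}$, $\mathcal{Q}$, $\mathcal{A}$ are all $n\times n$ block Toeplitz matrices (since $P,Q\in L^\infty([0\ T],\mathbb{C}^{n\times n})$, recalling that $\mathcal{P}$ is a bounded operator on $\ell^2$ from the equivalence result quoted in Section~III), so the terms $\mathcal{P}(\mathcal{A}-\mathcal{N})$ and $(\mathcal{A}-\mathcal{N})^*\mathcal{P}$ must be decomposed into a Toeplitz part plus a non-Toeplitz part coming from $\mathcal{N}$.

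First I would handle the two Toeplitz products $\mathcal{P}\mathcal{A}$ and $\mathcal{A}^*\mathcal{P}$: by the classical fact recalled in the proof of Theorem~\ref{product} (the product of two infinite block Toeplitz matrices is the block Toeplitz matrix of the product of the symbols), these correspond at the level of symbols to $P(z)A(z)$ and $A(z)^*P(z)$. Since all matrices here are real $T$-periodic, $\mathcal{A}^*$ corresponds to $A(z)'$ evaluated with conjugate phasors, which on the symbol side is exactly the $A(z)'$ appearing in \eqref{sym_syl} — here I would be a bit careful to check that the Hermitian conjugate of the operator matches the transpose-of-symbol convention used in \eqref{sym_syl}, using that $\mathcal{Q}$ and $\mathcal{P}$ are Hermitian so that their symbols satisfy $P(z)^*=P(z)$ in the appropriate sense. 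Next I would treat the terms $\mathcal{P}\mathcal{N}$ and $\mathcal{N}^*\mathcal{P}=-\mathcal{N}\mathcal{P}$ (note $\mathcal{N}^*=-\mathcal{N}$ since $\mathcal{N}$ is skew-Hermitian). Because $\mathcal{N}=Id_n\otimes\mathrm{diag}(j\omega k)$ is diagonal, multiplying a block Toeplitz matrix $\mathcal{P}$ on the right by $\mathcal{N}$ scales the column indexed by $k$ by $j\omega k$, and multiplying on the left scales the row indexed by $k$ by $j\omega k$; hence the $(i,j)$-entry in block position $(k,\ell)$ of $\mathcal{P}\mathcal{N}-\mathcal{N}\mathcal{P}$ equals $(j\omega\ell-j\omega k)\,P_{ij,k-\ell}=-j\omega(k-\ell)P_{ij,k-\ell}$. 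Writing $p:=k-\ell$, the difference $-\mathcal{N}\mathcal{P}+\mathcal{P}\mathcal{N}$ is again "Toeplitz-like" with $p$-th phasor $-j\omega p\, P_{ij,p}$, which is precisely the phasor sequence of the Hadamard product $N(z)\cdot p_{ij}(z)$ with $N(z)=\sum_k j\omega k z^k$. Assembling the $n\times n$ blocks, this term is $\mathbbm{1}_{n,n}\otimes N(z)$ Hadamard-multiplied with $P(z)$, matching the term $(\mathbbm{1}_{n,n}\otimes N(z))\cdot P(z)$ in \eqref{sym_syl}; the sign works out because $(\mathcal{A}-\mathcal{N})$ contributes $\mathcal{P}(-\mathcal{N})$ and $(\mathcal{A}-\mathcal{N})^*$ contributes $(-\mathcal{N})^*\mathcal{P}=\mathcal{N}\mathcal{P}$, giving $-\mathcal{P}\mathcal{N}+\mathcal{N}\mathcal{P}$, whose $p$-th phasor block is $+j\omega p\,P_{ij,p}$... so I would double-check the orientation of the sign convention against \eqref{N} and \eqref{al} and fix the sign of $N(z)$ accordingly (the statement writes $N(z)=\sum j\omega k z^k$, so the bookkeeping should land on that).

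Finally I would add $\mathcal{Q}$, whose symbol is $Q(z)$, and conclude that \eqref{al} holds as an operator identity if and only if the symbol identity \eqref{sym_syl} holds, since the correspondence between a block Toeplitz matrix with entries in $L^\infty$ and its symbol is injective (this is the $L^\infty\leftrightarrow$ bounded-Toeplitz equivalence of Theorem~\ref{borne}, together with the Riesz--Fisher-type identification). The main obstacle I expect is purely bookkeeping: correctly tracking the skew-Hermitian structure of $\mathcal{N}$ and the transpose-versus-conjugate conventions so that the three signs (on $A(z)'P(z)$, on the Hadamard term, and the placement of $N(z)$ versus $-N(z)$) come out consistent with \eqref{al}; there is no analytic difficulty, since boundedness of all operators involved is already guaranteed by the hypothesis $A\in L^\infty$ and the cited equivalence, which legitimizes manipulating the products entrywise.
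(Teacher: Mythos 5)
Your approach is exactly the paper's: identify the symbols of the Toeplitz products $\mathcal{P}\mathcal{A}$, $\mathcal{A}^*\mathcal{P}$, and then compute the commutator $\mathcal{N}\mathcal{P}-\mathcal{P}\mathcal{N}$ (using $\mathcal{N}^*=-\mathcal{N}$) directly on the block entries and recognize it as a Hadamard multiplication by $N(z)$. One remark to settle the sign bookkeeping you flagged: expanding \eqref{al} gives the $\mathcal{N}$-terms as $-\mathcal{N}^*\mathcal{P}-\mathcal{P}\mathcal{N}=\mathcal{N}\mathcal{P}-\mathcal{P}\mathcal{N}$, and the $(k,\ell)$ block of this commutator is $j\omega k\,P_{k-\ell}-j\omega\ell\,P_{k-\ell}=j\omega(k-\ell)P_{k-\ell}$; with $p:=k-\ell$ this is $j\omega p\,P_p$, which is exactly the $p$-th coefficient of $N(z)\cdot P(z)$ for $N(z)=\sum_k j\omega k\,z^k$, so no sign flip of $N$ is needed. (In your write-up you first computed $\mathcal{P}\mathcal{N}-\mathcal{N}\mathcal{P}$, which is the negative, then re-derived the right sign; the statement's convention is already consistent.) Your caution about $\mathcal{A}^*\leftrightarrow A(z)'$ is also well placed and is exactly why the paper requires $A(t)$ real: then each entry $\mathcal{T}(a_{ij})$ is self-adjoint and taking the adjoint of the block matrix reduces to transposing the block structure, giving the symbol $A(z)'$. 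The injectivity remark (via Theorem~\ref{borne}) is a useful explicit justification of the final "replace Toeplitz by symbol" step, which the paper leaves implicit.
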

\begin{proof}
	Consider the harmonic Lyapunov equation \eqref{al}. It is straightforward to show that the product $ -\mathcal{N}^*\mathcal{P}-\mathcal{P}\mathcal{N}=\mathcal{N}\mathcal{P}-\mathcal{P}\mathcal{N}$ is formed by $ n\times n$ blocks of Toeplitz matrices whose symbols for $i,j:=1,\cdots, n $, are given by the Hadamard product $N(z)\cdot P_{ij}(z)$ where $P_{ij}(z)$ refers to the symbol associated the entry $(i,j)$ of the matrix $P(t)$. Consequently, the symbol associated to $ \mathcal{N}\mathcal{P}-\mathcal{P}\mathcal{N}$ is $$(\mathbbm{1}_{n,n}\otimes N(z))\cdot P(z)$$ where $N(z):=\sum_{k=-\infty}^{+\infty}j\omega k z^k$. Replacing the Toeplitz matrix by its symbol and noticing that the symbols associated to $\mathcal{A}^*$ are given by the transpose of $A(z)$ ends the proof.
\end{proof}
Looking at the previous symbol Lyapunov equation, we see that it is not possible to factorize $P(z)$ to obtain a solution. In the next theorem, we show that if we try to solve a truncated version of \eqref{al}, the resulting solution is not Toeplitz. As this Toeplitz property is required for the infinite dimension case, an important practical consequence is the fact that the time counterpart $P(t)$ does not exist and cannot be reconstructed using (\ref{recos}). For a better understanding, we show in the following theorem that the solution $P_m$ obtained by solving the truncated harmonic Lyapunov equation differs from the solution of the infinite-dimensional harmonic Lyapunov equation by a correcting term $\Delta P_m$. 
\begin{theorem}\label{finite_hle_m1}
	Consider finite dimension Toeplitz matrices $\mathcal{A}_m:=\mathcal{T}_m(A)$ and $\mathcal{Q}_m:=\mathcal{T}_m(Q)$. The solution $P_m$ of the Lyapunov equation
	\begin{equation}(\mathcal{A}_m-\mathcal{N}_m)^*P_m+P_m(\mathcal{A}_m-\mathcal{N}_m)+\mathcal{Q}_m=0\label{trunc_lyap_m1}\end{equation} is given by 
	$P_m:=\mathcal{P}_m+\Delta P_m$ where $\mathcal{P}_m:=\mathcal{T}_m(P)$ with $P(z)$ solution of \eqref{sym_syl} and $\Delta P_m$ satisfies:
	$$(\mathcal{A}_m-\mathcal{N}_m)^*\Delta P_m+\Delta P_m(\mathcal{A}_m-\mathcal{N}_m)=E^++E^-$$
with
	\begin{align*}
		E^+:=\mathcal{H}_{(m,\eta)}(A^+)&\mathcal{H}_{(\eta,m)}(P^-) +\mathcal{H}_{(m,\eta)}(P^+)\mathcal{H}_{(\eta,m)}(A^-)\\
		E^-:=\mathcal{J}_{n,m}(\mathcal{H}_{(m,\eta)}&(A^-)\mathcal{H}_m(P^+) \\&+\mathcal{H}_{(m,\eta)}(P^-)\mathcal{H}_{(\eta,m)}(A^+))\mathcal{J}_{n,m} 
	\end{align*}
and $2\eta \geq \min d^o(A(z),P(z))$.
\end{theorem}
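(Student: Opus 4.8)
The plan is to substitute $P_m := \mathcal{P}_m + \Delta P_m$ into the truncated Lyapunov equation \eqref{trunc_lyap_m1} and isolate the defect that the Toeplitz truncation $\mathcal{P}_m := \mathcal{T}_m(P)$ fails to satisfy. First I would observe that, since $P(z)$ solves the symbol Lyapunov equation \eqref{sym_syl}, the full (infinite-dimensional) matrices satisfy $\mathcal{P}(\mathcal{A}-\mathcal{N}) + (\mathcal{A}-\mathcal{N})^*\mathcal{P} + \mathcal{Q} = 0$, so the Toeplitz products $\mathcal{T}(A)^*\mathcal{T}(P)$ and $\mathcal{T}(P)\mathcal{T}(A)$ combine cleanly via the symbol product rule $\mathcal{T}(a)\mathcal{T}(b) = \mathcal{T}(ab)$; the only genuinely non-Toeplitz term in \eqref{al} is $\mathcal{N}\mathcal{P} - \mathcal{P}\mathcal{N}$, whose symbol is the Hadamard expression appearing in Proposition~\ref{symLyap}, and this is still compatible with truncation in the sense that $(\mathcal{N}_m\mathcal{P}_m - \mathcal{P}_m\mathcal{N}_m) = (\mathcal{N}\mathcal{P}-\mathcal{P}\mathcal{N})_m$ because $\mathcal{N}$ is diagonal. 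Hence the discrepancy between $(\mathcal{A}_m - \mathcal{N}_m)^*\mathcal{P}_m + \mathcal{P}_m(\mathcal{A}_m - \mathcal{N}_m)$ and the truncation of $(\mathcal{A}-\mathcal{N})^*\mathcal{P} + \mathcal{P}(\mathcal{A}-\mathcal{N}) = -\mathcal{Q}$ comes entirely from the failure of the Toeplitz product formula under truncation.

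The key computation is to apply Theorem~\ref{product}, equation \eqref{ee2}, to the two products $\mathcal{T}_m(A)^*\mathcal{T}_m(P)$ and $\mathcal{T}_m(P)\mathcal{T}_m(A)$. Writing $C(z) := A(z)'P(z)$ (so its truncation contributes, together with the $P(z)A(z)$ term and the Hadamard term, exactly $-\mathcal{Q}_m$ by the symbol identity), the formula \eqref{ee2} produces two Hankel correction terms for each product: one of the form $-\mathcal{H}_{(m,\eta)}(\cdot^+)\mathcal{H}_{(\eta,m)}(\cdot^-)$ supported in the upper-left corner, and one of the form $-\mathcal{J}_{n,m}\mathcal{H}_{(m,\eta)}(\cdot^-)\mathcal{H}_{(\eta,m)}(\cdot^+)\mathcal{J}_{n,m}$ supported in the lower-right corner. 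Collecting the upper-left contributions from $\mathcal{T}_m(A')\mathcal{T}_m(P)$ (which gives $\mathcal{H}_{(m,\eta)}(A^+)\mathcal{H}_{(\eta,m)}(P^-)$, noting $(A')^+$ relates to $A^+$ via transposition which is harmless in the block-entry bookkeeping as in the proof of Theorem~\ref{product}) and from $\mathcal{T}_m(P)\mathcal{T}_m(A)$ (which gives $\mathcal{H}_{(m,\eta)}(P^+)\mathcal{H}_{(\eta,m)}(A^-)$) yields precisely $E^+$, and symmetrically the lower-right contributions, conjugated by $\mathcal{J}_{n,m}$, yield $E^-$. Then $\Delta P_m := P_m - \mathcal{P}_m$ satisfies, by subtracting the two Lyapunov identities, $(\mathcal{A}_m - \mathcal{N}_m)^*\Delta P_m + \Delta P_m(\mathcal{A}_m - \mathcal{N}_m) = E^+ + E^-$ as claimed.

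The main obstacle I anticipate is bookkeeping: keeping straight the effect of the conjugate transpose on the first factor (so that $\mathcal{A}_m^*$'s Hankel pieces land in the right corners with the right $\pm$-parts of the symbol), and verifying that the $\mathcal{N}$-induced Hadamard term genuinely commutes with $m$-truncation so it contributes nothing to $E^+ + E^-$. Both are routine once one unwinds Theorem~\ref{product} block-entry-wise exactly as in its proof, using $2\eta \geq \min d^o(A(z),P(z))$ to guarantee the Hankel blocks are taken at a consistent inner dimension $\eta$. One should also note $\mathcal{H}_m$ in the statement of $E^-$ is shorthand for $\mathcal{H}_{(\eta,m)}$ or $\mathcal{H}_{(m,\eta)}$ as dictated by matching dimensions, exactly as in the banded illustration following Theorem~\ref{product}. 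No convergence or boundedness argument is needed here since everything is finite-dimensional; that analysis is deferred to the subsequent error estimates.
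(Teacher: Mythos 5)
Your argument is exactly the paper's (whose own proof is the single remark that this is ``obvious from Theorem~\ref{product}, noticing that the $\mathcal{N}$ terms give no correction since $\mathcal{N}$ is diagonal''): substitute, apply the truncated-product formula \eqref{ee2} to $\mathcal{T}_m(A')\mathcal{T}_m(P)$ and $\mathcal{T}_m(P)\mathcal{T}_m(A)$, use the symbol identity to cancel against $\mathcal{Q}_m$, and collect the Hankel residues. You also correctly flag the only delicate bookkeeping point, namely that $\mathcal{A}_m^{*}=\mathcal{T}_m(A')$ so the first Hankel factor is really $\mathcal{H}_{(m,\eta)}((A')^{+})$ (the paper's $E^{+},E^{-}$ suppress the transpose in the notation), which is absorbed harmlessly in the block-entry computation of Theorem~\ref{product}.
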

\begin{proof}
	The proof is obvious from Theorem~\ref{product} and noticing that  $-\mathcal{N}_m^*{P}_m-{P}_m\mathcal{N}_m$ does not give rise to a correction term since $\mathcal{N}$ is a diagonal matrix.
\end{proof}
In practice, it is not clear how the Toeplitz part $\mathcal{P}_m$ of $P_m$ can be extracted since the symbol $P(z)$ is implicitly given by \eqref{sym_syl}. In fact, it can be shown that this linear problem is rank deficient and has infinitely many solutions. Thus, our aim is to prove that $\mathcal{P}$ can be determined up to an arbitrary small error.
The necessity to determine $\mathcal{P}$ up to an arbitrary small error instead of $P_m$ is crucial to prove stability of
$(\mathcal{A}-\mathcal{N})$. This is due to the fact that the matrix $(\mathcal{A}_m-\mathcal{N}_m)$ would never be Hurwitz for any $m$ when $(\mathcal{A}-\mathcal{N})$ is Hurwitz.

\begin{theorem}\label{sylvester_approx}Assume that $(\mathcal{A}-\mathcal{ N})$ is invertible.
	 {The phasor ${\bf P}:=\mathcal{F}(P)$ associated to the solution $\mathcal{P}:=\mathcal{T}(P)$} of the infinite-dimensional harmonic Lyapunov equation (\ref{al}) is given by:
	\begin{equation}col({\bf P}):=-(Id_n \otimes (\mathcal{A}-\mathcal{ N})^*+Id_n\circ\mathcal{A}^*)^{-1}col({\bf Q})\label{inf_sym}\end{equation}
	where {\small\begin{equation}Id_n\circ\mathcal{A}:=\left(\begin{array}{cccc} 
				Id_n\otimes \mathcal{A}_{11} & Id_n\otimes\mathcal{A}_{12} & \cdots & Id_n\otimes\mathcal{A}_{1n} \\
				Id_n\otimes\mathcal{A}_{21} & Id_n\otimes\mathcal{A}_{22} & & \vdots \\
				\vdots & & \ddots& \vdots \\
				Id_n\otimes\mathcal{A}_{n1} & \cdots & \cdots & Id_n\otimes\mathcal{A}_{nn}\end{array}\right)\label{symB}\end{equation}}
	with $\mathcal{N}$ given by \eqref{N} and where the matrix ${\bf Q}:=\mathcal{F}(Q)$.
	
\end{theorem}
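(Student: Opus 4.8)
The plan is to start from the harmonic Lyapunov equation \eqref{al} and vectorize it. Recall that the symbol version of \eqref{al} is the symbol Lyapunov equation \eqref{sym_syl} of Proposition~\ref{symLyap}, namely $A(z)'P(z)+P(z)A(z)+(\mathbbm{1}_{n,n}\otimes N(z))\cdot P(z)+Q(z)=0$. The key observation is that the Hadamard-product term $(\mathbbm{1}_{n,n}\otimes N(z))\cdot P(z)$ comes from the commutator $\mathcal{N}\mathcal{P}-\mathcal{P}\mathcal{N}$, and that on the operator side the whole equation reads $(\mathcal{A}-\mathcal{N})^*\mathcal{P}+\mathcal{P}(\mathcal{A}-\mathcal{N})+\mathcal{Q}=0$. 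I would apply the $col(\cdot)$ vectorization to this operator identity, using the standard identity $col(MXN)=(N'\otimes M)\,col(X)$ adapted to the block-Toeplitz setting. The term $\mathcal{P}(\mathcal{A}-\mathcal{N})$ vectorizes as $\big(Id_n\otimes(\mathcal{A}-\mathcal{N})^*\big)col(\mathbf{P})$ once one passes to the phasor representation $\mathbf{P}=\mathcal{F}(P)$ and keeps track of conjugation/transposition (since $\mathcal{P}$ is hermitian, $\mathcal{P}^*=\mathcal{P}$, and the transpose on the symbol side matches the $A(z)'$ appearing in \eqref{sym_syl}). The term $(\mathcal{A}-\mathcal{N})^*\mathcal{P}$ vectorizes, because $\mathcal{A}$ is block Toeplitz, into the operator $Id_n\circ\mathcal{A}^*$ defined in \eqref{symB} — this is precisely the bookkeeping device that records ``left multiplication by the block-Toeplitz $\mathcal{A}^*$'' at the level of the stacked phasors — together with the diagonal $\mathcal{N}$ contribution, which is already absorbed into $(\mathcal{A}-\mathcal{N})^*$ in the first summand. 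Collecting the two contributions gives the operator $Id_n\otimes(\mathcal{A}-\mathcal{N})^*+Id_n\circ\mathcal{A}^*$ acting on $col(\mathbf{P})$, equal to $-col(\mathbf{Q})$.

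The remaining point is invertibility of $Id_n\otimes(\mathcal{A}-\mathcal{N})^*+Id_n\circ\mathcal{A}^*$, which is what allows one to write \eqref{inf_sym}. Here I would invoke the hypothesis that $(\mathcal{A}-\mathcal{N})$ is invertible together with the strict equivalence result recalled in Section~III: \eqref{al} has a unique bounded hermitian solution $\mathcal{P}$, hence the linear map $\mathcal{X}\mapsto(\mathcal{A}-\mathcal{N})^*\mathcal{X}+\mathcal{X}(\mathcal{A}-\mathcal{N})$ is a bijection on the relevant space of block-Toeplitz operators (more precisely, on hermitian bounded ones, and by the Toeplitz structure on the associated phasor sequences). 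Vectorization being a linear isomorphism onto $col$-sequences, the associated matrix $Id_n\otimes(\mathcal{A}-\mathcal{N})^*+Id_n\circ\mathcal{A}^*$ inherits the bijectivity, and \eqref{inf_sym} follows by inverting it and applying it to $-col(\mathbf{Q})$. One should also note, as in the statement, that $\mathbf{Q}=\mathcal{F}(Q)$ and $\mathbf{P}=\mathcal{F}(P)$ are the genuine Fourier-coefficient sequences (not the Toeplitz operators), so the identity $col(\mathbf{P})=-(\cdots)^{-1}col(\mathbf{Q})$ lives in the phasor space $\ell^2$, consistent with Proposition~\ref{fro}.

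I expect the main obstacle to be the careful identification of the vectorized form of the left-multiplication term $(\mathcal{A}-\mathcal{N})^*\mathcal{P}$ — specifically, showing that when one restricts attention to the phasor sequences $\mathbf{P}$ (rather than to all entries of the doubly-infinite operator $\mathcal{P}$), left multiplication by the block-Toeplitz $\mathcal{A}^*$ is represented exactly by the operator $Id_n\circ\mathcal{A}^*$ of \eqref{symB}, with the Kronecker factors in the right order and the conjugate-transpose in the right place. This is essentially a combinatorial check on how block-Toeplitz multiplication acts columnwise, combined with the symbol identity of Proposition~\ref{symLyap} to handle the $\mathcal{N}$ commutator; the rest is routine linear algebra plus the invertibility argument above.
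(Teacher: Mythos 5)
Your overall route is the same as the paper's: start from the symbol Lyapunov equation \eqref{sym_syl} of Proposition~\ref{symLyap}, apply the Sylvester vectorization identity $(Id_n\otimes A+B'\otimes Id_n)col(X)=col(C)$ to the $z$-dependent matrices, replace each scalar symbol entry by its scalar Toeplitz operator to lift to the phasor-coefficient level, and obtain invertibility of the resulting operator from the uniqueness of the solution to \eqref{al}. This is exactly the paper's argument.

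However, your bookkeeping of which summand comes from which multiplication is reversed. Vectorizing $A(z)'P(z)$ — which is the symbol of the \emph{left}-multiplication term $\mathcal{A}^*\mathcal{P}$ inside $(\mathcal{A}-\mathcal{N})^*\mathcal{P}$ — gives $(Id_n\otimes A(z)')col(P(z))$, and after passing to coefficients (plus absorbing the $\mathcal{N}$-commutator, which can be written equally as $Id_n\otimes\mathcal{N}$ or $Id_n\circ\mathcal{N}$) this yields the factor $Id_n\otimes(\mathcal{A}-\mathcal{N})^*$. Conversely, vectorizing $P(z)A(z)$ — the symbol of the \emph{right}-multiplication term $\mathcal{P}\mathcal{A}$ inside $\mathcal{P}(\mathcal{A}-\mathcal{N})$ — gives $(A(z)'\otimes Id_n)col(P(z))$, which lifts to $Id_n\circ\mathcal{A}^*$ (the $\circ$-notation of \eqref{symB} is precisely the coefficient-level incarnation of the symbol $A(z)'\otimes Id_n$, i.e. inner identity, outer Toeplitz). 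So the assignment in your proposal — $\mathcal{P}(\mathcal{A}-\mathcal{N})\mapsto Id_n\otimes(\mathcal{A}-\mathcal{N})^*$ and $(\mathcal{A}-\mathcal{N})^*\mathcal{P}\mapsto Id_n\circ\mathcal{A}^*$ — should be swapped. This does not affect the final formula \eqref{inf_sym}, which is symmetric in the two contributions once they are summed, but carrying out the ``combinatorial check'' you flag as the main obstacle would surface this. Everything else, including the invertibility-via-uniqueness step and the observation that the $\mathcal{N}$ contribution enters through the Hadamard term, is correct and in line with the paper.
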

\begin{proof}
	Applying the well known formula $(Id_n \otimes A +B'\otimes Id_m)col(X)=col(C)$ associated to the Sylvester equation $AX+XB=C$
	to the case of the symbol Lyapunov equation~\eqref{sym_syl}, one gets:
	\begin{align*}(Id_n \otimes A(z)'& +A(z)'\otimes Id_n)col(P(z))\\
		&+(\mathbbm{1}_{n^2,1}\otimes N(z))\cdot col(P(z))&=-col(Q(z))\end{align*}
	Notice that $col((\mathbbm{1}_{n,n}\otimes N(z))\cdot P(z))=col(\mathbbm{1}_{n,n}\otimes N(z))\cdot col(P(z))
	=(\mathbbm{1}_{n^2,1}\otimes N(z))\cdot col(P(z)) $. Observe that the $i-$th lines, $i=1,\cdots,n^2$, of this multi-polynomial equation is given by:
	\begin{equation*}\sum_{j=1}^{n^2}M_{ij}(z)P_j(z)+N(z)\cdot P_i(z)=-Q_i(z)\end{equation*}
	where $P_j(z)$, $j:=1,\cdots ,n^2$ refers to the components of $col(P(z))$ and where the terms $M_{ij}(z):=(Id_n \otimes A(z)' +A(z)'\otimes Id_n)_{ij}$ are determined from the expansions:
{\small$$Id_n \otimes A(z)'=\left.\left(\begin{array}{cccc}A'(z) & 0 & 0 & 0 \\0 & A'(z) & 0 & 0 \\0 & 0 & \ddots & 0 \\0 & 0 & 0 & A'(z)\end{array}\right)\right\} n \text{ times}$$} with $A(z)$ provided by \eqref{symA} and
	{\small$$A(z)'\otimes Id_n=\left.\left(\begin{array}{cccc}
			A_{11}(z)Id_n & A_{21}(z)Id_n & \cdots & A_{n1}(z)Id_n \\
			A_{12}(z)Id_n & A_{22}(z)Id_n & & \\
			\vdots & & \ddots & \vdots \\
			A_{1n}(z)Id_n & \cdots & \cdots & A_{nn}(z)Id_n\end{array}\right)\right..$$}
	
	Recall that $\mathcal{N}:=Id_n\otimes diag(j\omega k,\ k\in \mathbb{Z})$. The symbol $(\mathbbm{1}_{n^2,1}\otimes N(z))$ has coefficients corresponding to the matrix $Id_n \otimes \mathcal{ N}=-Id_n \otimes \mathcal{ N}^*$.  {Replacing each symbol $A_{ij}(z)$ in the above equation with their associated Toeplitz matrix leads to an equivalent equation involving the coefficients:}
	\begin{equation*}(Id_n \otimes (\mathcal{A}-\mathcal{ N})^*+Id_n\circ\mathcal{A}^*)col({\bf P})=-col({\bf Q})\end{equation*}
	where $Id_n\circ\mathcal{A}$ is given by \eqref{symB}, ${\bf P}:=\mathcal{F}(P)$ and ${\bf Q}:=\mathcal{F}(Q)$.
	
	If $(\mathcal{A}-\mathcal{N})$ is invertible, it follows necessarily that $(Id_n \otimes (\mathcal{A}-\mathcal{ N})^*+Id_n\circ\mathcal{A}^*)$ is also invertible, otherwise it contradicts the fact that the solution of the harmonic Lyapunov equation is uniquely defined. This concludes the proof.
\end{proof}
We are now in position to state one of the main results of this paper. To this end, define for any given $m$ the $m-$truncated solution as 
\begin{equation}col({\bf \tilde P}_m):=-(Id_n \otimes (\mathcal{A}_m-\mathcal{ N}_m)^*+Id_n\circ\mathcal{A}^*_m))^{-1}col({\bf Q}|_m)\label{inf_sol_2}\end{equation} with $\mathcal{A}_m:=\mathcal{T}_m(A)$, $\mathcal{N}_m:=Id_n\otimes diag(j\omega k,\ |k|\leq m)$ and where $Id_n\circ\mathcal{A}$ is defined by \eqref{symB}. The components of the $m-$truncated matrix ${\bf Q}|_m$ are given by $$({\bf Q}|_m)_{ij}:=\mathcal{F}|_m(q_{ij}),\ i,j := 1,\cdots, n,$$ with $\mathcal{F}|_m(q_{ij})$ the $m-$truncation of $\mathcal{F}(q_{ij})$ obtained by suppressing all phasors of order $|k|>m$.

\begin{theorem}\label{approx_P_n} Assume that $A(t)\in L^\infty([0 \ T])$ and $(\mathcal{A}-\mathcal{N})$ is invertible. For any given $\epsilon>0$, there exists $m_0$ such that 
	for any $m\geq m_0$: $$\|col({\bf {P}-\tilde {{P}}}_m)\|_{\ell^2}<\epsilon$$
	where ${\bf P}$, given by \eqref{inf_sym}, is the solution of the infinite-dimensional problem.
	Moreover, $$\|\mathcal{P}-\tilde{\mathcal{P}}_m\|_{\ell^2}<\epsilon$$
	with $\mathcal{P}:=\mathcal{T}(P)$ and $\tilde{\mathcal{P}}_m:=\mathcal{T}(\tilde P_m)$.
\end{theorem}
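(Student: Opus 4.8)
The plan is to recognise \eqref{inf_sol_2} as the finite-section (Galerkin) discretisation of the symbol equation \eqref{inf_sym} and to prove that this discretisation converges \emph{on the specific right-hand side} $col({\bf Q})$. Write $\mathcal{M}:=Id_n\otimes(\mathcal{A}-\mathcal{N})^{*}+Id_n\circ\mathcal{A}^{*}$ for the operator of \eqref{inf_sym}, $\mathcal{M}_m:=Id_n\otimes(\mathcal{A}_m-\mathcal{N}_m)^{*}+Id_n\circ\mathcal{A}^{*}_m$ for the one in \eqref{inf_sol_2}, and let $\Pi_m$ denote the coordinate projection keeping the phasors of order $|k|\le m$ and discarding the rest. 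Because $\mathcal{A}_m=\Pi_m\mathcal{A}\Pi_m$, $\mathcal{N}_m=\Pi_m\mathcal{N}\Pi_m$, and because $\mathcal{M}$ is a \emph{sum} of Kronecker products of single operators — no product of two infinite Toeplitz matrices occurs, hence, unlike in Theorem~\ref{finite_hle_m1}, truncation generates \emph{no} Hankel correction term — one gets $\mathcal{M}_m=\Pi_m\mathcal{M}\Pi_m$ and $col({\bf Q}|_m)=\Pi_m\,col({\bf Q})$; thus \eqref{inf_sol_2} reads $\mathcal{M}_m\,col({\bf \tilde P}_m)=-\Pi_m\,col({\bf Q})$ with $col({\bf \tilde P}_m)$ supported in $|k|\le m$.

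The convergence rests on three ingredients. \emph{First}, $\mathcal{M}$ is invertible: this is the closing statement of Theorem~\ref{sylvester_approx}, which uses that $(\mathcal{A}-\mathcal{N})$ is invertible. \emph{Second} — the key point, and the place where the spectral study of Section~IV is genuinely used — the sections $\mathcal{M}_m$ are invertible for $m$ large, uniformly in $m$: using $\mathcal{N}^{*}=-\mathcal{N}$ one has $\mathcal{M}=(Id_n\otimes\mathcal{A}^{*}+Id_n\circ\mathcal{A}^{*})+Id_n\otimes\mathcal{N}$, where the first bracket is a \emph{bounded} block Toeplitz operator (bounded since $A\in L^{\infty}$, by Theorem~\ref{borne}; in fact $\mathcal{A}^{*}=\mathcal{T}(A')$ is itself block Toeplitz with real $L^{\infty}$ symbol), while $Id_n\otimes\mathcal{N}=Id_{n^{2}}\otimes diag(j\omega k,\ k\in\mathbb{Z})$ is, up to the trivial relabelling $k\mapsto-k$, precisely the unbounded diagonal matrix \eqref{N} of an $n^{2}$-dimensional system. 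Hence $\mathcal{M}$ is (unitarily equivalent to) a harmonic state operator $\hat{\mathcal{A}}-\hat{\mathcal{N}}$ with $\hat A\in L^{\infty}$, and $\mathcal{M}_m$ is its $m$-truncation; Corollary~\ref{specbound} then applies verbatim and furnishes an $m_0$ such that $\mathcal{M}_m$ is invertible and $C:=\sup_{m\ge m_0}\|\mathcal{M}_m^{-1}\|_{\ell^{2}}<+\infty$. \emph{Third}, consistency: $col({\bf P})\in\ell^{2}$ because $P\in L^{2}$ (Riesz-Fisher), and moreover $P$ is $T$-periodic, absolutely continuous and satisfies $\dot P=-(A'P+PA+Q)\in L^{\infty}\subset L^{2}$, so $(Id_n\otimes\mathcal{N})\,col({\bf P})\in\ell^{2}$ as well; since a diagonal operator commutes with $\Pi_m$, one bounds $\|\mathcal{M}(\mathcal{I}-\Pi_m)col({\bf P})\|_{\ell^{2}}$ by a fixed constant times $\|(\mathcal{I}-\Pi_m)col({\bf P})\|_{\ell^{2}}$ plus $\|(\mathcal{I}-\Pi_m)(Id_n\otimes\mathcal{N})col({\bf P})\|_{\ell^{2}}$, and both tend to $0$ as $m\to\infty$ (tails of $\ell^{2}$ vectors).

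These are assembled as follows. From $\mathcal{M}_m\,col({\bf\tilde P}_m)=-\Pi_m col({\bf Q})$ and $\mathcal{M}_m\Pi_m col({\bf P})=\Pi_m\mathcal{M}\Pi_m col({\bf P})=-\Pi_m col({\bf Q})-\Pi_m\mathcal{M}(\mathcal{I}-\Pi_m)col({\bf P})$ one gets the identity
\begin{equation*}\mathcal{M}_m\big(col({\bf\tilde P}_m)-\Pi_m col({\bf P})\big)=\Pi_m\,\mathcal{M}(\mathcal{I}-\Pi_m)col({\bf P}),\end{equation*}
hence $\|col({\bf\tilde P}_m)-\Pi_m col({\bf P})\|_{\ell^{2}}\le C\,\|\mathcal{M}(\mathcal{I}-\Pi_m)col({\bf P})\|_{\ell^{2}}\to 0$; adding $\|(\mathcal{I}-\Pi_m)col({\bf P})\|_{\ell^{2}}\to 0$ gives, for $m\ge m_0$ large enough, $\|col({\bf P}-{\bf\tilde P}_m)\|_{\ell^{2}}<\epsilon$. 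The operator-norm estimate then follows at once from Proposition~\ref{fro}: $P-\tilde P_m\in L^{\infty}$ (as $P$ is $T$-periodic and absolutely continuous and $\tilde P_m$ is a trigonometric polynomial) and $\mathcal{F}(P-\tilde P_m)={\bf P}-{\bf\tilde P}_m$, whence $\|\mathcal{P}-\tilde{\mathcal{P}}_m\|_{\ell^{2}}=\|\mathcal{T}(P-\tilde P_m)\|_{\ell^{2}}\le\|col({\bf P}-{\bf\tilde P}_m)\|_{\ell^{2}}<\epsilon$.

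The main obstacle is the second ingredient: one must observe that $\mathcal{M}$, although built from the unbounded operator $(\mathcal{A}-\mathcal{N})$, is again of the form ``bounded block Toeplitz $+$ unbounded shift diagonal'', i.e. structurally a harmonic state operator, so that the uniform-resolvent conclusion of Corollary~\ref{specbound} transfers to it and yields $\sup_{m}\|\mathcal{M}_m^{-1}\|_{\ell^{2}}<\infty$ — even though, exactly as for $(\mathcal{A}_m-\mathcal{N}_m)$ itself, the sections $\mathcal{M}_m$ need not be ``stable'' in any naive sense, their spurious eigenvalues escaping to infinity with $m$. Everything else is the routine finite-section/Galerkin estimate, which is why this formulation (rather than the truncated Lyapunov equation of Theorem~\ref{finite_hle_m1}) is the right one to work with.
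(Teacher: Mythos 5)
Your proof is correct, and it takes the same essential route as the paper (uniform invertibility of the truncated system operator via Corollary~\ref{specbound} plus smallness of the truncation defect) but packaged differently. The paper reduces to $n=1$, writes the symbol Lyapunov equation in phasor form as $(2\mathcal{A}-\mathcal{N})^*{\bf P}=-{\bf Q}$, and quantifies the truncation defect explicitly as Hankel operators applied to the $\ell^2$-tails of ${\bf P}$ via Proposition~\ref{product3}; for $n>1$ it merely says ``invoking similar steps.'' You instead treat the general $n$ case directly, through the abstract finite-section identity $\mathcal{M}_m\bigl(col({\bf\tilde P}_m)-\Pi_m col({\bf P})\bigr)=\Pi_m\mathcal{M}(\mathcal{I}-\Pi_m)col({\bf P})$, which is the clean Galerkin formulation of the same cancellation. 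The observation that is genuinely added value is the structural one: $\mathcal{M}=Id_n\otimes(\mathcal{A}-\mathcal{N})^*+Id_n\circ\mathcal{A}^*$ is itself (after the sign/flip conjugation $k\mapsto -k$) the harmonic state operator of an $n^2$-dimensional system with $L^\infty$ symbol $Id_n\otimes A'+A'\otimes Id_n$, so Corollary~\ref{specbound} applies \emph{verbatim} for every $n$, not only for $n=1$; this is precisely the detail the paper glosses over, and making it explicit both justifies the general-$n$ statement and clarifies why the finite sections $\mathcal{M}_m$ are uniformly invertible even though they are not stable in a naive Hurwitz sense.

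One small inefficiency worth pointing out: your ``third ingredient'' ($\dot P\in L^2$, hence $(Id_n\otimes\mathcal{N})col({\bf P})\in\ell^2$) is correct but unnecessary. In the identity above, the right-hand side carries an outer $\Pi_m$, and since $Id_n\otimes\mathcal{N}$ is diagonal it commutes with $\Pi_m$, so $\Pi_m(Id_n\otimes\mathcal{N})(\mathcal{I}-\Pi_m)=\Pi_m(\mathcal{I}-\Pi_m)(Id_n\otimes\mathcal{N})=0$: the diagonal (unbounded) part of $\mathcal{M}$ contributes nothing to the defect, and all that remains is the bounded Toeplitz part applied to the tail, which tends to $0$ because $col({\bf P})\in\ell^2$. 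Dropping the bound $\|\mathcal{M}(\mathcal{I}-\Pi_m)col({\bf P})\|$ in favour of $\|\Pi_m\mathcal{M}(\mathcal{I}-\Pi_m)col({\bf P})\|$ (which is what the identity really requires) makes the argument shorter and aligns it exactly with the paper's observation that ``$-\mathcal{N}^*_mP_m-P_m\mathcal{N}_m$ does not give rise to a correction term since $\mathcal{N}$ is a diagonal matrix.''
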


\begin{proof} It is sufficient to prove the theorem for $n:= 1$. 
	In this case, $A(z)=A_{11}(z)$ and $A(z)'=A(z)$. The symbol equation \eqref{sym_syl} reduces to:
	$2A(z)P(z)+N(z)\cdot P(z)+Q(z)=0$.
	
	Now, observe that the $k-$th coefficient of $A(z)P(z)$ for $ k\in \mathbb{Z} $ is provided by \begin{equation*}
		(A(z)P(z))_k=\sum_{ r \in\mathbb{Z}}A_{k-r}P_r\end{equation*}
	while the $k-$th coefficient of $N(z)\cdot P(z)$ is given by: $$(N(z)\cdot P(z))_k=j\omega k P_k.$$
	Thus, the symbol Lyapunov equation $2A(z)P(z)+N(z)\cdot P(z)+Q(z)=0$ can be rewritten equivalently by means to its coefficients as the infinite-dimensional linear system:
	
	\begin{equation}(2\mathcal{A}-\mathcal{N})^*{\bf P}=-{\bf Q}\label{inf_sol}\end{equation}
	where ${\bf P}$ and ${\bf Q}$ are infinite vectors whose components are the coefficients of $P(z)$ and $Q(z)$ (or equivalently the phasors of their time counterpart $P(t)$ and $Q(t)$).
	If a $m-$truncation on Equation \eqref{inf_sol} is applied, we obtain:
	$$(2\mathcal{A}_m-\mathcal{N}_m)^*{\bf P}|_m=-{\bf Q}|_m+E_m^++E_m^-$$
	where the correcting term is given by (see \eqref{z2} in Proposition~\ref{product3})
	$$E_m^+:=-2\mathcal{H}_{(m,\infty)}(A^+){\bf P}|_m^-$$
	$$E_m^-:=-2J\mathcal{H}_{(m,\infty)}(A^-){\bf P}|_m^+$$
	with ${\bf P}|_m^+:=(P_{{m+1}},P_{{m+2}},\cdots)$ and ${\bf P}|_m^-:=(P_{{-m-1}},P_{{-m-2}},\cdots)$. As the operator $(\mathcal{A}-\mathcal{N})$ is invertible, the matrix $\Lambda$ in Equation \eqref{lti} is also invertible
	as well as $2\Lambda$. Consequently, the spectrum of $ (2\mathcal{A}-\mathcal{N})$ is given by $ \sigma(2\mathcal{A}-\mathcal{N})=\{2\lambda+j\omega k: k\in \mathbb{Z}\}$.
	Therefore, $(2\mathcal{A}-\mathcal{N})$ is invertible as well as $(2\mathcal{A}-\mathcal{N})^*$.
	Then, Corollary~\ref{specbound} implies that $(2\mathcal{A}_m-\mathcal{N}_m)^*$ is invertible for $m$ sufficiently large.
	Now, define ${\bf \tilde P}_m$ the solution of the $m-$truncated problem by the following relation: 
	\begin{equation*}{\bf \tilde P}_m:=-(2\mathcal{A}_m-\mathcal{N}_m)^{*-1}{\bf Q}|_m.\end{equation*}
	Therefore, we have:
	$${\bf \tilde P}_m-{\bf P}|_m=(2\mathcal{A}_m-\mathcal{N}_m)^{*-1}(E_m^++E_m^-)$$
	with a $\ell^2$-norm bounded by (see Theorem~\ref{borne}):
	\begin{align*}
		\|{\bf \tilde P}_m&-{\bf P}|_m\|_{\ell^2}\leq\|(2\mathcal{A}_m-\mathcal{N}_m)^{*-1}\|_{\ell^2}\|(E_m^++E_m^-)\|_{\ell^2}\\
		&\leq2\|A\|_{L^\infty}\|(2\mathcal{A}_m-\mathcal{N}_m)^{*-1}\|_{\ell^2}(\|{\bf P}|_m^-\|_{\ell^2}+ \|{\bf P}|_m^+\|_{\ell^2}).\end{align*}
	As $(2\mathcal{A}_m-\mathcal{N}_m)^{*-1}$ is uniformly bounded (see Corollary~\ref{specbound}), and as $\|{\bf P}|_m^-\|_{\ell^2}=\|{\bf P}|_m^+\|_{\ell^2}\rightarrow 0$ when $m\rightarrow +\infty$, we conclude that for a given $\epsilon>0 $, there exists $m_0$ such that for any $m\geq m_0$, $\|{\bf P}|_m- {\bf \tilde P}_m\|_{\ell^2}< \epsilon/2$ and $\|{\bf P}-{\bf P}|_m\|_{\ell^2}< \epsilon/2$ as the phasors ${\bf P}_k\rightarrow 0$ when $k\rightarrow +\infty$.
Finally, we obtain: $$\|{\bf P}-{\bf \tilde P}_m\|_{\ell^2}< \epsilon$$
	assuming $({\bf \tilde P}_m)_k:=0$ for $|k|>m$. To prove the last assertion, that is $\|\mathcal{P}- \tilde {\mathcal{P}}_m\|_{\ell^2}< \epsilon$, notice that for $n:=1$, $col({\bf P}- {\bf \tilde P}_m)={\bf P}- {\bf\tilde P}_m.$
	When $n\geq 1$, invoking similar steps as before yields $\|col({\bf P}- {\bf \tilde P}_m)\|_{\ell^2}< \epsilon$. The proof is completed invoking Proposition~\ref{fro}, for any $n$.
\end{proof}
\begin{remark} Using the symbolic equation to derive the approximate solution allows to obtain a significant reduction of the computational burden since the linear problem defined by \eqref{inf_sol_2} is of dimension $n(2m+1)$ while the one defined by \eqref{trunc_lyap_m1} is of dimension $n^2(2m+1)^2$.\end{remark}

The following Corollary is interesting from a practical point of view in order to determine an accurate solution to the infinite harmonic Lyapunov equation from \eqref{inf_sol_2}. Indeed, for a prescribed $\epsilon>0$, it is sufficient to increases $m$ in \eqref{inf_sol_2} until \eqref{sym_LYAP} is satisfied.

\begin{corollary}\label{accurate_m}
	For a given $\epsilon>0$, there exists $m_0$ such that for any $m\geq m_0$, the symbol $\tilde P_m(z)$ associated to
	${\bf \tilde P}_m$ satisfies:
	\begin{equation}\|A(z)' \tilde P_m(z)+\tilde P_m(z)A(z)+(\mathbbm{1}_n\otimes N(z))\cdot \tilde P_m(z)+Q(z)\|_{\ell^2}<\epsilon \label{sym_LYAP}\end{equation} \end{corollary}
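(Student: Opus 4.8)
The plan is to show that the residual of the symbol Lyapunov equation evaluated at $\tilde P_m(z)$ is controlled by the $\ell^2$ distance between $\tilde{\mathcal P}_m$ and the exact solution $\mathcal P$, which was already bounded by $\epsilon$ (up to renaming constants) in Theorem~\ref{approx_P_n}. Concretely, write $\mathcal{L}(X):=(\mathcal A-\mathcal N)^*X+X(\mathcal A-\mathcal N)+\mathcal Q$ for the Lyapunov operator appearing in \eqref{al}, so that $\mathcal{L}(\mathcal P)=0$. By Proposition~\ref{symLyap}, the symbol identity associated with $\mathcal{L}(X)$ for a Toeplitz $X=\mathcal T(\tilde P_m)$ is exactly the left-hand side of \eqref{sym_LYAP}. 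Hence \eqref{sym_LYAP} is nothing but $\|\mathcal{L}(\tilde{\mathcal P}_m)\|_{\ell^2}<\epsilon$ read through the symbol/Toeplitz correspondence, using Theorem~\ref{borne}.(1) that the $\ell^2$ operator norm of a Toeplitz matrix equals the $\ell^2$ norm of its symbol.

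First I would invoke Theorem~\ref{approx_P_n}: given $\epsilon>0$ there is $m_0$ such that for $m\ge m_0$ one has $\|\mathcal P-\tilde{\mathcal P}_m\|_{\ell^2}<\epsilon'$ with $\epsilon'$ to be fixed below. Then, since $\mathcal{L}(\mathcal P)=0$, linearity gives $\mathcal{L}(\tilde{\mathcal P}_m)=\mathcal{L}(\tilde{\mathcal P}_m-\mathcal P)=(\mathcal A-\mathcal N)^*(\tilde{\mathcal P}_m-\mathcal P)+(\tilde{\mathcal P}_m-\mathcal P)(\mathcal A-\mathcal N)$. Here I must be careful: $(\mathcal A-\mathcal N)$ is unbounded, so I cannot naively bound this by $2\|\mathcal A-\mathcal N\|\,\|\tilde{\mathcal P}_m-\mathcal P\|$. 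Instead I would go back to the phasor level as in the proof of Theorem~\ref{approx_P_n}: the vectorized residual is $(Id_n\otimes(\mathcal A-\mathcal N)^*+Id_n\circ\mathcal A^*)\,col({\bf P}-{\bf\tilde P}_m)$ plus the truncation-error terms $E_m^+,E_m^-$ that were already shown to tend to $0$ in $\ell^2$. Equivalently, use that $col({\bf\tilde P}_m)$ solves the $m$-truncated system \eqref{inf_sol_2} exactly, so that the only obstruction to $\tilde P_m(z)$ solving the infinite symbol equation \eqref{sym_syl} is the reintroduction of the Hankel/truncation corrections $E_m^+,E_m^-$ of Theorem~\ref{finite_hle_m1} (now written for the symbol equation, $n$ arbitrary). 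By Theorem~\ref{borne}.(3) these corrections have $\ell^2$ norm bounded by $2\|A\|_{L^\infty}(\|{\bf P}|_m^-\|_{\ell^2}+\|{\bf P}|_m^+\|_{\ell^2})$, which goes to $0$ as $m\to\infty$ because ${\bf P}\in\ell^2$.

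Assembling: for $m\ge m_0$ the residual in \eqref{sym_LYAP}, viewed as a symbol, has coefficient vector equal to $-(E_m^++E_m^-)$ evaluated with $P$ replaced by $\tilde P_m$; replacing $\tilde P_m$ by $P$ inside the Hankel blocks costs an extra term bounded by $\|\mathcal H(\cdot)\|\,\|{\bf P}-{\bf\tilde P}_m\|_{\ell^2}\le\|A\|_{L^\infty}\epsilon'$, again by Theorem~\ref{borne}.(3) and Theorem~\ref{approx_P_n}. Choosing $m_0$ large enough that both $2\|A\|_{L^\infty}(\|{\bf P}|_m^-\|_{\ell^2}+\|{\bf P}|_m^+\|_{\ell^2})<\epsilon/2$ and $\epsilon'$ small enough that the remaining contribution is $<\epsilon/2$, and passing from the $\ell^2$ norm of the coefficient vector to the $\ell^2$ norm of the symbol via Proposition~\ref{fro}, yields \eqref{sym_LYAP}. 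For the general $n\times n$ case I would, exactly as in the proofs of Theorems~\ref{spectre_plus}--\ref{approx_P_n}, reduce to scalar Toeplitz blocks using Proposition~\ref{general} and Theorem~\ref{product}, the block bookkeeping being routine.

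**Main obstacle.** The delicate point is the unboundedness of $(\mathcal A-\mathcal N)$: one cannot estimate the residual directly from $\|\mathcal P-\tilde{\mathcal P}_m\|_{\ell^2}$ through the Lyapunov operator, so the estimate must be routed through the finite-dimensional identity \eqref{inf_sol_2} together with the Hankel correction terms, whose smallness relies on the tail decay of ${\bf P}\in\ell^2$ rather than on any operator bound for $\mathcal A-\mathcal N$. Everything else — linearity, the symbol/Toeplitz dictionary, and the block reduction to $n=1$ — is bookkeeping already carried out in the preceding results.
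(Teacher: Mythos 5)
Your plan correctly identifies the main obstruction (the unboundedness of $\mathcal{A}-\mathcal{N}$, which forbids estimating the residual by applying the Lyapunov operator to $\mathcal{P}-\tilde{\mathcal{P}}_m$) and correctly reduces the residual to Hankel-type correction terms supported at high frequencies. However, you misidentify these correction terms, and as a consequence the bound you propose is not valid.

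Plugging the finitely supported symbol $\tilde P_m(z)$ into the full symbol Lyapunov equation, the residual is supported at indices $|k|>m$ and, via relation \eqref{z1} of Proposition~\ref{product3}, is formed by $E_m^{+}=2\mathcal{H}_{(\infty,m)}(A^+)J_m\,{\bf\tilde P}_m$ and $E_m^{-}=2J_\infty\mathcal{H}_{(\infty,m)}(A^-)\,{\bf\tilde P}_m$, i.e.\ Hankel operators acting on the \emph{entire} truncated phasor vector ${\bf\tilde P}_m$. These are not the same objects as the terms $-2\mathcal{H}_{(m,\infty)}(A^\pm){\bf P}|_m^\mp$ used in the proof of Theorem~\ref{approx_P_n}: those come from relation \eqref{z2}, are located inside the indices $|k|\le m$, and act on the \emph{tails} of ${\bf P}$. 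Because you conflate the two, your claimed bound $2\|A\|_{L^\infty}(\|{\bf P}|_m^-\|_{\ell^2}+\|{\bf P}|_m^+\|_{\ell^2})$ does not apply. The direct estimate of the actual Hankel terms by Theorem~\ref{borne}.(3) gives only $\|E_m^\pm\|_{\ell^2}\le 2\|A\|_{L^\infty}\|{\bf\tilde P}_m\|_{\ell^2}$, and $\|{\bf\tilde P}_m\|_{\ell^2}\to\|{\bf P}\|_{\ell^2}\ne 0$, so this does \emph{not} vanish; replacing ${\bf\tilde P}_m$ by ${\bf P}|_m$ inside the Hankel blocks, as you suggest, costs only $O(\epsilon')$ but leaves the dominant term $\mathcal{H}_{(\infty,m)}(A^-){\bf P}|_m$, still of size $O(\|A\|_{L^\infty}\|{\bf P}\|_{\ell^2})$.

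The missing idea is the column-splitting argument used in the paper. Write $J_\infty\mathcal{H}_{(\infty,m)}(A^-)=[M_1\ \ M_2]$ with the split at column $\tilde m:=m/2$, and correspondingly ${\bf\tilde P}_m=({\bf\tilde P}_1,{\bf\tilde P}_2)$. The block $M_1$ has operator norm bounded by $\|A\|_{L^\infty}$ (Theorem~\ref{borne}), but it multiplies only the far phasors ${\bf\tilde P}_1$ (indices in $[-m,-m/2]$), whose $\ell^2$-norm vanishes because the phasors of $P$ decay. The block $M_2$ multiplies the bulk ${\bf\tilde P}_2$ which does not shrink, but $M_2$ is the Hankel matrix of the shifted symbol $A^-_{\tilde m}(z)=\sum_{k\ge 1}A_{-\tilde m-k}z^{-k}$, and $\|\mathcal{H}(A^-_{\tilde m})\|_{\ell^2}\to 0$ as $m\to\infty$ because the phasors of $A$ decay. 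The two halves therefore go to zero for \emph{two different reasons} — decay of $P$'s phasors and decay of $A$'s phasors — and neither coincides with the single mechanism (tail decay of ${\bf P}$) you invoke. Without this double-decay split the argument does not close, and this is where a genuine gap remains in your proposal.
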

\begin{proof}It is sufficient to provide the proof for $n:=1$. If we evaluate the symbol equation with $\tilde P_m(z)$, by construction of $\tilde P_m(z)$, the result is given by 
	$A(z)' \tilde P_m(z)+\tilde P_m(z)A(z)+(\mathbbm{1}_n\otimes N(z))\cdot \tilde P_m(z)+Q(z)=E_m(z)$
	where $E_m(z):=A(z)'\tilde P_m(z)-(A(z)'\tilde P_m(z))|_m+\tilde P_m(z)A(z)-(\tilde P_m(z)A(z))|_m$
	with $(A(z)\tilde P_m(z))|_m$ the $m-$truncation of the product $(A(z)\tilde P_m(z))$.
	When $n:=1$, as $A(z)=A'(z)$ and as the coefficients of $A(z)$ are complex scalar numbers, $E(z)$ reduces to 
	$$E_m(z)=2(A(z)\tilde P_m(z)-(A(z)\tilde P_m(z))|_m.$$
	Thus, the non-zero coefficients of $E_m(z)$ are of degree $k$ for $|k|>m$,
	and are given by the following equation (see Equation \eqref{z1}, Proposition \ref{product3}): 
	\begin{align*}
		E^-_m:&=2J_\infty\mathcal{H}_{(\infty,m)}(A^-){\bf \tilde P}_m\\
		E_m^+:&=2\mathcal{H}_{(\infty,m)}(A^+)J_m{\bf \tilde P}_m.\end{align*}
	Consider $\tilde m:=\frac{m}{2}$ (assuming $m$ is an even number) and split $J_\infty\mathcal{H}_{(\infty,m)}(A^-)$ as follow :
	$$J_\infty\mathcal{H}_{(\infty,m)}(A^-)= [M_{1} \ M_2]$$ where $M_{1}$ corresponds to the first $\tilde m$ columns of $J_\infty\mathcal{H}_{(\infty,m)}(A^-)$ and 
	$M_2$ to its complement. 
	Then, it follows that:
	$$\|E^-_m\|_{\ell^2}\leq2\|M_{1}{\bf\tilde P_{1}}\|_{\ell^2}+2\|M_{2}{\bf\tilde P_{2}}\|_{\ell^2}$$ where ${\bf \tilde P}_m=({\bf \tilde P_{1}}, {\bf \tilde P_{2}})$.
	With this partition, it can be observed that $$M_{2}=\left(\begin{array}{cccc}
		\vdots& \ddots& &\vdots\\
		\vdots & \ddots& \ddots & \vdots\\
		A_{-\tilde m-2} & \ddots & \ddots &A_{-2m-2}\\
		A_{-\tilde m-1} & A_{-\tilde m-2} & \cdots& A_{-2m-1}\end{array}\right).$$
	Therefore, the $\ell^2$ norm satisfies:
	$$\|M_2\|_{\ell^2}\leq \|\mathcal{H}(A^-_{\tilde m})\|_{\ell^2}$$
	where $A^-_{\tilde m}(z)$ is the $\tilde m$-shifted symbol $A^-_{\tilde m}(z):=\sum_{k=1}^{+\infty} A_{-\tilde m-k}z^{-k}$.
	
	Using Theorem~\ref{borne}, the following bounds can be established:
	\begin{align*}
		\|M_{1}{\bf\tilde P_{1}}\|_{\ell^2}&\leq 2\|A\|_{L^\infty}\|{\bf \tilde P_{1}}\|_{\ell^2}\\
		\|M_{2}{\bf\tilde P_{2}}\|_{\ell^2}&\leq 2\|\mathcal{H}(A^-_{\tilde m})\|_{\ell^2}\|{\bf \tilde P_{2}}\|_{\ell^2}.
	\end{align*} 
	Since $\|{\bf \tilde P}_m-{\bf P}\|_{\ell^2}\rightarrow 0$ when $m\rightarrow +\infty$ and since the phasors ${\bf P}_k$ of ${\bf P}$ vanishe when $k\rightarrow+\infty$, 
	it follows that $\|{\bf \tilde P_{1}}\|_{\ell^2}=(\sum_{r=-m}^{-m/2}|{\bf \tilde P}_{r}|^2)^{\frac{1}{2}} \rightarrow 0$ when $m\rightarrow +\infty$.
	On the other hand, we have $\|\mathcal{H}(A^-_{\tilde m})\|_{\ell^2} \rightarrow 0$ when $\tilde m\rightarrow +\infty$ since $\|A^-_{\tilde m}(z)\|_{\ell^2}\rightarrow 0$.
	Therefore, for a given $\epsilon>0$, there exists $m_0$ so that for $m\geq m_0$, $\|E^-_m\|_{\ell^2}\leq \epsilon/2$. With similar steps, this is also the case for $\|E^+_m\|_{\ell^2}\leq \epsilon/2$ and the conclusion follows.
	
\end{proof}

We illustrate the results of this section using a $1-$dimensional example where the $T-$periodic state matrix is given by $A(t)=-1-\cos(t)+2\sin(t)+\cos(2t)$ 
($T=2\pi$) so that the associated symbol $A(z)$ is given by 
$$A(z)=2z^{-2}+(-1+2j)z^{-1}-1+(-1-2j)z+2z^{2}$$ and $\mathcal{A}_m$ is banded.
Having fixed $Q_m=Id_m$, if we attempt to solve the truncated harmonic Lyapunov equation (see Theorem~\ref{finite_hle_m1}), the Toeplicity of the obtained solution $P_m$ is clearly defective as shown in Fig.~\ref{fig2} by evaluating $$\log_{10} |P_m(i,j)-P_m(i+1,j+1)|,$$ $i,j:=1,\cdots, 2m$. It can be observed that this defect is mainly located in the upper leftmost corner and in the lower rightmost corner, when $m$ is chosen sufficiently large.

\begin{figure}[h]\begin{center}
		\includegraphics[width=\linewidth,height=6cm]{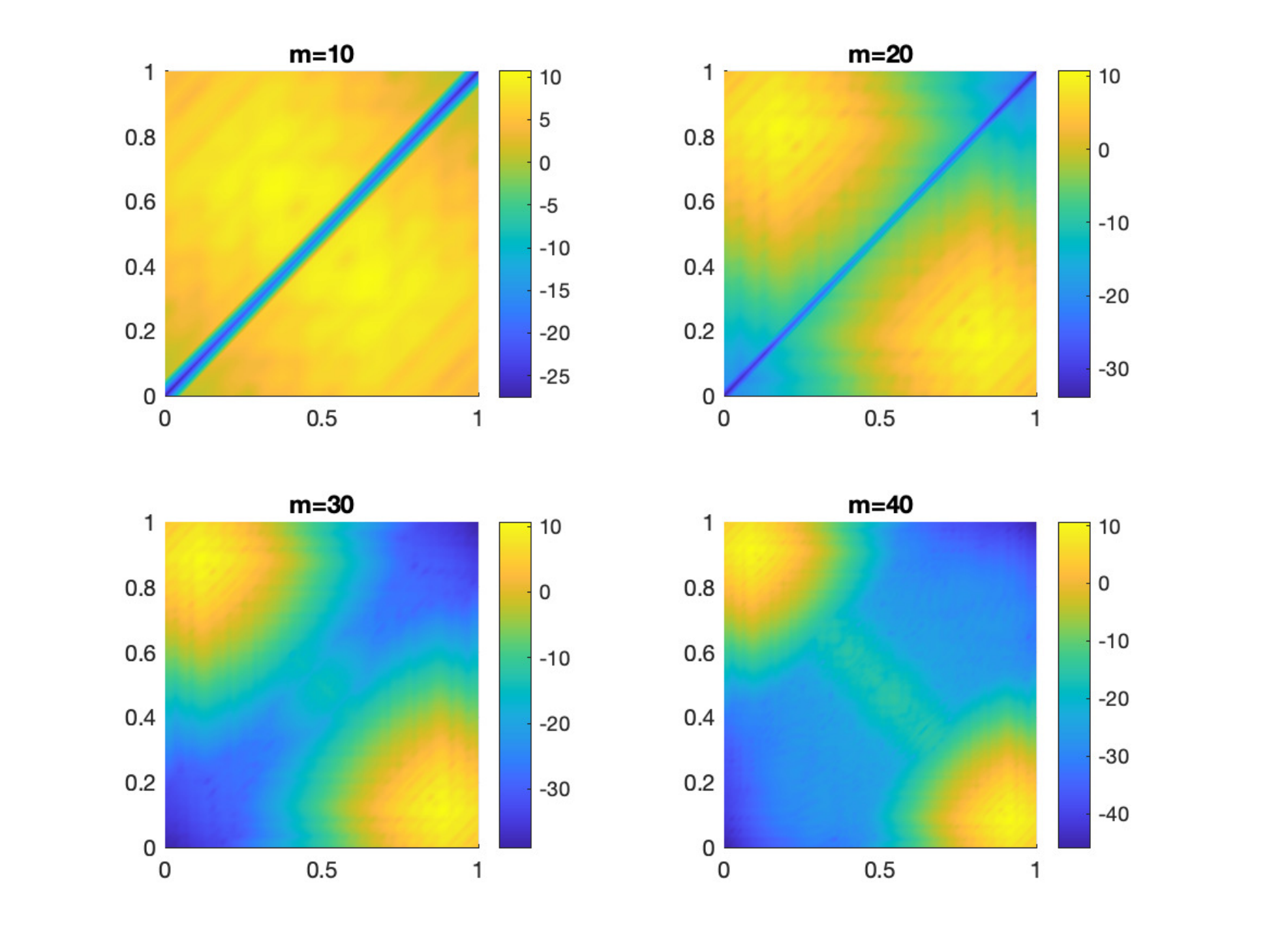}
		\caption{Plot of the "toeplicity" defect, $\log_{10} |P_m(i,j)-P_m(i+1,j+1)| $, $i,j:=1,\cdots ,2m$ of the Lyapunov solution $P_m$ w.r.t. the truncation order $m$. The axes are normalized.}\label{fig2}
	\end{center}
\end{figure}
We illustrate Corollary~\ref{accurate_m} on Fig.~\ref{fig4}. As expected, Fig.~\ref{fig4} shows that the absolute error produced by $\tilde P_m$ w.r.t. $m$ in the evaluation of the symbol Lyapunov equation decreases when $m$ increases. 
\begin{figure}[h]\begin{center}
		\includegraphics[width=\linewidth]{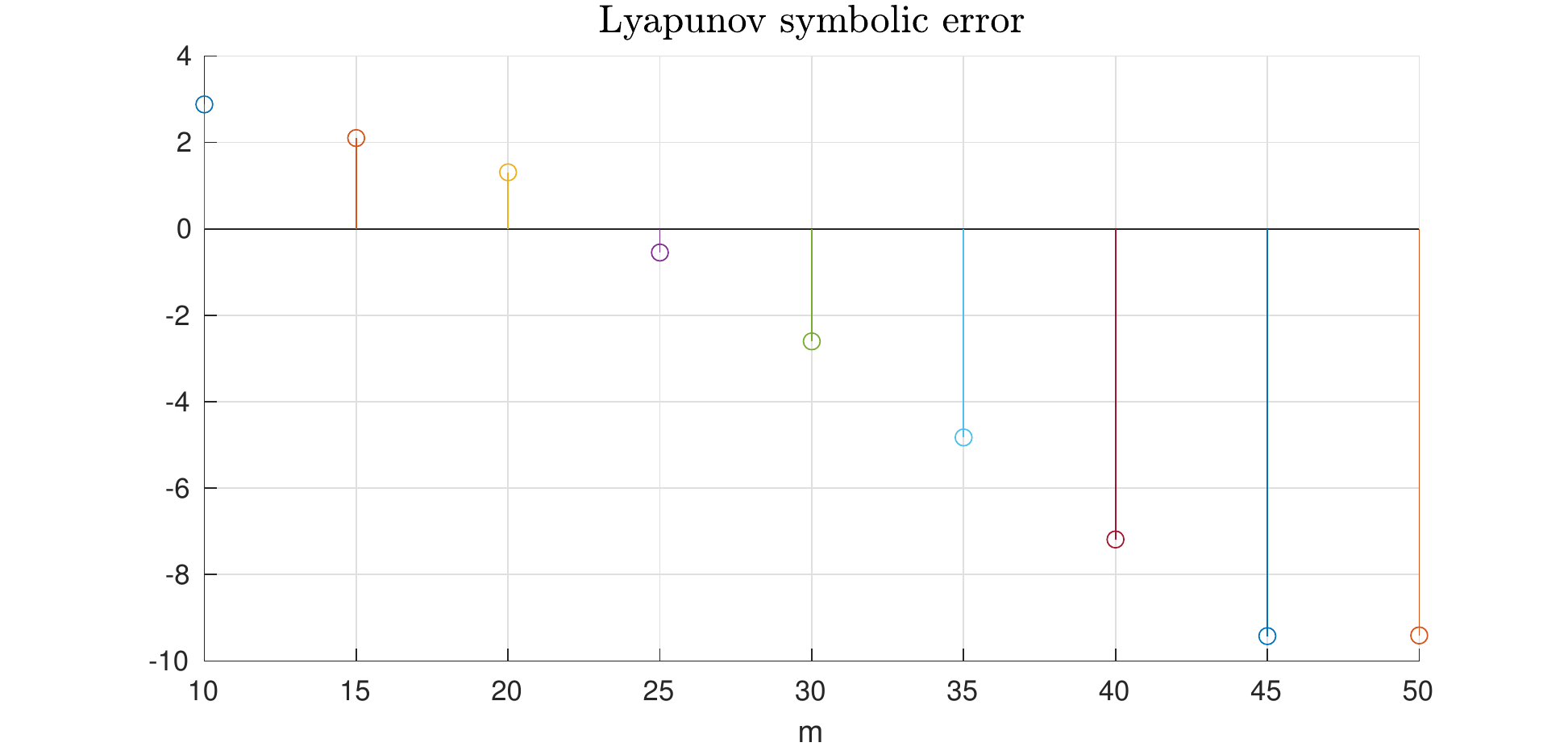}
		\caption{$\log_{10}\|2A(z)\tilde P_m(z)+N(z)\cdot \tilde P_m(z)+Q(z)\|$ w.r.t $m$}\label{fig4}
	\end{center}
\end{figure}
\begin{figure}[h]\begin{center}
		\includegraphics[width=\linewidth,height=3cm ]{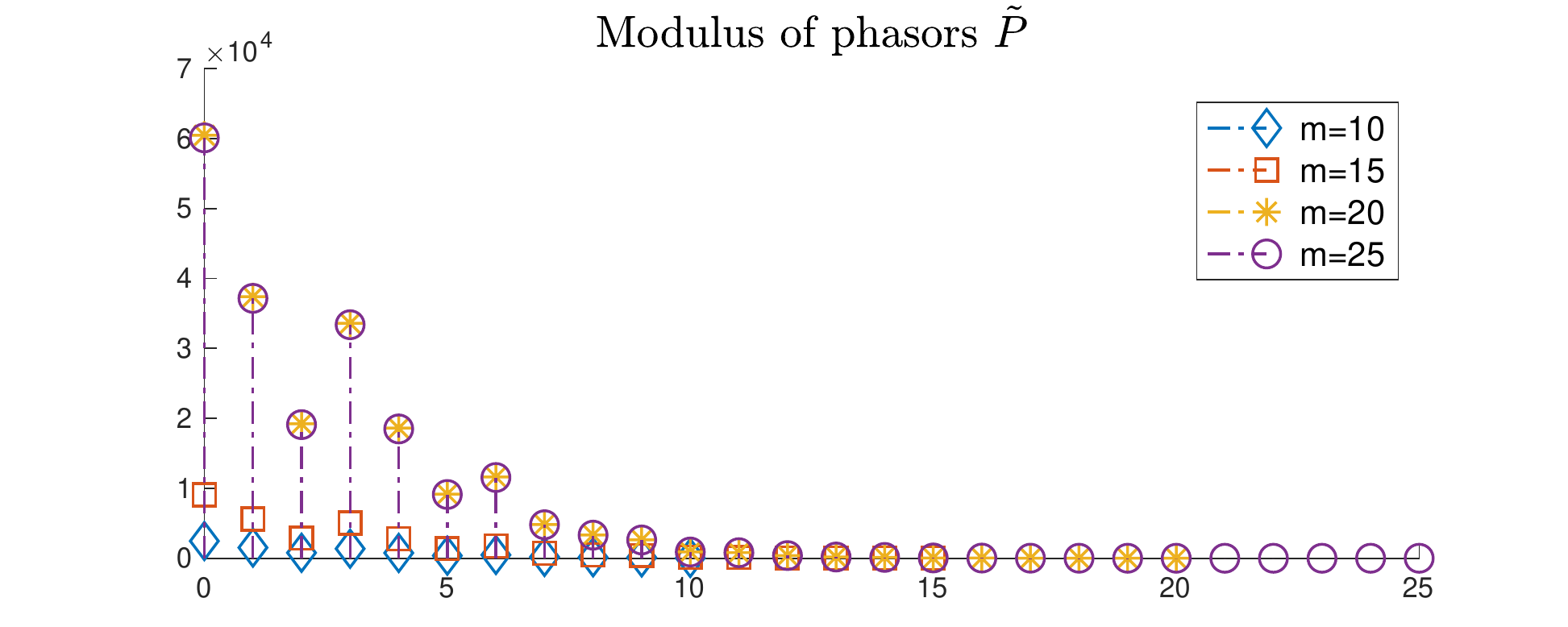}
		\caption{Phasors Modulus of $\tilde P$ w.r.t $m$}\label{fig44}
	\end{center}
\end{figure}
Fig.~\ref{fig44} shows the phasors modulus of $\tilde P$ obtained by \eqref{inf_sol_2} as a function of $m$. An accurate solution is clearly obtained when the modulus of the upper phasors vanishes i.e. here for $m\geq20$.

Finally, on Fig.~\ref{fig5} we plot $\log_{10} |\Delta P_m(i,j)|$, $i,j:= 1,\cdots,2m+1$ of the correction term $\Delta P_m$ (see Theorem~\ref{finite_hle_m1}). We observe that the support of $\Delta P_m$ is mainly located in the upper leftmost corner and in the lower rightmost corner, when $m$ is chosen sufficiently large. 
\begin{figure}[h]\begin{center}
		\includegraphics[width=\linewidth]{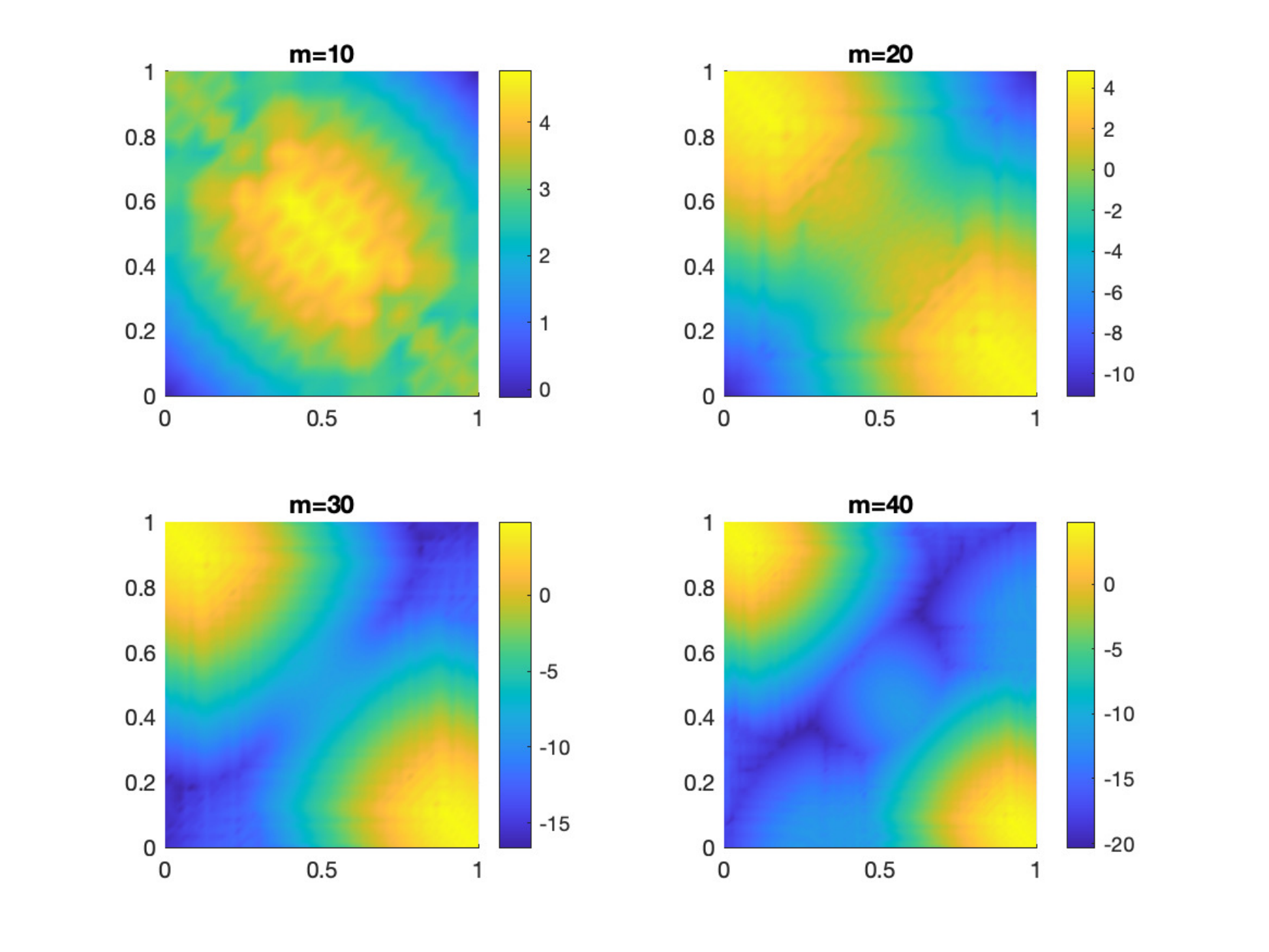}
		\caption{Plot of $\log_{10} |\Delta P_m(i,j)|$, $i,j:=1,\cdots,2m$ w.r.t. the $m-$truncation. The axes are normalized.}\label{fig5}
	\end{center}
\end{figure}

\section{Solving Harmonic Riccati Equations}
Here, we combine the proposed algorithm for solving harmonic Lyapunov equations and the Kleinman algorithm \cite{Kleinman} to solve harmonic Riccati equations. Recall that the Kleinman algorithm is a Newton based algorithm that allows to determine recursively the unique positive definite solution of a standard algebraic Riccati equation.
%
%

 Consider $T$-periodic symmetric positive definite matrix functions $R$ and $Q$ of $L^{\infty}$ class. Under the assumption that there exists $\eta>0$ such that the set $\{t: |det(R(t))|<\eta\}$ is of zero measure, it is proved in \cite{Blin} that $P$ is the unique $T$-periodic symmetric positive definite solution 
of the periodic Riccati differential equation:
\begin{align*}\dot P(t)&+A'(t)P(t)+P(t)A(t)\\&-P(t)B(t)R^{-1}(t)B(t)'P(t)+Q(t)=0,\end{align*}
if and only if the matrix $\mathcal{P}=\mathcal{T}(P)$ is the unique hermitian and positive definite solution of the algebraic Riccati equation:
\begin{equation}(\mathcal{A}-\mathcal{N})^*\mathcal{P}+\mathcal{P}(\mathcal{A}-\mathcal{N})-\mathcal{P}\mathcal{B}\mathcal{R}^{-1}\mathcal{B}^*\mathcal{P}+\mathcal{Q}=0,\label{are}\end{equation}
where $\mathcal{Q}=\mathcal{T}(Q)$ is hermitian positive definite. Moreover, $\mathcal{P}$ is a bounded operator on $\ell^2$.

Before generalizing the Kleinman algorithm to harmonic Riccati equations, we introduce the symbol Riccati equation and provide a link between a solution of a harmonic Riccati equation and a solution of the associated harmonic Lyapunov equation. 
\begin{proposition}$\mathcal{P}$ satisfies Equation \eqref{are} if and only if $P(z)$ satisfies the symbol Riccati equation
	\begin{align}A'(z)P(z)&+P(z)A(z)+(\mathbbm{1}_{n\times n}\otimes N(z))\cdot P(z)\nonumber\\
		&-P(z){B}(z)R^{-1}(z){B}'(z)P(z)+Q(z)=0 \label{sym_are}\end{align}
	where the operator $P(z)$ is bounded on $\ell^2$.
\end{proposition}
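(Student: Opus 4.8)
The plan is to follow the proof of Proposition~\ref{symLyap}, rewriting \eqref{are} term by term as an identity between $n\times n$ block Toeplitz operators and then reading off the corresponding identity between their symbols; the only new ingredient compared with the Lyapunov case is the quadratic term $\mathcal{P}\mathcal{B}\mathcal{R}^{-1}\mathcal{B}^*\mathcal{P}$. Writing $\mathcal{P}=\mathcal{T}(P)$ with symbol $P(z)$, the computation in Proposition~\ref{symLyap} applies verbatim to the remaining terms: $(\mathcal{A}-\mathcal{N})^*\mathcal{P}+\mathcal{P}(\mathcal{A}-\mathcal{N})+\mathcal{Q}$ is block Toeplitz with symbol $A'(z)P(z)+P(z)A(z)+(\mathbbm{1}_{n\times n}\otimes N(z))\cdot P(z)+Q(z)$, using that $A(t)$ is real so that the symbol of $\mathcal{A}^*$ is the transpose symbol $A'(z)$, that $\mathcal{N}\mathcal{P}-\mathcal{P}\mathcal{N}$ has symbol $(\mathbbm{1}_{n\times n}\otimes N(z))\cdot P(z)$, and that the symbol of $\mathcal{Q}$ is $Q(z)$.

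The heart of the proof is to show that $\mathcal{P}\mathcal{B}\mathcal{R}^{-1}\mathcal{B}^*\mathcal{P}$ is again block Toeplitz, with symbol $P(z)B(z)R^{-1}(z)B'(z)P(z)$. First I would observe that, since $R(t)$ is symmetric positive definite and $\{t:\,|\det R(t)|<\eta\}$ has zero measure, the function $R^{-1}(\cdot)$ belongs to $L^\infty$; then the doubly infinite (block) Toeplitz product theorem of Theorem~\ref{product} gives $\mathcal{R}\,\mathcal{T}(R^{-1})=\mathcal{T}(R R^{-1})=\mathcal{I}$, so that $\mathcal{R}^{-1}=\mathcal{T}(R^{-1})$ is bounded on $\ell^2$. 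Likewise, because $B(t)$ is real valued, $\mathcal{B}^*=\mathcal{T}(B)^*=\mathcal{T}(B')$, the block Toeplitz operator with transpose symbol $B'(z)$. Since all of $P,B,R^{-1},B',P$ lie in $L^\infty$, every intermediate product does as well, and iterating Theorem~\ref{product} yields $\mathcal{P}\mathcal{B}\mathcal{R}^{-1}\mathcal{B}^*\mathcal{P}=\mathcal{T}(P\,B\,R^{-1}\,B'\,P)$, whose symbol is exactly $P(z)B(z)R^{-1}(z)B'(z)P(z)$ by the symbol multiplication rule.

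Collecting the contributions, \eqref{are} becomes the block Toeplitz identity $\mathcal{T}(\Xi)=0$, where $\Xi(z)$ denotes the left-hand side of \eqref{sym_are}. Because the block Toeplitz transformation $\mathcal{T}$ is injective --- a $T$-periodic $L^2$ matrix function is recovered a.e. from the phasors that constitute the entries of its Toeplitz matrix --- this identity holds if and only if $\Xi(z)=0$, i.e. if and only if $P(z)$ solves \eqref{sym_are}. Finally, the assertion that the operator $P(z)$ is bounded on $\ell^2$ means $\|\mathcal{T}(P)\|_{\ell^2}<\infty$, which by Theorem~\ref{borne} is equivalent to $P\in L^\infty$; this is provided by the equivalence result recalled just above the proposition, which states that $\mathcal{P}$ is a bounded operator on $\ell^2$.

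The step I expect to be the main obstacle is the careful bookkeeping in the quadratic term: proving $\mathcal{R}^{-1}=\mathcal{T}(R^{-1})$ (which uses both the measure hypothesis on $R$ and the \emph{doubly infinite} Toeplitz product theorem rather than its truncated counterpart), matching $\mathcal{B}^*$ with $\mathcal{T}(B')$, and checking that the fivefold product of block Toeplitz operators remains block Toeplitz with the anticipated symbol --- all while keeping in mind that $\mathcal{N}$, and hence the full operator in \eqref{are}, is unbounded, so that the whole computation has to be understood on the appropriate (dense) domain, exactly as in \cite{Blin}.
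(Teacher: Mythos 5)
Your proof is correct and follows the same route the paper intends: the paper's own proof is only the one-line remark that "the result is obtained using similar steps to those in the proof of Proposition~\ref{symLyap}," and your argument is precisely the fleshed-out version of that, with the only genuinely new content being the treatment of the quadratic term $\mathcal{P}\mathcal{B}\mathcal{R}^{-1}\mathcal{B}^*\mathcal{P}$, which you handle correctly via the doubly-infinite Toeplitz product identity $\mathcal{T}(a)\mathcal{T}(b)=\mathcal{T}(ab)$ (the classical fact cited inside the proof of Theorem~\ref{product}, rather than Theorem~\ref{product} itself, whose displayed identities are for the semi-infinite and truncated cases), the identification $\mathcal{R}^{-1}=\mathcal{T}(R^{-1})$ from the measure hypothesis, and $\mathcal{B}^*=\mathcal{T}(B')$ for real $B$. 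Your closing remarks on injectivity of $\mathcal{T}$, boundedness via Theorem~\ref{borne}, and the domain issue caused by the unbounded $\mathcal{N}$ are all accurate and in the spirit of the paper.
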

\begin{proof}The result is obtained using similar steps to those in the proof of Proposition~\ref{symLyap}.
\end{proof}

\begin{theorem}\label{link} Consider the harmonic Riccati equation \eqref{are} with $A(t)\in L^\infty([0 \ T])$. Let $\mathcal{K}:=-\mathcal{R}^{-1}\mathcal{B}^*\mathcal{P}$, $\mathcal{Y}:=\mathcal{K}^*\mathcal{RK}$ and ${\bf Y}:=\mathcal{F}(Y)$ with $Y:=\mathcal{T}^{-1}(\mathcal{Y})$. If $\mathcal{P}:=\mathcal{T}(P)$ is solution of \eqref{are} then ${\bf P}:=\mathcal{F}(P)$ satisfies:
	\begin{equation*}col({\bf P})=-\mathcal{M}^{-1}col({\bf Q+Y})\end{equation*}
	where $\mathcal{M}:=Id_n \otimes (\mathcal{A-BK}-\mathcal{ N})^*+ Id_n\circ (\mathcal{A-BK})^*$
	with $\mathcal{N}$ defined by \eqref{N} and ${\bf Q}:=\mathcal{F}(Q)$.
\end{theorem}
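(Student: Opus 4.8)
The plan is to show that, after a completion of squares, the harmonic Riccati equation \eqref{are} is a harmonic Lyapunov equation of the form \eqref{al} for the closed-loop operator $\mathcal{A}-\mathcal{B}\mathcal{K}$, and then to read ${\bf P}$ off Theorem~\ref{sylvester_approx}.

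First I would complete the square in \eqref{are}. Since $\mathcal{K}=-\mathcal{R}^{-1}\mathcal{B}^*\mathcal{P}$ and $\mathcal{P}$, $\mathcal{R}$ are Hermitian, one has $\mathcal{Y}=\mathcal{K}^*\mathcal{R}\mathcal{K}=\mathcal{P}\mathcal{B}\mathcal{R}^{-1}\mathcal{B}^*\mathcal{P}$; expressing the quadratic term $\mathcal{P}\mathcal{B}\mathcal{R}^{-1}\mathcal{B}^*\mathcal{P}$ through $\mathcal{B}\mathcal{K}$ and $\mathcal{Y}$ and regrouping the first-order terms shows that \eqref{are} is equivalent to the harmonic Lyapunov equation
\begin{equation*}(\mathcal{A}-\mathcal{B}\mathcal{K}-\mathcal{N})^*\mathcal{P}+\mathcal{P}(\mathcal{A}-\mathcal{B}\mathcal{K}-\mathcal{N})+(\mathcal{Q}+\mathcal{Y})=0,\end{equation*}
whose forcing term $\mathcal{Q}+\mathcal{Y}$ is again Hermitian and positive definite ($\mathcal{Q}$ positive definite, $\mathcal{Y}$ positive semidefinite).

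Next I would check that this is a genuine instance of \eqref{al}. Under the standing assumptions ($A,B\in L^\infty$, $R\in L^\infty$ with $|\det R|$ bounded below a.e.) one has $R^{-1}\in L^\infty$, and since the product of infinite block Toeplitz operators is block Toeplitz with symbol equal to the product of the symbols (see the proof of Theorem~\ref{product}), $\mathcal{R}^{-1}=\mathcal{T}(R^{-1})$ and $\mathcal{K}$, $\mathcal{B}\mathcal{K}$, hence $\mathcal{A}-\mathcal{B}\mathcal{K}$, are block Toeplitz, equal to $\mathcal{T}(A_c)$ for some $A_c\in L^\infty$; likewise $\mathcal{Y}=\mathcal{K}^*\mathcal{R}\mathcal{K}$ is block Toeplitz, so $Y=\mathcal{T}^{-1}(\mathcal{Y})\in L^\infty$ is well defined and $\mathcal{Q}+\mathcal{Y}=\mathcal{T}(Q+Y)$. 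The Lyapunov equation above is thus \eqref{al} with $\mathcal{A}$, $\mathcal{Q}$ replaced by $\mathcal{T}(A_c)$, $\mathcal{T}(Q+Y)$. To invoke Theorem~\ref{sylvester_approx} it remains to see that $\mathcal{A}-\mathcal{B}\mathcal{K}-\mathcal{N}$ is invertible: were $0$ in its spectrum, by Corollary~\ref{corlll} it would be an eigenvalue with an $\ell^2$ eigenvector $v$, and pairing the Lyapunov equation with $v$ would give $0=-v^*(\mathcal{Q}+\mathcal{Y})v<0$, a contradiction; hence $0\notin\sigma(\mathcal{A}-\mathcal{B}\mathcal{K}-\mathcal{N})$, so $\mathcal{A}-\mathcal{B}\mathcal{K}-\mathcal{N}$ is boundedly invertible (Corollary~\ref{spectre_bounded}) and $\mathcal{P}$ is its unique Lyapunov solution.

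Finally, Theorem~\ref{sylvester_approx} applied to this Lyapunov equation yields
\begin{equation*}col({\bf P})=-\big(Id_n\otimes(\mathcal{A}-\mathcal{B}\mathcal{K}-\mathcal{N})^*+Id_n\circ(\mathcal{A}-\mathcal{B}\mathcal{K})^*\big)^{-1}col(\mathcal{F}(Q+Y)),\end{equation*}
which is the announced identity $col({\bf P})=-\mathcal{M}^{-1}col({\bf Q+Y})$ once one notes $\mathcal{F}(Q+Y)=\mathcal{F}(Q)+\mathcal{F}(Y)={\bf Q}+{\bf Y}$. The completion of squares and the Toeplitz bookkeeping are routine; the only delicate step should be the invertibility of the unbounded closed-loop operator $\mathcal{A}-\mathcal{B}\mathcal{K}-\mathcal{N}$, for which the pure-point-spectrum description of Corollary~\ref{corlll} together with the positivity of $\mathcal{P}$ and $\mathcal{Q}+\mathcal{Y}$ is the crucial ingredient.
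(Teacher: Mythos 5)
Your proof takes essentially the same route as the paper: rewrite \eqref{are} as the closed-loop harmonic Lyapunov equation $(\mathcal{A}-\mathcal{B}\mathcal{K}-\mathcal{N})^*\mathcal{P}+\mathcal{P}(\mathcal{A}-\mathcal{B}\mathcal{K}-\mathcal{N})+\mathcal{Q}+\mathcal{Y}=0$ and then read off $col({\bf P})$ from Theorem~\ref{sylvester_approx}. Your extra verifications (that $\mathcal{A}-\mathcal{B}\mathcal{K}$ and $\mathcal{Y}$ are genuine block Toeplitz operators with $L^\infty$ symbols, and that $\mathcal{A}-\mathcal{B}\mathcal{K}-\mathcal{N}$ is invertible via the point-spectrum description of Corollary~\ref{corlll} together with positivity of $\mathcal{P}$ and $\mathcal{Q}+\mathcal{Y}$) are not spelled out in the paper's terse proof and are a worthwhile addition.

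One caveat: the completion-of-squares step that you call ``routine'' does not actually close with the sign convention $\mathcal{K}:=-\mathcal{R}^{-1}\mathcal{B}^*\mathcal{P}$ as stated in the theorem. With that sign, $\mathcal{B}\mathcal{K}=-\mathcal{B}\mathcal{R}^{-1}\mathcal{B}^*\mathcal{P}$, so
$(\mathcal{A}-\mathcal{B}\mathcal{K}-\mathcal{N})^*\mathcal{P}+\mathcal{P}(\mathcal{A}-\mathcal{B}\mathcal{K}-\mathcal{N})+\mathcal{Q}+\mathcal{Y}
=(\mathcal{A}-\mathcal{N})^*\mathcal{P}+\mathcal{P}(\mathcal{A}-\mathcal{N})+3\mathcal{P}\mathcal{B}\mathcal{R}^{-1}\mathcal{B}^*\mathcal{P}+\mathcal{Q}$,
which is not \eqref{are}. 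The identity holds with $\mathcal{K}:=\mathcal{R}^{-1}\mathcal{B}^*\mathcal{P}$ (as indeed used later in Theorem~\ref{t4}), or equivalently with closed loop $\mathcal{A}+\mathcal{B}\mathcal{K}$. This is a sign typo you inherited from the paper's statement; had you actually carried out the ``routine'' algebra rather than asserting it, you would have caught it. The rest of the argument is sound once the sign is fixed.
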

\begin{proof}
	If $\mathcal{P}:=\mathcal{T}(P)$ is the solution of \eqref{are} then it is the unique solution of the Lyapunov equation:
	\begin{equation}(\mathcal{A}-\mathcal{B}\mathcal{K}-\mathcal{N})^*\mathcal{P}+\mathcal{P}(\mathcal{A}-\mathcal{B}\mathcal{K}-\mathcal{N})+\mathcal{Q}+\mathcal{Y}=0.\label{syl}\end{equation}
	The result follows applying Theorem~\ref{sylvester_approx} to \eqref{syl}.
\end{proof}
In the next Theorem, we provide the algorithm to solve the infinite-dimensional harmonic Riccati equation \eqref{are} up to an arbitrary small error.
\begin{theorem}\label{t4} Assume that $A(t)\in L^\infty([0 \ T])$. For $k:=0,1,2,\cdots$ and for a sufficiently large $m_k:=m(k)$, define ${\bf \tilde S}_{m_k}(k)$ by:
	\begin{equation}col({\bf \tilde S}_{m_k}(k)):=-\mathcal{M}(k)^{-1}col({\bf Q}|_{m_k}+{\bf Y}(k)|_{m_k})\label{e2}\end{equation} 
	the $m_k-$truncated unique solution of the algebraic Lyapunov equation :
	\begin{equation}(\mathcal{A}(k)-\mathcal{N})^*\mathcal{S}(k)+\mathcal{S}(k)(\mathcal{A}(k)-\mathcal{N})+\mathcal{Y}(k)+\mathcal{Q}=0\label{syl2}\end{equation}
with ${\bf Y}(k):=\mathcal{F}(Y(k))$, $Y(k):=\mathcal{T}^{-1}(\mathcal{Y}(k))$ and $\mathcal{Y}(k)$ defined by
	\begin{align*}
		\mathcal{Y}(k)&:=\mathcal{K}(k)^*\mathcal{R}\mathcal{K}(k)\end{align*}
	and where $\mathcal{M}(k):=Id_n \otimes (\mathcal{A}_{m_k}(k)-\mathcal{ N}_{m_k})^*+Id_n\circ \mathcal{A}_{m_k}^*(k)$, $\mathcal{A}(k)$, its $m_k-$truncation $\mathcal{A}_{m_k}({k})$ and $\mathcal{Y}(k)$ are determined recursively by the symbols:
	\begin{align*}
		K_k(z)&:=R^{-1}(z)B(z)' \tilde S_{k-1}(z)\\
		A_k(z)&:=A(z)-B(z)K_k(z)\\
		Y_k(z)&:=K_k(z)'R(z)K_k(z)
	\end{align*}
	in which $\tilde S_{k-1}(z)$ denotes the symbol associated to	${\bf\tilde S}_{{m_{k-1}}}(k-1)$. Moreover, $K_0(z)$ is chosen such that the matrix $\mathcal{A}{(0)}-\mathcal{ N}:=\mathcal{A-B}\mathcal{K}_0-\mathcal{ N}$ is Hurwitz.\\
	Then, for $\epsilon>0$ sufficiently small, if $m_k$ is chosen sufficiently large at each step, we have:
	\begin{enumerate}
		\item $\|\mathcal{S}(k)-\tilde{\mathcal{S}}_{m_k}(k))\|_{\ell^2}<\epsilon$ where $\mathcal{S}(k)$ is the exact positive definite solution of \eqref{syl2} and $\tilde {\mathcal{S}}_{m_k}(k):=\mathcal{T}(\mathcal{F}^{-1}({\bf  \tilde S}_{m_k}(k)))$ with ${\bf  \tilde S}_{m_k}(k)$ given by \eqref{e2},
		\item $\mathcal{P}\leq \mathcal{S}(k+1)\leq \mathcal{S}(k)\leq\cdots\leq \mathcal{S}(0)$ where $\mathcal{P}$ solves \eqref{are}
		\item $\tilde{ \mathcal{S}}_{m_k}(k)>0$, for any $k=0,1,\cdots$, 
		\item $\lim_{k\rightarrow +\infty}\tilde {\mathcal{S}}_{m_k}({k}):=\tilde {\mathcal{S}}_{\bar m}$ with $\bar m=\lim_{k\rightarrow +\infty}m_k<+\infty$,
		\item $\|\mathcal{S}_{\infty}-\tilde {\mathcal{S}}_{\bar m}\|_{\ell^2}\leq \epsilon$
		where $\mathcal{S}_{\infty}:=\lim_{k\rightarrow +\infty}{\mathcal{S}}(k)$
		satisfies \eqref{are} with an error in $\ell^2$ norm given by
		\begin{equation}\|(\mathcal{A}-\mathcal{N})^*\mathcal{S}_{\infty}+\mathcal{S}_{\infty}(\mathcal{A}-\mathcal{N})-\mathcal{S}_{\infty}\mathcal{B}\mathcal{R}^{-1}\mathcal{B}^*\mathcal{S}_{\infty}+\mathcal{Q}\|_{\ell^2}\leq \eta \epsilon^2\label{areap}\end{equation}
		and $\eta:=\|{B}(t)\mathcal{R}^{-1}(t){B}'(t)\|_{L^\infty}$.\end{enumerate}
\end{theorem}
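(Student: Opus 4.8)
The approach is to read the recursion as an \emph{inexact Kleinman--Newton iteration} in the harmonic domain and to transplant the classical monotone-convergence argument \cite{Kleinman}, using Theorem~\ref{approx_P_n} to keep the per-step truncation error below $\epsilon$. Write $\mathcal{R}_{ic}(\mathcal{X}):=(\mathcal{A}-\mathcal{N})^*\mathcal{X}+\mathcal{X}(\mathcal{A}-\mathcal{N})-\mathcal{X}\mathcal{B}\mathcal{R}^{-1}\mathcal{B}^*\mathcal{X}+\mathcal{Q}$ for the left-hand side of \eqref{are}, and set $\mathcal{K}(k):=\mathcal{T}(K_k)=\mathcal{R}^{-1}\mathcal{B}^*\tilde{\mathcal{S}}_{m_{k-1}}(k-1)$, $\mathcal{A}(k)=\mathcal{A}-\mathcal{B}\mathcal{K}(k)$. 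The algebraic backbone is the completion-of-squares identity, valid for any bounded hermitian $\mathcal{X}$ and any gain $\mathcal{K}$,
\begin{equation*}
\mathcal{R}_{ic}(\mathcal{X})=(\mathcal{A}-\mathcal{B}\mathcal{K}-\mathcal{N})^*\mathcal{X}+\mathcal{X}(\mathcal{A}-\mathcal{B}\mathcal{K}-\mathcal{N})+\mathcal{Q}+\mathcal{K}^*\mathcal{R}\mathcal{K}-(\mathcal{K}-\mathcal{R}^{-1}\mathcal{B}^*\mathcal{X})^*\mathcal{R}(\mathcal{K}-\mathcal{R}^{-1}\mathcal{B}^*\mathcal{X}).
\end{equation*}
Evaluating it at $\mathcal{X}=\mathcal{S}(k)$, $\mathcal{K}=\mathcal{K}(k)$ and using that $\mathcal{S}(k)$ solves \eqref{syl2} with $\mathcal{Y}(k)=\mathcal{K}(k)^*\mathcal{R}\mathcal{K}(k)$ yields the key sign condition $\mathcal{R}_{ic}(\mathcal{S}(k))=-(\mathcal{K}(k)-\mathcal{R}^{-1}\mathcal{B}^*\mathcal{S}(k))^*\mathcal{R}(\mathcal{K}(k)-\mathcal{R}^{-1}\mathcal{B}^*\mathcal{S}(k))\le 0$ for all $k$.

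I would then argue by induction on $k$ with hypothesis ``$\mathcal{A}(k)-\mathcal{N}$ is Hurwitz'' (true at $k=0$ by the choice of $K_0$). Given the hypothesis: (a) by the periodic/harmonic equivalence of \cite{Blin}, \eqref{syl2} has a unique bounded hermitian Toeplitz solution $\mathcal{S}(k)=\mathcal{T}(S(k))>0$ (its integral representation has integrand $\ge\mathcal{Q}\ge q\,\mathcal{I}$ with $q:=1/\|Q^{-1}\|_{L^\infty}>0$); subtracting the identity evaluated at $\mathcal{X}=\mathcal{P}$, $\mathcal{K}=\mathcal{K}(k)$ from \eqref{syl2} shows that $\mathcal{S}(k)-\mathcal{P}$ solves a Lyapunov equation for $\mathcal{A}(k)-\mathcal{N}$ with negative semidefinite right-hand side, hence $\mathcal{S}(k)\ge\mathcal{P}$, and since $\mathcal{P}$ is the stabilizing solution one has a $k$-independent $\delta>0$ with $\mathcal{P}\ge\delta\,\mathcal{I}$, so $\mathcal{S}(k)\ge\delta\,\mathcal{I}$ (the lower chain of part~2). (b) Equation~\eqref{syl2} is of the form \eqref{al}: its state operator $\mathcal{A}(k)$ has an $L^\infty$ symbol, $\mathcal{A}(k)-\mathcal{N}$ is invertible (being Hurwitz), and its cost is $\mathcal{Q}+\mathcal{Y}(k)$; Theorem~\ref{approx_P_n} therefore provides $m_k$ with $\|\mathcal{S}(k)-\tilde{\mathcal{S}}_{m_k}(k)\|_{\ell^2}<\epsilon$ (part~1), whence $\tilde{\mathcal{S}}_{m_k}(k)\ge(\delta-\epsilon)\mathcal{I}>0$ once $\epsilon<\delta$ (part~3). (c) Evaluating the identity at $\mathcal{X}=\mathcal{S}(k)$, $\mathcal{K}=\mathcal{K}(k+1)=\mathcal{R}^{-1}\mathcal{B}^*\tilde{\mathcal{S}}_{m_k}(k)$, the correction $\Delta_k:=(\mathcal{K}(k+1)-\mathcal{R}^{-1}\mathcal{B}^*\mathcal{S}(k))^*\mathcal{R}(\mathcal{K}(k+1)-\mathcal{R}^{-1}\mathcal{B}^*\mathcal{S}(k))=(\tilde{\mathcal{S}}_{m_k}(k)-\mathcal{S}(k))\mathcal{B}\mathcal{R}^{-1}\mathcal{B}^*(\tilde{\mathcal{S}}_{m_k}(k)-\mathcal{S}(k))$ satisfies $0\le\Delta_k\le\eta\epsilon^2\,\mathcal{I}$ with $\eta=\|B\mathcal{R}^{-1}B'\|_{L^\infty}=\|\mathcal{B}\mathcal{R}^{-1}\mathcal{B}^*\|_{\ell^2}$ (Theorem~\ref{borne}), whence
\begin{equation*}
(\mathcal{A}(k+1)-\mathcal{N})^*\mathcal{S}(k)+\mathcal{S}(k)(\mathcal{A}(k+1)-\mathcal{N})=\mathcal{R}_{ic}(\mathcal{S}(k))-\mathcal{Q}-\mathcal{K}(k+1)^*\mathcal{R}\mathcal{K}(k+1)+\Delta_k\le-(q-\eta\epsilon^2)\,\mathcal{I}<0
\end{equation*}
as soon as $\eta\epsilon^2<q$; so $\mathcal{S}(k)>0$ is a strict Lyapunov certificate and $\mathcal{A}(k+1)-\mathcal{N}$ is Hurwitz (by the Lyapunov characterisation of exponential stability of the harmonic operator, via Theorem~\ref{diago} and \cite{Blin}). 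Subtracting \eqref{syl2} for $\mathcal{S}(k+1)$ from this relation gives $(\mathcal{A}(k+1)-\mathcal{N})^*(\mathcal{S}(k)-\mathcal{S}(k+1))+(\mathcal{S}(k)-\mathcal{S}(k+1))(\mathcal{A}(k+1)-\mathcal{N})=\mathcal{R}_{ic}(\mathcal{S}(k))+\Delta_k$, a Lyapunov equation whose right-hand side is negative semidefinite for $\epsilon$ small (the nonpositive term $\mathcal{R}_{ic}(\mathcal{S}(k))$ dominates the $O(\epsilon^2)$ term $\Delta_k$ until the iteration settles into an $O(\epsilon)$-neighbourhood of its fixed point), hence $\mathcal{S}(k+1)\le\mathcal{S}(k)$, completing part~2.

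Since $\{\mathcal{S}(k)\}$ is monotone non-increasing and bounded below by $\mathcal{P}\ge\delta\,\mathcal{I}$, it converges strongly to some $\mathcal{S}_{\infty}$ with $\mathcal{P}\le\mathcal{S}_{\infty}\le\mathcal{S}(0)$, and the closed-loop spectra $\sigma(\mathcal{A}(k)-\mathcal{N})$ stay uniformly away from the imaginary axis; consequently the constants governing Theorem~\ref{approx_P_n} --- the bound on $\mathcal{M}(k)^{-1}$ and the phasor tails $\|{\bf S}(k)|_m^{\pm}\|_{\ell^2}$ --- are uniform in $k$, so the truncation order can be taken bounded and non-decreasing, $\bar m:=\sup_k m_k<+\infty$, after which \eqref{e2} is a genuine finite-dimensional recursion whose data depend continuously on the previous iterate; a contraction/continuity argument then gives $\tilde{\mathcal{S}}_{m_k}(k)\to\tilde{\mathcal{S}}_{\bar m}$ (part~4). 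Part~1 in the limit yields $\|\mathcal{S}_{\infty}-\tilde{\mathcal{S}}_{\bar m}\|_{\ell^2}\le\epsilon$, and passing to the limit in $\mathcal{R}_{ic}(\mathcal{S}(k))=-(\mathcal{R}^{-1}\mathcal{B}^*(\tilde{\mathcal{S}}_{m_{k-1}}(k-1)-\mathcal{S}(k)))^*\mathcal{R}(\mathcal{R}^{-1}\mathcal{B}^*(\tilde{\mathcal{S}}_{m_{k-1}}(k-1)-\mathcal{S}(k)))$ gives $\mathcal{R}_{ic}(\mathcal{S}_{\infty})=-(\mathcal{S}_{\infty}-\tilde{\mathcal{S}}_{\bar m})\mathcal{B}\mathcal{R}^{-1}\mathcal{B}^*(\mathcal{S}_{\infty}-\tilde{\mathcal{S}}_{\bar m})$, of $\ell^2$-norm $\le\|\mathcal{B}\mathcal{R}^{-1}\mathcal{B}^*\|_{\ell^2}\,\epsilon^2=\eta\epsilon^2$ by Theorem~\ref{borne}, which is exactly \eqref{areap} (part~5). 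The main obstacle is the book-keeping of the inexactness: one must show that the $O(\epsilon^2)$ perturbations caused by the truncated Lyapunov solves destroy neither the Hurwitz property propagated by the iteration --- this is where the uniform positivity $\mathcal{Q}\ge q\,\mathcal{I}$ and the smallness condition $\eta\epsilon^2<q$ enter decisively --- nor the monotonicity, and that the required truncation orders $m_k$ do not diverge, which rests on the uniform decay of the phasor tails of $\mathcal{S}(k)$ along the convergent sequence.
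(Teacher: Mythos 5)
Your proposal follows the same route as the paper: treat the scheme as an inexact Kleinman--Newton iteration in the harmonic domain, invoke Theorem~\ref{approx_P_n} at each step to bound the per-step truncation error, propagate the Hurwitz property and the sandwich $\mathcal{P}\le\mathcal{S}(k+1)\le\mathcal{S}(k)$, pass to a monotone limit, and read off the residual bound \eqref{areap} from the quadratic error term. The paper's own proof is actually less explicit than yours at the two delicate points: it writes the exact Kleinman integral formulas
\begin{align*}
\mathcal{S}(k)-\mathcal{S}(k+1)&=\int_0^{\infty}e^{\mathcal{A}^*(k+1)\tau}(\mathcal{K}_k-\mathcal{K}_{k+1})^*\mathcal{R}(\mathcal{K}_k-\mathcal{K}_{k+1})e^{\mathcal{A}(k+1)\tau}\,d\tau,\\
\mathcal{S}(k+1)-\mathcal{P}&=\int_0^{\infty}e^{\mathcal{A}^*(k+1)\tau}(\mathcal{K}_{k+1}-\mathcal{K})^*\mathcal{R}(\mathcal{K}_{k+1}-\mathcal{K})e^{\mathcal{A}(k+1)\tau}\,d\tau,
\end{align*}
without remarking that the first identity is only valid when $\mathcal{K}_{k+1}=\mathcal{R}^{-1}\mathcal{B}^*\mathcal{S}(k)$ exactly, whereas here $\mathcal{K}_{k+1}=\mathcal{R}^{-1}\mathcal{B}^*\tilde{\mathcal{S}}_{m_k}(k)$ introduces an $O(\epsilon)$ cross term. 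Your version makes this perturbation explicit as $\Delta_k$ and correctly closes the Hurwitz-propagation step via a Lyapunov-inequality certificate $\mathcal{Q}\ge q\mathcal{I}$, $\eta\epsilon^2<q$. However, your monotonicity argument in part~2 is a genuine gap, and it is the same gap the paper leaves open: the right-hand side $\mathcal{R}_{ic}(\mathcal{S}(k))+\Delta_k$ has sign controlled by the difference of two quadratic forms, $-(\tilde{\mathcal{S}}_{m_{k-1}}(k-1)-\mathcal{S}(k))\mathcal{B}\mathcal{R}^{-1}\mathcal{B}^*(\cdot)+ (\tilde{\mathcal{S}}_{m_{k}}(k)-\mathcal{S}(k))\mathcal{B}\mathcal{R}^{-1}\mathcal{B}^*(\cdot)$, which is not semidefinite in general; once the Newton step size shrinks to the truncation-error floor, $\mathcal{S}(k+1)\le\mathcal{S}(k)$ does not follow from this computation, and ``the nonpositive term dominates until the iteration settles'' is not a proof. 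The same caveat applies to the finiteness of $\bar m$, which both you and the paper justify by a uniformity-of-constants appeal rather than an explicit bound. Apart from those two shared thin spots --- which, to be fair, you flag more honestly than the paper does --- the two arguments are essentially identical.
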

\begin{proof}
	We use Theorem~\ref{approx_P_n} in this proof as the related assumptions satisfied. For a given $\epsilon>0$ and for $k:=0$, we have from \eqref{e2}, $\tilde {\mathcal{S}}_{m_0}(0):=\mathcal{T}(\mathcal{F}^{-1}({\bf \tilde S}_{m_0}(0)))$ which differs from the exact solution $\mathcal{S}(0)$ of \eqref{syl2} by $$\|\mathcal{S}(0)-\tilde {\mathcal{S}}_{m_0}(0)\|_{\ell^2}\leq \epsilon$$ provided that $m_0$ is a sufficiently large number. 
	Thus, as $\mathcal{S}(0)$ is positive definite, so is $\tilde {\mathcal{S}}_{m_0}(0)$ provided that $\epsilon$ is small enough. 
	
	Recall that if $\mathcal{A-B}\mathcal{K-N}$ is Hurwitz, the solution of \eqref{syl2} is provided by $$\mathcal{S}_\mathcal{K}:=\int_0^{+\infty} {e^{(\mathcal{A-BK-N})^* \tau }}(\mathcal{Q}+\mathcal{K}^*\mathcal{R}\mathcal{K})e^{(\mathcal{A-BK-N})\tau}d\tau.$$
	
	Set $k:=1$ and consider $\mathcal{S}_{1}$ the bounded operator on $\ell^2$ solution of \eqref{syl2} obtained with $\mathcal{K}_1:=\mathcal{R}^{-1}\mathcal{B}^*\tilde {\mathcal{S}}_{m_0}(0)$ and $\mathcal{A}(1):=\mathcal{A-B}\mathcal{K}_1-\mathcal{N}$.
	Note that $\mathcal{S}_{1}$ is well defined since $\mathcal{A-B}\mathcal{K}_1$ and $\mathcal{Y}(1)+\mathcal{Q}$ are bounded operators on $\ell^2$ (or equivalently $\mathcal{T}^{-1}(\mathcal{A-B}\mathcal{K}_1)$ and $\mathcal{T}^{-1}(\mathcal{Y}(1)+\mathcal{Q})$ are $L^\infty$).
	Using similar steps as in the proof of \cite{Kleinman}, it can be established that:
	$$\mathcal{S}(0)- \mathcal{S}(1)=\int_0^{+\infty} e^{\mathcal{A}^*(1)\tau}(\mathcal{K}_0-\mathcal{K}_1)^*\mathcal{R}(\mathcal{K}_0-\mathcal{K}_1)e^{\mathcal{A}(1)\tau}d\tau\geq 0
	$$
	$$\mathcal{S}(1)-\mathcal{P}=\int_0^{+\infty} e^{\mathcal{A}^*(1)\tau}(\mathcal{K}_1-\mathcal{K})^*\mathcal{R}(\mathcal{K}_1-\mathcal{K})e^{\mathcal{A}(1)\tau}d\tau\geq 0
	$$
	where $\mathcal{P}$ is the solution of the Riccati equation \eqref{are}. 
	Therefore,
	$$0<\mathcal{P}\leq \mathcal{S}(1)\leq \mathcal{S}(0)$$
	which proves that $\mathcal{A}(1):=\mathcal{A-B}\mathcal{K}_1-\mathcal{N}$ is Hurwitz. 
	Using Theorem~\ref{approx_P_n}, the approximated solution $\tilde {\mathcal{S}}_{m_1}(1):=\mathcal{T}(\mathcal{F}^{-1}({\bf\tilde S}_{m_1}(1)))$ where ${\bf\tilde S}_{m_1}(1)$ is determined by \eqref{e2}, differs from $\mathcal{S}(1)$ by $\|\mathcal{S}(1)-\tilde {\mathcal{S}}_{m_1}(1))\|_{\ell^2}<\epsilon$ provided that $m_1$ is a sufficiently large number. Hence, $\tilde {\mathcal{S}}_{m_1}(1)$ is positive definite provided that $\epsilon$ is small enough.
	
	Repeating, for $k:=2,3,\cdots $ the above arguments, one gets:
	\begin{enumerate}
		\item $\mathcal{P}\leq \mathcal{S}(k+1)\leq \mathcal{S}(k)\leq\cdots\leq \mathcal{S}(0)$
		\item $\|\mathcal{S}(k)-\tilde {\mathcal{S}}_{m_k}(k)\|_{\ell^2}<\epsilon$
		\item for any $k$, $\mathcal{A}(k):=\mathcal{A-B}\mathcal{K}_k-\mathcal{N}$ is Hurwitz
	\end{enumerate}
	
	Recall that for any $k$, $\mathcal{S}(k)$ are bounded operators on $\ell^2$. Using monotonic convergence of positive operators, it follows that $\mathcal{S}_{\infty}:=\lim_{k\rightarrow +\infty}\mathcal{S}(k)$ exists with $\mathcal{S}_{\infty}$ a bounded operator on $\ell^2$ satisfying:
	\begin{equation}(\mathcal{A}-\mathcal{B}\tilde{\mathcal{K}}_{\infty}-\mathcal{N})^*\mathcal{S}_{\infty}+\mathcal{S}_{\infty}(\mathcal{A}-\mathcal{B}\tilde{\mathcal{K}}_{\infty}-\mathcal{N})+\tilde{\mathcal{K}}^*_{\infty}\mathcal{R}\tilde{\mathcal{K}}_{\infty}+\mathcal{Q}=0\label{sinf}\end{equation}
	where $\tilde{\mathcal{K}}_{\infty}:=\mathcal{R}^{-1}\mathcal{B}^*\tilde {\mathcal{S}}_{\bar m}$ with $\tilde {\mathcal{S}}_{\bar m}:=\lim_{k\rightarrow +\infty}\tilde {\mathcal{S}}_{m_k}(k)$ and $\bar m:=\lim_{k\rightarrow +\infty}m_k$.
	Therefore, $\mathcal{S}_{\infty}$ satisfies the following Riccati equation:
	\begin{align*}(\mathcal{A}-\mathcal{N})^*\mathcal{S}_{\infty}+\mathcal{S}_{\infty}(\mathcal{A}-&\mathcal{N})-\mathcal{K}_{\infty}^*\mathcal{R}\mathcal{K}_{\infty}+\mathcal{Q}=\nonumber\\
		&(\mathcal{K}_{\infty}-\tilde {\mathcal{K}}_{\infty})^*\mathcal{R}(\mathcal{K}_{\infty}-\tilde {\mathcal{K}}_{\infty})\end{align*}
	with ${\mathcal{K}}_{\infty}:=\mathcal{R}^{-1}\mathcal{B}^*\mathcal{S}_{\infty}$.
	
	As by construction $\|\mathcal{K}_{\infty}-\tilde {\mathcal{K}}_{\infty}\|_{\ell^2}\leq \zeta \|\mathcal{S}_{\infty}-\tilde {\mathcal{S}}_{\bar m}\|_{\ell^2}\leq \epsilon$ 
	where $\zeta:=\|\mathcal{R}^{-1}\mathcal{B}^*\|_{\ell^2}=\|{R}^{-1}{B}^*\|_{L^\infty}$ (see Theorem~\ref{borne}) and as $\mathcal{S}_{\infty}$ and $\mathcal{K}_{\infty}$ are bounded operators on $\ell^2$, we have necessarily that $\tilde {\mathcal{S}}_{\bar m}$ and $\tilde {\mathcal{K}}_{\infty}$ are also bounded on $\ell^2$. It follows that $\bar m$ is a finite number. 
	Indeed, as $\mathcal{S}_\infty$ solves \eqref{sinf} and as the assumptions of Theorem~\ref{approx_P_n} are satisfied, there exists a finite $m$ such that $\|\mathcal{S}_\infty-\tilde{\mathcal{S}}_m\|_{\ell^2} \leq \epsilon$. Consequently, $\bar m$ is finite.

	Now, taking the $\ell^2$- norm, we get:
	\begin{align*}\|(\mathcal{A}-\mathcal{N})^*\mathcal{S}_{\infty}+\mathcal{S}_{\infty}(\mathcal{A}-\mathcal{N})-&\mathcal{S}_{\infty}\mathcal{B}\mathcal{R}^{-1}\mathcal{B}^*\mathcal{S}_{\infty}+\mathcal{Q}\|_{\ell^2}\nonumber\\
		&\leq \eta\|\tilde {\mathcal{S}}_{\bar m}-\mathcal{S}_{\infty}\|^2_{\ell^2}\end{align*}
	where $\eta$ is such that $\|\mathcal{B}\mathcal{R}^{-1}\mathcal{B}^*\|_{\ell^2}=\|{B(t)}{R(t)}^{-1}{B(t)}^*\|_{L^\infty}<\eta$ (see Theorem~\ref{borne}).
	We have by construction $\|\tilde {\mathcal{S}}_{\bar m}-\mathcal{S}_{\infty}\|_{\ell^2}\leq \epsilon$ and the conclusion follows.
\end{proof}

This theorem shows that the algorithm returns a solution $\tilde {\mathcal{S}}_{\bar m}$ that approximates in $\ell^2$-norm operator sense the solution of the algebraic harmonic Riccati equation \eqref{are} and this approximation is characterized by \eqref{areap}.

\begin{remark} The choice of $m_k$ at each step $k$ must be sufficiently large to guarantee that $\|\mathcal{S}(k)-\tilde {\mathcal{S}}_{m_k}(k)\|_{\ell^2}<\epsilon$. This can be achieved by checking at each step a similar condition to the one provided in Corollary~\ref{accurate_m} using the symbol equation \eqref{sym_are}.
Moreover, the algorithm requires an initial step where the initial gain must be chosen such that $\mathcal{A}_{0}-\mathcal{ N}:=\mathcal{A-B}\mathcal{K}_0-\mathcal{ N}$ is Hurwitz. {This is not a major problem as one can use the pole placement procedure proposed in \cite{cdc2022} to design such a stabilizing $\mathcal{K}_0$.}
\end{remark}
 
\begin{remark}\label{init} Compared to \cite{Zhou2008} where an algorithm based on the iterative solution of the Lyapunov equation is proposed to solve harmonic Riccati equations, the algorithm of Theorem~\ref{t4} is more general as the matrices $A(t)$ and $B(t)$ belongs to $L^\infty$. Moreover, it is  assumed in \cite{Zhou2008} that $\mathcal{A}-\mathcal{N}$ is Hurwitz which is not the case here. Our algorithm applies to unstable harmonic matrices $\mathcal{A}-\mathcal{N}$ and it is also numerically more efficient with a significant reduction of the computational burden due to Theorem~\ref{approx_P_n}.
\end{remark}
\color{black}

\section{Harmonic LQ control design}
Consider a LTP system defined by
\begin{align}
	\dot x=&\left(\begin{array}{cc}a_{11} (t) & a_{12} (t) \\a_{21} (t) & a_{22} (t)\end{array}\right)x+\left(\begin{array}{c}b_{11}(t) \\0\end{array}\right)u\label{ex_ltp}\end{align}
where \begin{align*}a_{11} (t) &:=1+\frac{4}{\pi}\sum_{k=0}^{\infty}\frac{1}{2k+1}\sin(\omega (2k+1)t),\\
	a_{12} (t) &:= 2+\frac{16}{\pi^2}\sum_{k=0}^{\infty}\frac{1}{(2k+1)^2}\cos(\omega (2k+1)t),\\
	a_{21} (t) &:= -1+\frac{2}{\pi}\sum_{k=1}^{\infty}\frac{(-1)^k}{k}\sin(\omega kt+\frac{\pi}{4}),\\
	a_{22} (t) &:= 1-2\sin(2\pi t)-2\sin(6\pi t)+2\cos(6\pi t)+2\cos(10\pi t),\\
	b_{11}(t)&:=1+ 2 \cos(2\omega t)+ 4 \sin(6\omega t) \text{ with }\omega:=2\pi.
\end{align*}
Observe that $a_{11}$, $a_{12}$ and $a_{21}$ are respectively square, triangular and sawtooth signals and include an offset part. The associated Toeplitz matrix has an infinite number of phasors and is not banded. 
Moreover, this LTP system is unstable. The eigenvalues set is characterized by $\Lambda=\{1\pm j 1.64\}$. 
Let $\mathcal{Q}:=\mathcal{T}(100Id_n)$ and $\mathcal{R}:=\mathcal{T}(Id_m)$. We want to solve the associated Harmonic Riccati Equation using a $m-$truncation. We perform the study with $m \in \{ 8,16,32,64\}$. 

When one attempts to solve the $m-$truncated version of \eqref{are}, the Toeplicity defect 
for the solution $P_m$ and the associated gain $K_m$ is shown on Fig. \ref{fig6} and \ref{fig7}. 
\begin{figure}[h]\begin{center}
		\includegraphics[width=\linewidth]{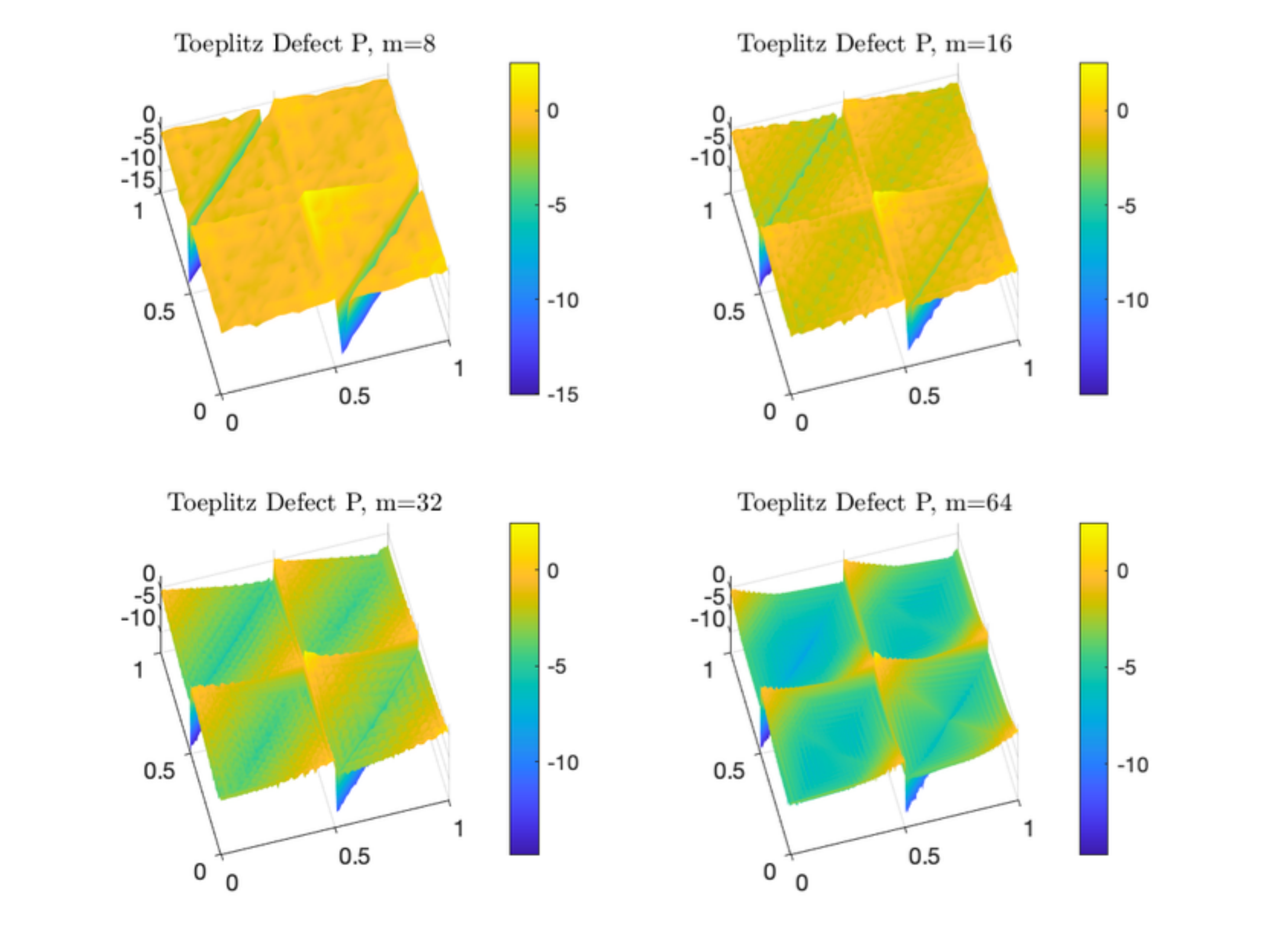}
		\caption{Plot (normalized axes) of the "toeplicity" defect, $\log_{10} |P_m(i,j)-P_m(i+1,j+1)| $, $i,j:=1,\cdots ,2m$ of the Riccati solution $P_m$ w.r.t. the truncation order $m$.}\label{fig6}
	\end{center}
\end{figure}
We observe that, for $m$ sufficiently large, this defect is mainly located at the upper left corner and lower right corner for both $P_m$ and $K_m$. As expected, this defect does not disappear when $m$ is increased. 
\begin{figure}[h]\begin{center}
		\includegraphics[width=\linewidth]{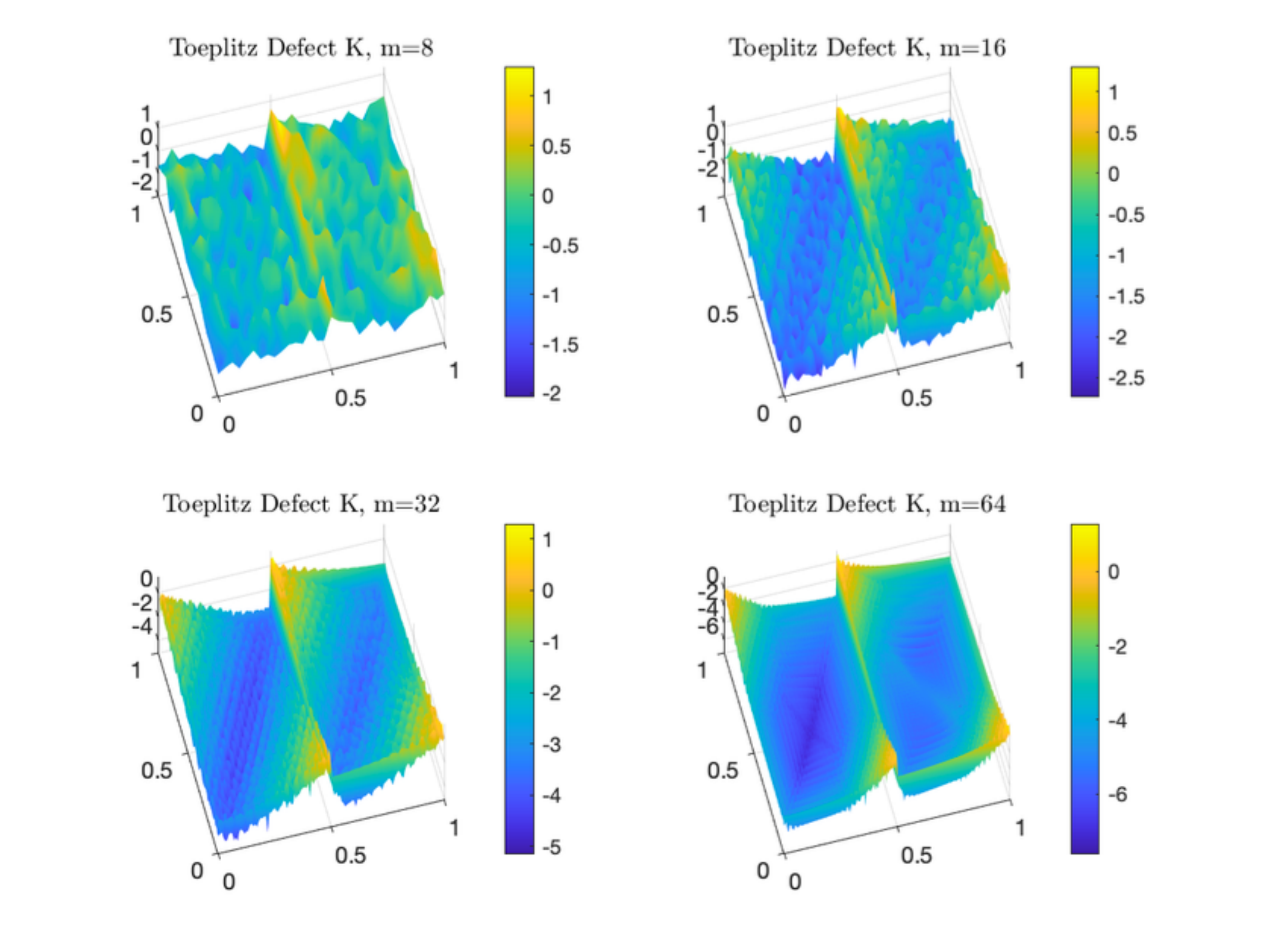}
		\caption{Plot (normalized axes) of the "toeplicity" defect, $\log_{10} |K_m(i,j)-K_m(i+1,j+1)| $, $i,j:=1,\cdots ,2m$ of the gain matrix $K_m=[K_1\ K_2]$ w.r.t. the truncation order $m$. }\label{fig7}
	\end{center}
\end{figure}
Now, we apply our algorithm to obtain an approximation of the infinite-dimensional Toeplitz solution. The correcting term is shown on Fig. \ref{fig8} and \ref{fig9}. When $m$ is chosen sufficiently large, we see that the correction terms both for $P_m$ and $K_m$ are mainly located at upper left and lower right corners of the corresponding $n \times n$ and $m \times n$ blocks.
\begin{figure}[h]\begin{center}
		\includegraphics[width=\linewidth]{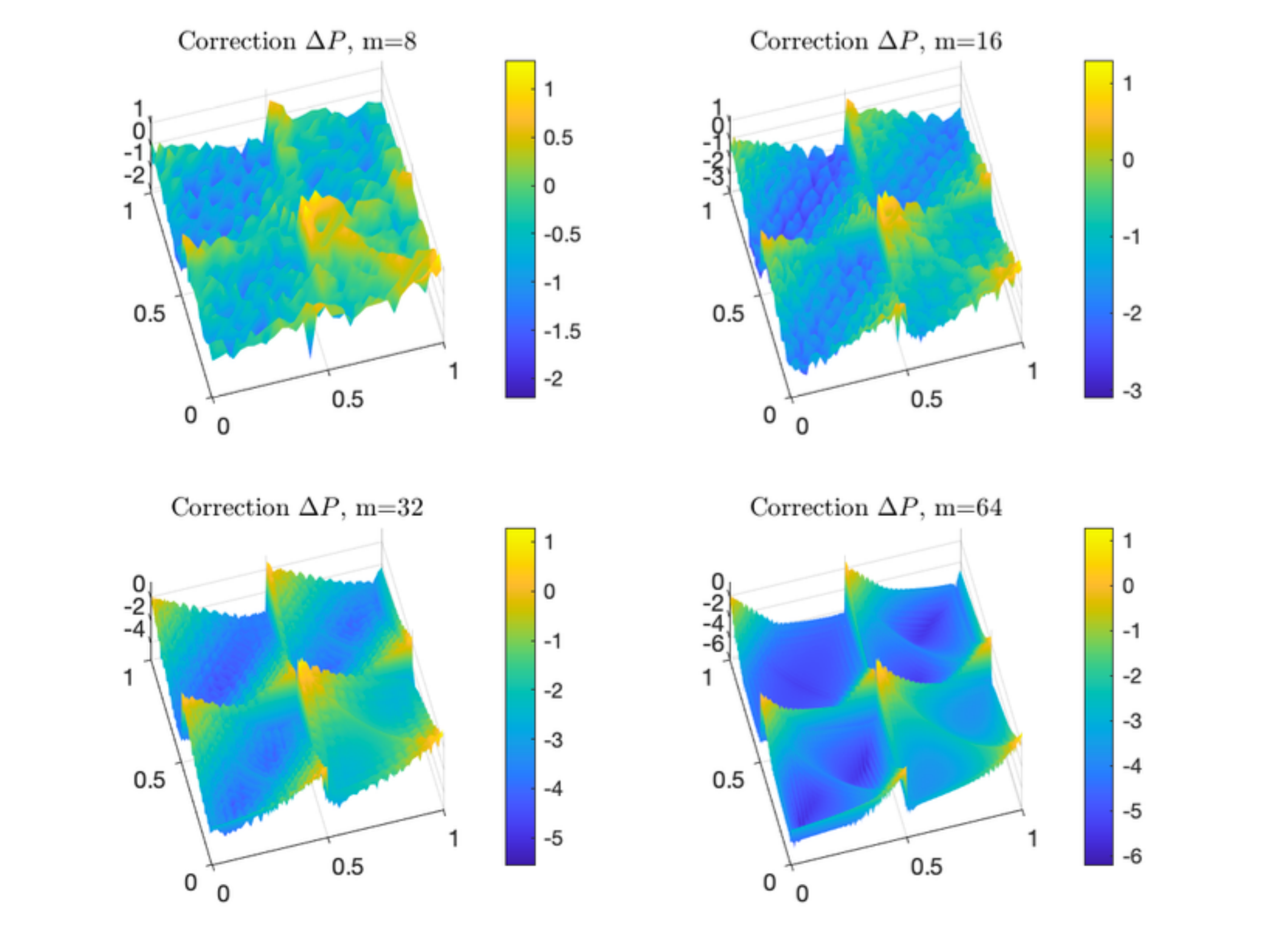}
		\caption{Plot (normalized axes) of $\log_{10} |\Delta P_m(i,j)|$, $i,j:=1,\cdots,2m$ w.r.t. the truncation order $m$.}\label{fig8}
	\end{center}
\end{figure}
\begin{figure}[h]\begin{center}
		\includegraphics[width=\linewidth]{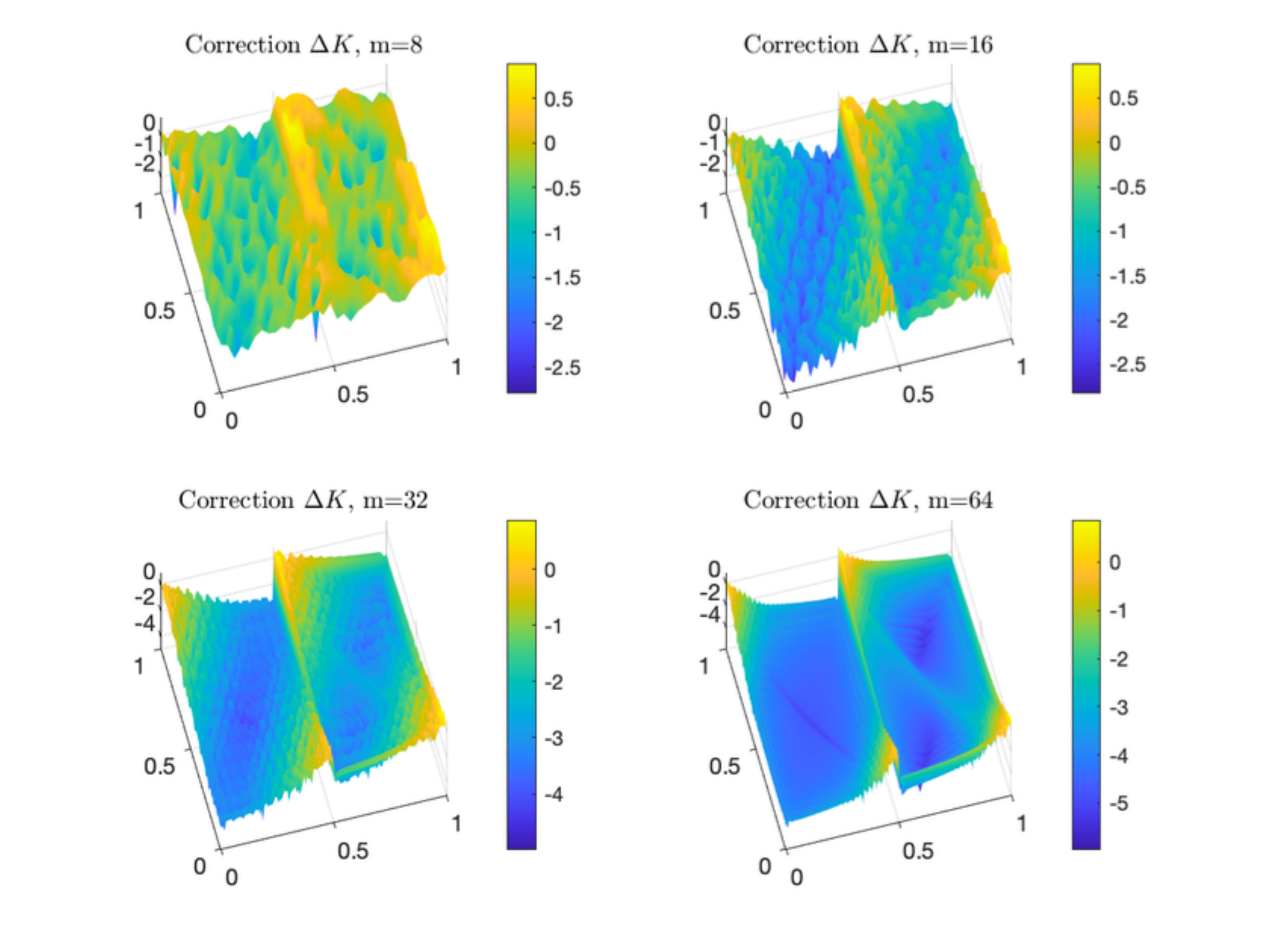}
		\caption{Plot (normalized axes) of $\log_{10} |\Delta K_m(i,j)|$, $i,j:=1,\cdots,2m$ w.r.t. the truncation order $m$.}\label{fig9}
	\end{center}
\end{figure}
\begin{figure}[h]
	\begin{center}
		\includegraphics[width=\linewidth]{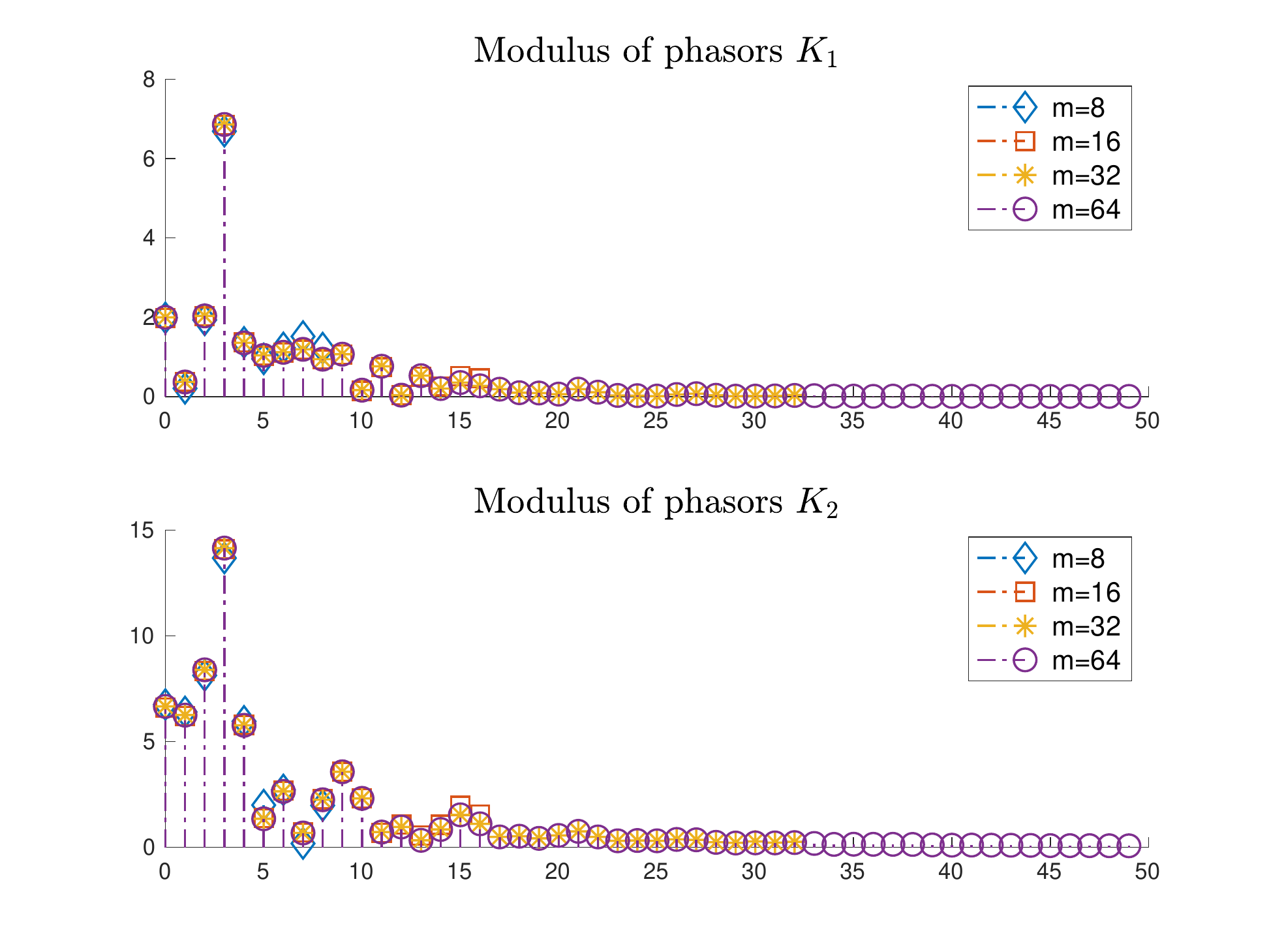}
		\caption{Modulus of Phasors $K=[K_1,K_2]$ w.r.t. $m:=8,16,32,64$ }\label{fig11}
	\end{center}
\end{figure}
\begin{figure}[h]
	\begin{center}
		\includegraphics[width=\linewidth]{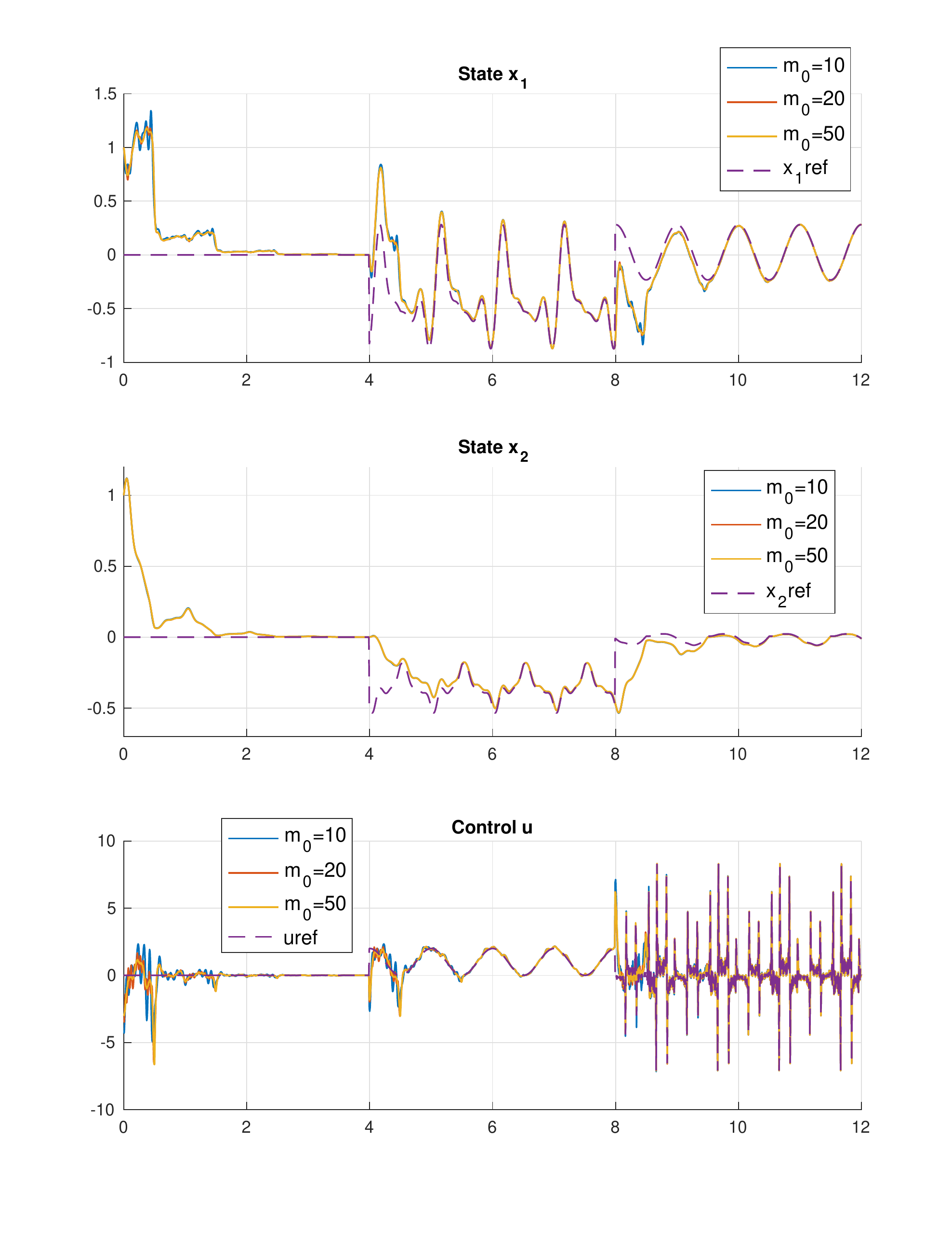}
		\caption{Closed loop response with $u(t):=-K(t)(x(t)-x_{ref}(t))+u_{ref}(t)$ and for $K(t)=\sum_{k=-{m_0}}^{m_0} K_k e^{j\omega kt}$ and $m_0:=10,20,50$. }\label{fig10}
	\end{center}
\end{figure}

 {Looking at the phasors of the harmonic gain matrix $\tilde{\mathcal{K}}_{\infty}:=\mathcal{R}^{-1}\mathcal{B}^*\tilde {\mathcal{S}}_{m}$ computed with a fixed $m_k:=m$ (we do not adapt $m_k$ at each step $k$ as described in Theorem~\ref{t4}) and plotted in Fig.~\ref{fig11} w.r.t. $m$, we see that the obtained values converge  and they vanish from $m\geq32$.}
The significant values are obtained for small values of $m$. To show the effectiveness of the proposed approach, we consider an equilibrium of the harmonic system defined by\begin{align}0=(\mathcal{A}-\mathcal{N})X_{ref}+\mathcal{B}U_{ref}\label{equi}\end{align}
and use (\ref{recos}) to reconstruct the associated $T-$periodic trajectory $x_{ref}=\mathcal{F}^{-1}(X_{ref})$ and control $u_{ref}=\mathcal{F}^{-1}(U_{ref})$. The control $$u(t):=-K(t)(x(t)-x_{ref}(t))+u_{ref}(t)$$ where $K(t)$ is the $T-$periodic gain matrix given by $K(t):=\sum_{k=-m}^m K_k e^{j\omega kt}$, stabilizes globally and asymptotically the unstable LTP system \eqref{ex_ltp} on any $T-$periodic trajectory $x_{ref}(t)=\mathcal{F}^{-1}(X_{ref})$ and $u_{ref}(t)=\mathcal{F}^{-1}(U_{ref})$. To illustrate this, we plot on Fig.~\ref{fig10} the closed loop response for three $T-$periodic reference trajectories. We start by $u_{ref}(t)=0$, for $t<4$, then $u_{ref}=1+\cos(2\pi t)$ for $4\leq t< 8$ and for $t\geq 8$ we consider a desired steady state $X_d$ given by $\mathcal{F}^{-1}(X_d)(t):=(\frac{1}{4}\cos(2\pi t),0)$ and look for the nearest harmonic equilibrium, solution of the minimization problem $\min_{U_{ref}}\|X_d-X_{ref}\|^2$ subject to \eqref{equi}. {Clearly it can be observed that the provided state feedback allows to track any $T-$periodic trajectory corresponding to any equilibrium of \eqref{equi}.} 

From a practical point of view, we see on this example that once a good approximation of the harmonic Riccati equation has been obtained for a sufficiently large $m$, a few number of coefficients $m_0\leq m$ are needed to reconstruct the matrix gain $K(t):=\sum_{k=-m_0}^{m_0} K_k e^{j\omega kt}$. This can be explained by the fact that the phasor gain modules vanish relatively quickly and can be approximated by $K_k\approx \mathcal{O}(\frac{1}{k})$ since $K(t)\in L^\infty$ (see Fig.~\ref{fig11}).

\section{Conclusion}
\color{black}
In this paper, a simple closed form formula for a Floquet factorization in the general case of $L^2$ matrix functions as well as a detailed spectrum characterization of harmonic state space operators and their $m-$truncations are provided. For any $m$, it has been proved that the spectrum of the truncated harmonic matrix may contain a  part that does not converge to the spectrum of the original infinite-dimensional harmonic matrix. From a practical point of view, these results have important consequences on the stability analysis when considering, for example, a $m-$truncation of the harmonic Lyapunov equation. We built upon this analysis efficient solutions to solve infinite-dimensional harmonic Lyapunov and Riccati equations up to an arbitrarily small error. These algorithms allow to recover the infinite-dimensional solution from a sequence of finite dimensional problems and the computational burden is reduced from  $n^2(2m+1)^2$ to $n(2m+1)$ where $n$ is the dimension of the LTP system and $m$ the order of the truncation. These results are important in practice and has been illustrated on the design of a harmonic LQ control with periodic trajectories tracking.

\bibliographystyle{plain}

\begin{IEEEbiography}[{\includegraphics[width=1in,height=1.25in,clip,keepaspectratio]{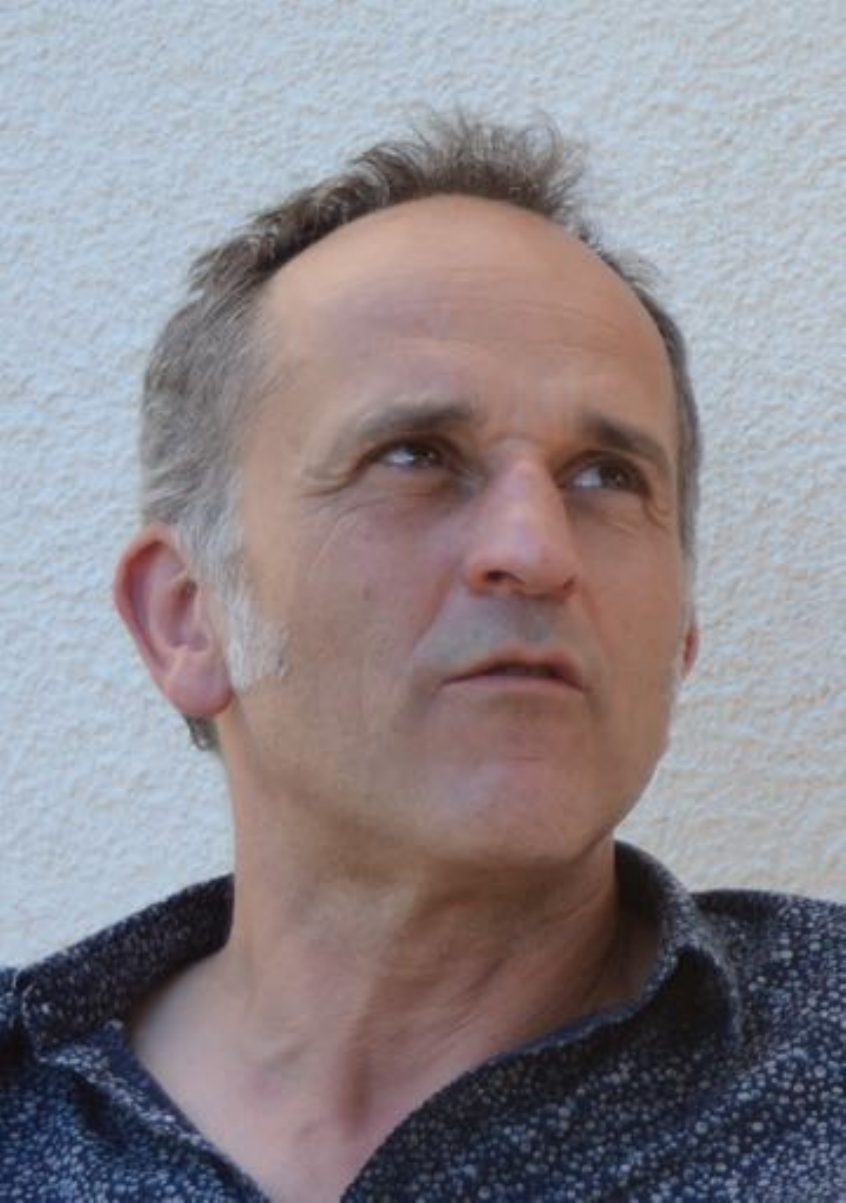}}]
	{Pierre Riedinger} is a Full Professor at the engineering school Ensem and researcher at CRAN - CNRS UMR 7039, Universit\'e de Lorraine (France).
	He received his M.Sc. degree in Applied Mathematics from the University Joseph Fourier, Grenoble in 1993 and the Ph.D. degree in Automatic Control 
	in 1999 from the Institut National Polytechnique de Lorraine (INPL). He got the French Habilitation degree from the INPL in 2010. His current research interests include control theory and optimization of 
	systems with their applications in electrical and power systems.
\end{IEEEbiography}
\begin{IEEEbiography}[{\includegraphics[width=1in,height=1.25in,clip,keepaspectratio]{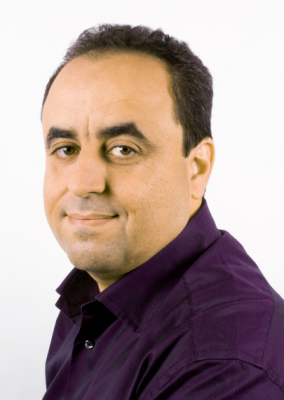}}]
	{Jamal Daafouz} 
	is a Full Professor at University
	de Lorraine (France) and researcher at CRAN-CNRS. In 1994, he received a Ph.D.
	in Automatic Control from INSA Toulouse, in 1997.
	He also received the "Habilitation à Diriger des
	Recherches" from INPL (University de Lorraine),
	Nancy, in 2005.
	His research interests include analysis, observation
	and control of uncertain systems, switched
	systems, hybrid systems, delay and networked systems with a particular
	interest for convex based optimisation methods.
	In 2010, Jamal Daafouz was appointed as a junior member of the
	Institut Universitaire de France (IUF). He served as an associate editor
	of the following journals: Automatica, IEEE Transactions on Automatic
	Control, European Journal of Control and Non linear Analysis and Hybrid
	Systems. He is senior editor of the journal IEEE Control Systems Letters. 
\end{IEEEbiography}
\end{document}